\documentclass[prd,aps,amsfonts,notitlepage,longbibliography,twocolumn,superscriptaddress]{revtex4-1} 

\usepackage{graphicx}
\usepackage{tabularx}
\usepackage{braket}
\usepackage{amssymb}
\usepackage{amsmath}
\usepackage{bm}
\usepackage{tikz}
\usetikzlibrary{quantikz}
\usepackage{extpfeil}
\usepackage[driverfallback=dvipdfm]{hyperref}
\hypersetup{colorlinks=true,breaklinks,urlcolor=blue,linkcolor=blue,citecolor=blue}
\usepackage{etoolbox}

\makeatletter
\AtBeginEnvironment{thebibliography}{%
  \let\addcontentsline\@gobblethree
}
\makeatother

\newcommand{\comment}[1]{}
\usepackage{xcolor}
\definecolor{ginger}{rgb}{0.69, 0.4, 0.0}

\newcommand{\lr}[1]{\left( #1\right)}
\newcommand{\mlr}[1]{\left[ #1\right]}

\newcommand{\alr}[1]{\left\langle #1\right\rangle}
\newcommand{\norm}[1]{\left\lVert#1\right\rVert}
\newcommand{\abs}[1]{\left\lvert#1\right\rvert}

\newcommand{\ii}{\mathrm{i}}
\newcommand{\ee}{\mathrm{e}}
\newcommand{\dd}{\mathrm{d}}
\newcommand{\tr}[1]{\mathrm{tr}\lr{#1}}
\newcommand{\trace}{\mathrm{Tr}}

\newcommand{\where}{\quad {\rm where}\quad}
\newcommand{\order}{\mathrm{O}}

\newcommand{\floor}[1]{\left\lfloor#1\right\rfloor}

\newcommand{\poly}[1]{\mathrm{poly}\lr{#1} }

\newcommand{\OO}{\mathcal{O}}
\newcommand{\cU}{\mathcal{U}}

\newcommand{\expval}[2]{\left\langle{#2}\right\rangle_{#1}}
\newcommand{\mytag}[1]{\tag*{\llap{$\lr{#1}$}}}

\renewcommand{\thesection}{\arabic{section}}

\makeatletter
\renewcommand{\p@subsection}{}
\renewcommand{\p@subsubsection}{}
\makeatother

\usepackage{amsthm}
\newtheorem{thm}{Theorem}
\newtheorem{cor}{Corollary}
\newtheorem{lem}{Lemma}

\makeatletter
\newcommand{\normalfootnote}[1]{%
  \begingroup
    \let\ltx@footmark\ltx@footmark@latex
    \let\ltx@foottext\ltx@foottext@latex
    \footnote{#1}%
  \endgroup
}
\makeatother

\begin{document} 

\title{Digital Quantum Algorithms for Generating and Utilizing Spin Squeezed States}
\author{Mingru Yang}\email{mingruy@uci.edu}
\affiliation{Max-Planck-Institut f{\"u}r Quantenoptik, Hans-Kopfermann-Straße 1, D-85748 Garching, Germany}
\affiliation{Munich Center for Quantum Science and Technology (MCQST), Schellingstr. 4, D-80799 M{\"u}nchen, Germany}
\affiliation{Institute for Theoretical Physics, University of Cologne, Z{\"u}lpicher Straße 77, D-50937 K{\"o}ln, Germany}

\author{Ruby Wei}\email{ruby.wei@colorado.edu}
\affiliation{JILA, NIST and University of Colorado, Boulder, Colorado 80309, USA}
\affiliation{Department of Physics, University of Colorado, Boulder, Colorado 80309, USA}

\author{Chao Yin}\email{chaoyin@stanford.edu}
\affiliation{Department of Physics, Stanford University, Stanford, California 94305, USA}

\date{\today}

\begin{abstract}
Spin squeezed states (SSSs) are conventionally viewed as \emph{analog} resources for quantum metrology. Here we develop algorithms to efficiently generate and exploit SSSs on a \emph{digital} quantum computer. We introduce an adaptive local-circuit protocol that prepares SSSs with squeezing parameter $\xi$ in depth $\mathrm{O}\left(\log{1/\xi}\right)$, yielding an exponential improvement over non-adaptive approaches. We prove that this depth is optimal for permutation symmetric SSSs by deriving a matching lower bound on entanglement entropy. We further leverage the squeezing paradigm to deterministically prepare the Dicke states using asymptotically fewer resources than prior methods. Remarkably, classical simulations of our preparation protocols under realistic noise models predicts a metrological gain of 4.2 dB beyond the standard quantum limit with 55 superconducting qubits in total including ancillae. Our work paves the way for performing quantum metrology on near-term digital quantum devices.
\end{abstract}

\maketitle

\emph{Introduction.}--- Quantum metrology~\cite{PhysRevD.23.1693,PhysRevLett.72.3439,metro_rev04,PhysRevLett.96.010401,metro_QIrev14,metro_rmp17}  is one of the central applications of quantum technologies, which takes advantage of entanglement to achieve sensing precision beyond unentangled probes. For magnetic-field sensing, spin-squeezed states (SSSs)~\cite{squeeze92,squeeze93,MA201189} constitute a particularly important class of resource states. In addition to their potential to approach the Heisenberg limit, SSSs can exhibit optimal robustness against noise in metrology~\cite{squeeze_noise01,squeeze_noise15,sss_noise21,sss_noise24}, in sharp contrast to other Heisenberg-limited states such as the Greenberger-Horne-Zeilinger (GHZ) state~\cite{GHZ89}, whose metrological advantage is extremely fragile to decoherence. Beyond metrology, spin squeezing can also be utilized to accelerate quantum information processing~\cite{PhysRevLett.134.130604,all2all_compute25,all2all_GHZ26}. As an interesting family of many-body states, however, the \emph{entanglement structure and complexity of SSSs} from an information perspective remain partially understood.
Although their enhanced quantum Fisher information (QFI)~\cite{QFI67,QFI69} captures their metrological power, QFI alone cannot distinguish SSSs from other states like GHZ. Strong spin squeezing certifies two-particle entanglement~\cite{sss_entan05,sss_entan06}, but such criteria probe only few-body reduced states.
This urges a systematic understanding of resources needed to prepare SSSs, and what other tasks they can be useful for. 

For preparing SSSs, existing approaches predominantly employ collective analog dynamics that is qubit-permutation symmetric, where the problem reduces to squeezing a semiclassical large spin. One route among them is to evolve a coherent spin state (CSS)~\cite{CSS}, i.e. a product state, under a nonlinear all-to-all Hamiltonian~\cite{squeeze93,Sørensen2001,Quantumphasemagnification,PhysRevA.92.023603,PRXQuantum.4.020314,PhysRevX.11.041045,Luo2025}, while the others couple qubits to global boson modes to engineer collective dissipation~\cite{PhysRevLett.110.120402,adaptive_Clerk25,PhysRevX.12.011015} or perform continuous quantum nondemolition (QND) measurements~\cite{QND,RevModPhys.82.1041} of the collective angular momentum~\cite{PhysRevLett.85.1594,PhysRevLett.104.073604,PhysRevLett.116.093602,Hosten2016,6v93-whwq,PhysRevResearch.6.L032037}. However, these collective schemes rely on high qubit connectivity and are usually restricted to cavity-QED platforms \cite{cavity_rev13}.
Although the Hamiltonian-based approach has been extended to spatially decaying interactions~\cite{squeeze_powerlaw16,PhysRevLett.123.260505,PhysRevLett.125.223401,PhysRevLett.131.150401,Block2024,squeeze_adiab22,squeeze_powerlaw22,squeeze_powerlaw23,squeeze_2ddipo24,Franke2023,Eckner2023,Bornet2023,Miller2024,jqvz-kpwg,htbh-w6rx,wdqt-tpwz,Wu2025,squeeze_disorderdipo25,squeeze_powerlaw_Monika25}, these protocols require precise control of particular interaction ranges, or yield squeezing with scaling weaker than the Heisenberg limit.
Moreover, the above approaches in practice often suffer from noise that is intrinsic to the coupling that generates squeezing, which makes it hard to systematically detect and correct errors in such analog process.

In this Letter, we propose digital quantum algorithms as a complementary route to prepare SSSs, where better programmability and modularity naturally compatible with error correction provide a potential path toward fault-tolerant scalable quantum-enhanced sensing. 
In particular, we introduce an efficient adaptive protocol that prepares SSSs in optimal local circuit depth, which contributes another canonical example of measurement-assisted state preparation~\cite{PhysRevLett.76.722,PhysRevA.53.2046,PhysRevA.54.3824,PhysRevLett.86.5188,PhysRevA.68.022312,PhysRevA.71.062313,adaptive_phase21,adaptive_nonAbel22,adaptive_entanbound22,adaptive_measureSPT24,adaptive_Bravyi22,adaptive_shortroute23,adaptive_Nishimori23,Chen2025,adaptive_AKLT23,adaptive_Quantinuum24,adaptive_IBM24,adaptive_W_Buhrman24,Dicke_circuit24,MPS_log_loglog24,adaptive_tensor_Sahay24,adaptive_tensor_Zhang24,adaptive_MPS_Girvin24,adaptive_MPS_Stephen25,adaptive_1dsingleround25,adaptive_Slater_Bethe25,adaptive_variational_learning25,adaptive_variational25,PRXQuantum.4.020339,adaptive_solvableAnyon25,adaptive_AKLT_highd26,adaptive_Dicke_constdepth26,adaptive_push_defect26}, where intermediate measurements and classical feedforward can substantially reduce the time required to generate highly entangled states.
This circuit-based approach also enables us to uncover new information properties and applications of SSSs. To establish the optimality of our protocol, we prove that permutation-symmetric SSSs necessarily possess asymptotically large bipartite entanglement entropy, revealing a global entanglement property not captured by previous measures.
Furthermore, we use our squeezing circuit to prepare the Dicke states~\cite{dicke} with fewer resources than the previous proposals. Finally, we discuss methods to improve robustness of our protocols under noise and numerically benchmark their performance in near-term quantum hardware.

\emph{Setup.}---
$N$ qubits are in a SSS if the global polarization along some direction is extensive while the polarization variance along the perpendicular direction is suppressed compared to a CSS. More precisely, up to single-qubit rotations we can choose the two directions to be along $X=\sum_i X_i$ and $Z=\sum_i Z_i$, with $X_i,Y_i,Z_i$ Pauli matrices for each qubit, and define a pure state $\psi$ to be a $\xi$-SSS if~\footnote{We use notation $f=\order(g)$ ($f=\Omega(g)$) for $|f|\le c g$ ($|f|\ge c g$) for some constant $c$. We say $c$ is constant if it is independent of $N$ and $\xi$. $f=\Theta(g)$ means $f=\order(g)$ and $f=\Omega(g)$. We will also use $\poly{g}$ for $\order(g^c)$ for some constant $c$.} \begin{subequations}\label{eq:SSS_def}
    \begin{align}
        \expval{\psi}{X}&= \Theta(N), \label{eq:X=N} \\
        \expval{\psi}{\Delta Z^2} &\le N\xi^2. \label{eq:DZ2=xi}
    \end{align}
\end{subequations}
Here $\expval{\psi}{\cdot}$ denotes the expectation value, $\expval{\psi}{\Delta \OO^2}:= \expval{\psi}{\OO^2} -\expval{\psi}{\OO}^2$ is the variance, and $\xi<1$ is the squeezing parameter defined by Kitagawa and Ueda~\cite{squeeze93}. We are mainly interested in the asymptotics in $\xi\rightarrow 0$ for our rigorous results, which will also hold for the Wineland squeezing parameter $\xi_R$~\cite{squeeze92}. Because the $Z$ variance is suppressed by factor $\xi^2$ compared to CSS $\ket{+}^{\otimes N}$, when sensing $Y$-field in unitary $\ee^{\ii \omega Y}$, $\psi$ outperforms CSS in metrology precision $\lr{\delta\omega}_\psi=\order(\xi)\cdot\lr{\delta\omega}_{\rm CSS}$~\cite{squeeze92}. This implies a large $\mathrm{QFI}_\psi=\Omega(N/\xi^2)$ for sensing $Y$ field. By the Heisenberg uncertainty principle, \begin{equation}\label{eq:DeltaY2>}
    \expval{\psi}{\Delta Y^2}=\Omega( N/\xi^{2}),
\end{equation}
which imposes a fundamental limit $\xi=\Omega(N^{-\frac12})$ known as the Heisenberg limit. 
Most often, one is interested in the class of SSSs in the Dicke manifold (DM) of the $N$ qubits, i.e. the $(N+1)$-dimensional permutation-symmetric subspace. A basis for the DM is the set of Dicke states $\ket{D^N_M}$ ($M=0,1,\cdots,N$), which is equal superposition of all bitstrings with $M$ ones and $N-M$ zeros. Note that although the Dicke states have vanishing $Z$-variance, they are not SSSs due to vanishing polarization along perpendicular directions. 

One may first consider preparing SSSs by local unitary circuits. Note that we assume product initial states for all preparation protocols. Since the growth of QFI is bounded by spatial locality, any such circuit that prepares a $\xi$-SSS requires depth $D=\Omega(\xi^{-2/d})$ in $d$ spatial dimensions~\cite{QFI_growth_bound23}. We provide a simple proof for this in Supplementary Material (SM)~\cite{SM}, where we also show that any $\xi$-SSS in the DM with $\xi<1$ requires a depth growing with $N$. An example of SSS in the DM is $\ket{\Phi}=|\phi\rangle^{\otimes N}+|-\phi\rangle^{\otimes N}$ with $|\pm \phi\rangle_i:=\ee^{\mp\frac{\ii}{2}\phi Z_i}\ket{+}_i$ separated by angle $2\phi\sim1/\sqrt{N}$ on the Bloch sphere, as shown in Fig.~\ref{fig:binary}. One can verify that it is an $\xi$-SSS with $\xi=\Theta(1)<1$~\cite{SM}. Although locally indistinguishable from a CSS in the thermodynamic limit, $\ket{\Phi}$ is actually long-range entangled~\cite{SM}.

\begin{figure}
    \centering
    \includegraphics[width=\linewidth]{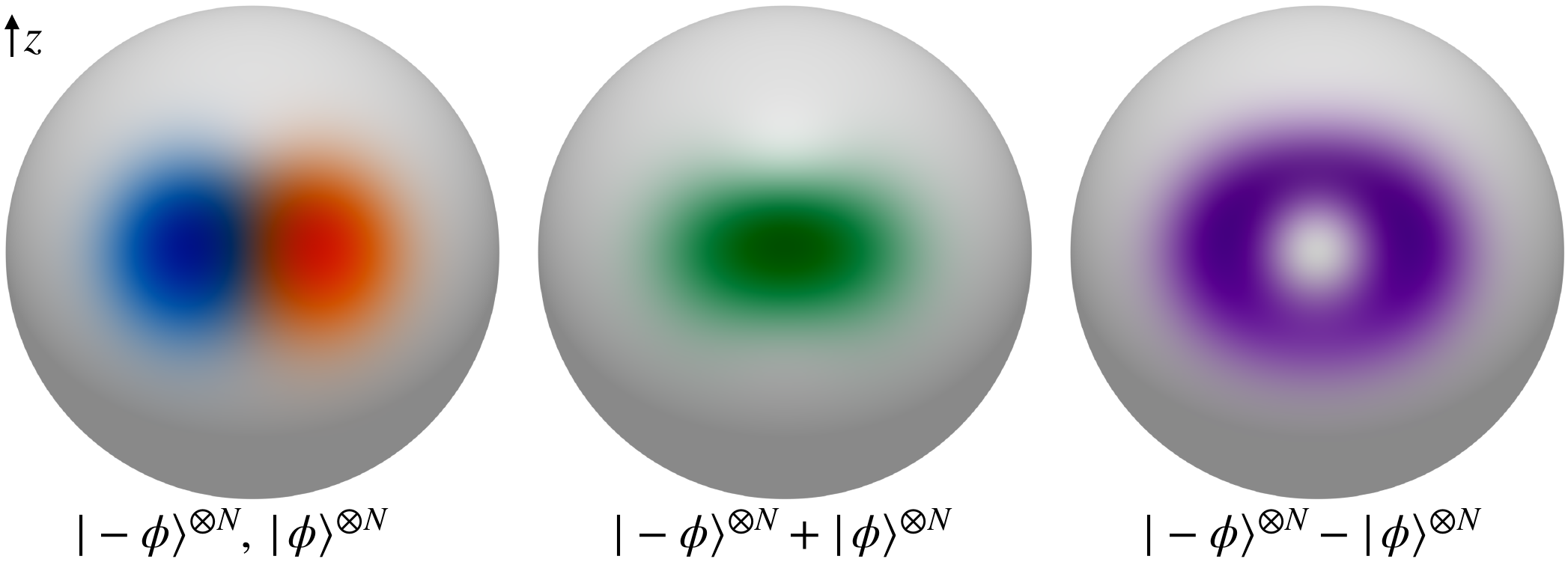}
    \caption{A sketch of a simple SSS $\ket{\Phi}$ (middle) in the DM, which is a superposition of two CSSs (left). Flipping the relative phase gives a state close to the $W$-state~\cite{3qubit_2way00,Dicke_circuit24} in a rotated basis (right), which is not a SSS.}
    \label{fig:binary}
\end{figure}

\emph{Generating SSSs using adaptive circuits.}---
Fig.~\ref{fig:binary} suggests a natural measurement-based approach to prepare the example $\ket{\Phi}$: one just needs to rotate the CSS $|+\rangle^{\otimes N}$ of $N$ data qubits controlled by an ancilla in $|+\rangle_a$, which gives $\ket{0}_a\otimes \ket{\phi}^{\otimes N}+ \ket{1}_a\otimes \ket{-\phi}^{\otimes N}=U_a \ket{+}^{\otimes N+1}$ with
\begin{equation}
    U_a = \ket{0}_a\bra{0}\otimes R_z(\phi_a) + \ket{1}_a\bra{1}\otimes R_z(-\phi_a),
\label{eq:controlU}
\end{equation}
where $R_z(\phi_a)=\ee^{-\frac{\mathrm{i}}{2}\phi_aZ}$ with $\phi_a=\phi$ here, and then project the ancilla to $\ket{+}_a$ by measurement which has constant success probability. Although this controlled rotation is a nonlocal operation, it can be done in $\order(1)$-depth local circuits using $\Theta(N)$ additional ancilla qubits together with local measurements and nonlocal classical feedforward~\cite{Dicke_circuit24}, e.g. by encoding the ancilla $a$ into a global GHZ state~\cite{adaptive_phase21}. Equivalently, the controlled rotation $U_a$ can be regarded as applying a global fanout gate (i.e. a product of \textsf{CNOT}s controlled by ancilla $a$) on state $R_z(\phi_a)|+\rangle^{\otimes}$; see Fig.~\ref{fig:recursivebinary} for a more general case utilized later.

Generalizing this idea, we present a 1d adaptive protocol that efficiently prepares $\xi$-SSSs for almost all $\xi$:
\begin{thm}\label{thm:prepare_sss}
    For any $\xi\ge c_1 N^{-\frac12}\log^6 N$ where $c_1$ is a numerical constant, an $N$-qubit $\xi$-SSS in the DM can be generated in an 1d adaptive circuit of depth $D=\order(\log \frac{1}{\xi})$ and width $2N+\order(\log \frac{1}{\xi})$ with constant success probability $P_{\rm succes}= 1-\order(1/\log(\xi^{-1}))$.
\end{thm}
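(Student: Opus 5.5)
The plan is to recast the single-ancilla construction of $\ket{\Phi}$ as one step of a \emph{digital QND measurement} of the collective operator $Z$, iterate it $K=\order(\log\frac1\xi)$ times in a Kitaev-style iterative phase estimation, and remove the random measured value of $Z$ by a feedforward rotation.

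\textbf{Single round.} Start from the CSS $\ket{+}^{\otimes N}$, which lies in the DM, has $\expval{}{X}=N$, and has a binomial $Z$-distribution of width $\sqrt N$. Apply the controlled rotation $U_a$ of Eq.~\eqref{eq:controlU} with angle $\phi_a$, then measure the ancilla in the $X$ basis. The two outcomes implement the Kraus operators $\cos(\phi_a Z/2)$ and $\ii\sin(\phi_a Z/2)$. An extra ancilla phase $\ee^{\ii\beta}$, fed forward from earlier outcomes, turns these into $\cos((\phi_a Z-\beta)/2)$ and $\sin((\phi_a Z-\beta)/2)$.

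\textbf{Iteration.} Choose angles $\phi_k=\pi 2^{-k}/\sqrt N$ for $k=0,\dots,K$, i.e., from coarse resolution $\sim\sqrt N$ down to resolution $\sqrt N\xi$. Each outcome fixes one more binary digit of $Z/(2\sqrt N)$, with $\beta_k$ set by the previous digits. Both outcomes are therefore "successes": after $K=\log_2(1/\xi)+\order(1)$ rounds the state is projected onto a window of width $\order(\sqrt N\xi)$ around a random but known value $z^\ast$.

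\textbf{Feedforward and resources.} To recenter, apply the global rotation $\ee^{-\ii\alpha Y/2}$ with $\alpha\approx -z^\ast/N=\order(N^{-1/2})$. Because $\expval{}{X}=\Theta(N)$, this shifts $\expval{}{Z}$ by $\approx -z^\ast$, and it preserves the DM. Each controlled rotation is a global fanout, realized in $\order(1)$ local depth by encoding the ancilla in an $N$-qubit GHZ state with measurements and classical feedforward, as in the discussion of $\ket\Phi$~\cite{Dicke_circuit24,adaptive_phase21}. These $N$ ancillas are reset and reused, and $\order(K)$ further qubits store the outcomes. This gives width $2N+\order(\log\frac1\xi)$ and depth $\order(K)=\order(\log\frac1\xi)$.

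\textbf{Bounds to verify.} First, the variance bound~\eqref{eq:DZ2=xi}. The $\cos/\sin$ filters have polynomial rather than compact tails, so the posterior $Z$-distribution has heavy tails. I would control them by repeating each digit a few times, or by the standard trick of measuring $\order(\log\log\frac1\xi)$ extra digits. The rare outcome sequences in which the binary estimate is ambiguous at a digit boundary are declared failures, and a union bound over the $K$ rounds should give $P_{\rm fail}=\order(1/\log\frac1\xi)$. Second, the polarization~\eqref{eq:X=N}. Each filter multiplies the state by a smooth function of $Z$ only, so $\expval{}{X}$ shrinks by a factor $1-\order(\phi_k^2\,\cdot)$ per round, which I expect to be summable. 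The recentering rotation also mixes $Y$ into $Z$. Using $\expval{}{\Delta Y^2}=\order(N/\xi^2)$, which matches the lower bound~\eqref{eq:DeltaY2>}, it adds $\alpha^2\expval{}{\Delta Y^2}=\order(1/\xi^2)$ to $\expval{}{\Delta Z^2}$. This is below $N\xi^2$ only when $\xi\gtrsim N^{-1/4}$, so a single final rotation is insufficient. I would instead recenter after every round, so that each correction angle is only $\order(2^{-k}/\sqrt N)$ and is applied while $\expval{}{\Delta Y^2}$ is still small. I would also track the curvature error from the sphere geometry.

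\textbf{Main obstacle.} The hard step is bounding these accumulated non-Gaussian corrections: the tail contributions, the curvature terms from the $Y$-rotations, and the depolarization of $X$ over $K$ rounds, uniformly in the random outcomes. I expect the condition $\xi\ge c_1N^{-1/2}\log^6N$ to come out of this bookkeeping, as polylog factors from the tail truncations. I would first try a Gaussian-approximation argument, reducing the Dicke-manifold dynamics near the equator to a harmonic oscillator via Holstein--Primakoff. I would then bound the deviations term by term with Chernoff-type concentration for the binomial $Z$-distribution.
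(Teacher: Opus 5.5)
There is a genuine gap, and it sits at the step you defer: the tails of the filter. First, two fixes to the iteration as written. With $\phi_k=\pi2^{-k}/\sqrt N$ the largest angle is $\pi/\sqrt N$, so $Z$ is never resolved below $\sim\sqrt N$; the angles must increase, $\phi_k\propto2^{k}/\sqrt N$. Also, when $\beta_k$ is set by previously measured coarser digits, the factors $\cos((\phi_kZ-\beta_k)/2)$ cannot share a common center; that requires the semiclassical-QFT order, finest digit first.

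With both fixes, your scheme is phase estimation with a \emph{uniform} ancilla window. The resulting $|f_m(z)|^2$ is a Fej\'er kernel with main lobe $w_0=\sqrt N\xi$ and tails decaying only like $w_0^2/(z-z_m)^2$. Weighted by the binomial distribution, those tails contribute $\Theta(w_0^2)$ to $\bra{\psi_m}(Z-z_m)^2\ket{\psi_m}$, while $\braket{\psi_m|\psi_m}=\Theta(w_0/\sqrt N)$. Hence $\alr{\Delta Z^2}_{\psi_m}\sim w_0\sqrt N=N\xi$, not $N\xi^2$.

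Your proposed remedies do not repair this:
\begin{itemize}
\item Measuring $\order(\log\log\frac1\xi)$ extra digits only raises the probability that the estimate is correct. It does not change the post-measurement tails, so you still get $N\xi/\mathrm{polylog}$.
\item Shrinking $w_0$ to $\sqrt N\xi^2$ needs rotation angles $\sim1/(\sqrt N\xi^2)$. These must stay $\order(1)$ to keep $\expval{}{X}=\Theta(N)$, so this only reaches $\xi\gtrsim N^{-1/4}$.
\item Repeating digits and declaring disagreements failures does not give $P_{\rm fail}=\order(1/\log\frac1\xi)$. For a digit whose phase is still unresolved (e.g.\ the first one measured, whose phase is essentially uniform), two repetitions disagree with probability $\approx1/4$.
\end{itemize}

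What is missing is a smooth ancilla window. The paper loads the $\ell$-qubit register with a Gaussian, $\alpha_x\propto\ee^{-\pi\sigma^2x^2}$ with $\sigma=2^{-\ell}\ell$. By Poisson summation, $f_m(Z)$ is then a periodized Gaussian of width $\sim\sigma/\phi$ with $\ee^{-\Omega(\ell^2)}$ tails. This gives $\alr{\Delta Z^2}_{\psi_m}=\order(\ell^{10}2^{-2\ell})N$ for every outcome with $\braket{\psi_m|\psi_m}\ge2^{-\ell}/\ell$, which occurs with probability at least $1-1/\ell$. Lemma~\ref{lem:prepGaussian} supplies this window in depth $\order(\ell)$: prepare its narrow Fourier transform on $\order(\log\ell)$ qubits, then apply a QFT.

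Your recentering step is unnecessary, and its analysis is off. Definition \eqref{eq:SSS_def} only asks for $\expval{\psi}{X}=\Theta(N)$ and a bound on the variance about the mean. So the paper outputs $\psi_m=f_m(Z)\ket{+}^{\otimes N}$ directly, centered at the random $z_m=\order(\ell\sqrt N)$; adaptive rotations appear only in the Dicke-state construction. Moreover, $\ee^{-\ii\alpha Y/2}$ mixes $X$ into $Z$, not $Y$.

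For the polarization, a per-round shrinkage estimate is not needed. Before the QFT, each ancilla branch $x$ carries the product state $\ket{\theta_x}^{\otimes N}$ with $|\theta_x|\le2\pi c_1'$. It therefore lies in the $X\ge N/2$ subspace up to $\ee^{-\Omega(N)}$, and this survives the projection because $\braket{\psi_m|\psi_m}$ is only polynomially small.

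Finally, the range $\xi\ge c_1N^{-1/2}\log^6N$ does not come from tail-truncation bookkeeping as you guess. It comes from two constraints together: $2^\ell/\ell\le c_1'\sqrt N$, which keeps the rotation angles small, and $2^\ell\ge\xi^{-1}\log^6\frac1\xi$, which absorbs the $\ell^{10}$ loss.
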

Up to a technical $\mathrm{polylog} N$ factor, Theorem~\ref{thm:prepare_sss} covers the whole range of $\xi$ up to the Heisenberg limit. The depth increases only logarithmically with $1/\xi$, \emph{exponentially} smaller than unitary circuits discussed above.
In particular, the depth does not need to scale with the system size to produce $\xi\ll 1$. For metrological purpose, although the GHZ state has maximal QFI and can be prepared in $\order(1)$ depth~\cite{adaptive_phase21}, it is extremely fragile to noise in that a tiny noise strength $p\gtrsim 1/N$ per qubit is already enough to destroy its quantumness~\cite{metro_GHZ_noise97}. In contrast, our ideal protocol produces metrological resource states that are optimally robust~\cite{sss_noise24}, where it can tolerate system-size independent noise $p\lesssim \xi^2$ (see SM). Our preparation protocol, however, is not as robust to noise as the final state due to the nonlocal control step \eqref{eq:controlU}. Nevertheless, our protocol could be useful when noise during preparation is much smaller than noise during sensing, and its noise robustness can be improved, as we will see.

We sketch the proof of Theorem~\ref{thm:prepare_sss} here, leaving calculation details in SM. 
Our protocol is illustrated in Fig.~\ref{fig:sss_circuit}, which merges the idea of block encoding~\cite{PhysRevLett.114.090502} of linear combination of unitaries (LCU)~\cite{LCU2012} and  phase estimation~\cite{Kitaev_phaseEstimation}, resembling adaptive circuits that prepare Dicke states \cite{Dicke_phaseEstimate21,Dicke_circuit24}. Each data qubit that supports the final SSS is neighboring one ancilla qubit labeled by $b$ (not shown in Fig.~\ref{fig:sss_circuit}), and there are extra $\ell=\order(\log\frac{1}{\xi})$ ancillae labeled by $A=\{a:a=1,\cdots,\ell\}$ at one end of the system. The data qubits are initialized to $\ket{+}$ while the ancillae are $\ket{0}$. The protocol consists of 4 stages: (\emph{1}) A ``Gaussian'' state $\ket{\alpha}=\sum_x \alpha_x\ket{x}$ is prepared on $A$, where $x=-2^{\ell-1},\cdots,2^{\ell-1}-1$ labels the (shifted) computational basis and $\alpha_x=\mathcal{N}\ee^{-\pi{\sigma^2}x^2}$, where $\mathcal{N}$ is normalization and $\sigma$ is chosen shortly. (\emph{2}) Each ancilla $a$ controls rotation of data qubits by $U_a$ in Eq.~\eqref{eq:controlU} with $\phi_a=2^{a}\pi\phi$ chosen shortly, i.e. $\mathcal{U}=\prod_{a=1}^\ell U_a$.
In Fig.~\ref{fig:sss_circuit}, we have written $\cU$ in an equivalent form such that each ancilla just applies a global fanout on the data qubits, where the $\phi_a$ rotations become on-site and there are two extra layers of $\textsf{GRAY}=\prod_{a=1}^{\ell-1}\textsf{CNOT}_{a+1,a}$ implementing the Gray code~\cite{gray1953}. This form generalizes the equivalence shown in Fig.~\ref{fig:recursivebinary},  and will be useful when we discuss noise robustness. Overall, the stage $\cU$ takes $\order(\ell)$ depth by routing and treating the $a$'s sequentially.
(\emph{3}) The inverse quantum Fourier transform (QFT) is performed on $A$, which also takes $\order(\ell)$ depth \cite{PhysRevA.76.052310}. (\emph{4}) $A$ is finally measured in the computational basis labeled by $m=-2^{\ell-1},\cdots,2^{\ell-1}-1$.

\begin{figure}
    \centering
\scalebox{0.63}{
\begin{quantikz}
\lstick{$|0\rangle_{a_l}$}&\gate[4,nwires=2,style={minimum height=42mm,inner xsep=1pt,inner ysep=0pt}]{\smash{
    \rotatebox[origin=c]{90}{\textsf{PREP}}}}&\gate[4,nwires=2,style={minimum height=42mm,inner xsep=1pt,inner ysep=0pt}]{\smash{
    \rotatebox[origin=c]{90}{\textsf{GRAY}}}}\gategroup[5,steps=8,style={dashed,rounded corners,fill=blue!5, inner xsep=2pt, inner ysep=8pt},background,label style={label position=below,anchor=north,yshift=-0.2cm}]{$\mathcal{U}$}       &\qw&\qw       &\qw&\qw\ \ldots\ &\qw&\ctrl{4}  &\gate[4,nwires=2,style={minimum height=42mm,inner xsep=1pt,inner ysep=0pt}]{\smash{
    \rotatebox[origin=c]{90}{\textsf{GRAY}$^{-1}$}}}&\gate[4,nwires=2,style={minimum height=42mm,inner ysep=0pt}]{\smash{
    \rotatebox[origin=c]{90}{\textsf{QFT}$^{-1}$}}}&\meter{} \\
\lstick{\vdots}           &                                &&          &&          &  \ \vdots\   &&          &&&\vdots&\\
\lstick{$|0\rangle_{a_2}$}&                                &&\qw       &\qw&\ctrl{2}  &\qw\ \ldots\ &\qw&\qw       &&&\meter{}\\
\lstick{$|0\rangle_{a_1}$}&                                &&\ctrl{1}  &\qw&\qw       &\qw\ \ldots\ &\qw&\qw       &&&\meter{}\\
\lstick{$|+\rangle^{\otimes N}$}   &\qwbundle{N}                    &\gate[1]{\mbox{\fontsize{6pt}{6pt}$R_z(\phi_1)$}}&\targ{}&\gate[1]{\mbox{\fontsize{6pt}{6pt}$R_z(\phi_2)$}}&\targ{}&\qw\ \ldots\ &\gate[1]{\mbox{\fontsize{6pt}{6pt}$R_z(\phi_\ell)$}}&\targ{}&\qw&\qw&\qw
\end{quantikz}
}
    \caption{An adaptive protocol that prepares a $\xi$-SSS from $|+\rangle^{\otimes N}$.
    }
    \label{fig:sss_circuit}
\end{figure}

At those stages, the state on $A$ $+$ data qubits is
\begin{align}\label{eq:stages}
    &\hspace{-2em}\sum_x \alpha_x \ket{x} \ket{+}^{\otimes N} \overset{(\emph{2})}{\longmapsto} \sum_x \alpha_x \ee^{2\pi\ii\phi x Z}\ket{x} \ket{+}^{\otimes N} \nonumber\\
    & \overset{(\emph{3})}{\longmapsto} \sum_x \alpha_x\ee^{2\pi\ii\phi x Z} \lr{\frac{1}{2^{\frac{\ell}{2}}} \sum_m \ee^{-\ii \frac{2\pi}{2^\ell}mx}\ket{m}} \ket{+}^{\otimes N} \nonumber\\
    &= \frac{1}{2^{\frac{\ell}{2}}}\sum_m \ket{m} \otimes \lr{\sum_x \alpha_x \ee^{2\pi\ii x\lr{\phi Z-2^{-\ell}m}}} \ket{+}^{\otimes N},
\end{align}
where we have ignored an irrelevant global phase, and the QFT differs from the standard one by two extra layers of single-qubit gates. The final measurement projecting to a given $m$ effectively performs a \emph{coarse-grained} measurement of the collective $Z$ on the initial $\ket{+}^{\otimes N}$, yielding an unnormalized SSS $\ket{\psi_m}= f_m(Z)\ket{+}^{\otimes N}$. It can also be viewed as filtering a CSS, with the Gaussian filter operator $f_m(Z)$ decomposed into a LCU by the Fourier transformation~\cite{filter19}. This is a crucial difference comparing to the Dicke-state protocol \cite{Dicke_circuit24} where observable $Z$ is fully resolved, which destroys polarization in perpendicular directions and is equivalent to filtering with the Dirichlet kernel. More precisely, our filter operator
\begin{align}\label{eq:fm=}
    f_m(Z)
    &\approx \mathcal{N} \sum_{x\in \mathbb{Z}} \ee^{-\pi\sigma^2x^2} \ee^{2\pi\ii x \lr{\phi Z-2^{-\ell}m}} \nonumber\\
    &= \mathcal{N} \sigma^{-1} \sum_{q\in\mathbb{Z}} \exp\mlr{-\frac{\pi}{\sigma^2}\lr{\phi Z-2^{-\ell}m+q}^2}
\end{align}
decays exponentially fast but does not strictly vanish, where we have chosen $\sigma=2^{-\ell}\ell$ so that relaxing the sum over $x$ to all integers in the first line incurs a small error $\exp\mlr{-\Omega(\ell^2)}$. The second line of \eqref{eq:fm=} follows from the Poisson summation formula. $f_m(Z)$ is thus supported in periodic narrow windows of width $\Delta Z\sim \sigma/\phi$ and period $\phi Z\rightarrow \phi Z+1$, whose centers shift with $m$. Since $\ket{+}^{\otimes N}$ is mainly supported in $Z\in [-\sqrt{N},\sqrt{N}]$, we choose $\phi\sim N^{-1/2}$ so that we can focus on one period, making $\ket{\psi_m}$ satisfy \eqref{eq:DZ2=xi} with $\xi\sim 2^{-\ell}$ for almost all measurement outcomes. Here the Gaussian initial state $\ket{\alpha}$ is crucial to guarantee a fast decaying tail outside the $Z$ window. Furthermore, $\psi_m$ is $\xi$-SSS because it also satisfies \eqref{eq:X=N}. To see this, observe that the controlled rotation angle is at most $\sim 2^{\ell}\phi=\order(\phi/\xi)\ll 1$ for any $\xi\gg N^{-1/2}$ considered. Therefore, starting from all-plus, the data qubits remain to have a $\Theta(N)$ polarization in $X$ throughout the protocol, which turns out to hold for almost every measurement outcome.

To establish Theorem \ref{thm:prepare_sss}, it remains to show that $\ket{\alpha}$ can be prepared in $\order(\ell)$ depth.
\comment{
Again using the Poisson summation formula \eqref{eq:fm=}, \begin{equation}\label{eq:x2=y2}
    \ee^{-\pi{\sigma^2}x^2} \approx \sum_{q\in \mathbb{Z}}\ee^{-\pi \widetilde{\sigma}^2(\widetilde{x}+q)^2} = \frac{1}{\widetilde{\sigma}} \sum_{y\in \mathbb{Z}} \ee^{-\frac{\pi}{\widetilde{\sigma}^2}y^2} \ee^{2\pi\ii\widetilde{x}y}
\end{equation}
where $\widetilde{\sigma}=\ell,\widetilde{x}=2^{-\ell}\ell^2 x$ so that the approximation error is again small $\exp\mlr{-\Omega(\ell^{4})}$. With similar amount of error, we further truncate the sum to $|y|\le \ell^{2}$ in \eqref{eq:x2=y2}. $\ket{\alpha}$ is then approximately QFT (doable in $\order(\ell)$ depth) acting on Gaussian state $\sum_{|y|\le \ell^2} \ee^{-\frac{\pi}{\widetilde{\sigma}^2}y^2} \ket{y}$ that is effectively supported on $\order(\log \ell)$ qubits, which can be prepared in $\mathrm{polylog}(\ell)\ll \ell$ depth \cite{Gaussian_Kitaev08}. 
}
We prove the following general result on preparing Gaussian states in SM, which completes the proof sketch for Theorem \ref{thm:prepare_sss}. We also give a MPS representation of the Gaussian state and prove it has small entanglement in SM. 

\begin{lem}
\label{lem:prepGaussian}
The $\ell$-qubit state $|\alpha\rangle=\sum_x\alpha_x|x\rangle$, where $\alpha_x=\mathcal{N}e^{-\pi\sigma^2x^2}$ with $\sigma=\Omega(2^{-\ell}\ell)$ are Gaussian amplitudes, can be prepared in 1d local unitary circuit in depth $O(\ell)$ without ancillae, up to state error $\epsilon=e^{-\Omega(\ell^2)}$.
\end{lem}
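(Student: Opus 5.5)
The plan is to combine Poisson summation with the quantum Fourier transform in a recursion that halves the register at each step. Each level will cost $\order(\text{current register size})$ depth, so the total depth is a geometric series and comes to $\order(\ell)$.

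\emph{Fourier duality.} Write $w:=1/\sigma$ for the width of $\alpha_x$ in the index $x$; the hypothesis $\sigma=\Omega(2^{-\ell}\ell)$ says $w\lesssim 2^\ell/\ell$. Poisson summation over the $2^\ell$-periodization gives
\begin{equation*}
\sum_{q\in\mathbb{Z}} \ee^{-\pi\sigma^2(x+q2^\ell)^2}=\frac{1}{\sigma 2^\ell}\sum_{k\in\mathbb{Z}} \ee^{-\pi k^2/(\sigma 2^\ell)^2}\,\ee^{2\pi\ii kx/2^\ell}.
\end{equation*}
The terms with $q\neq0$ are $\ee^{-\Omega(\sigma^2 4^\ell)}=\ee^{-\Omega(\ell^2)}$. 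Truncating the $k$-sum to $|k|\le \ell s$, where $s:=\sigma 2^\ell\ge \Omega(\ell)$ is the dual width, costs a further $\ee^{-\Omega(\ell^2)}$. Hence $\ket{\alpha}$ equals, up to error $\ee^{-\Omega(\ell^2)}$, the (shifted) QFT of a discrete Gaussian of width $s$ with $w\cdot s=2^\ell$. The shifted QFT costs $\order(\ell)$ depth in 1d without ancillae, as used for Theorem~\ref{thm:prepare_sss}.

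\emph{Recursion.} Since $ws=2^\ell$, at least one of $w,s$ is at most $2^{\ell/2}$; call it $w'$, and prepare the Gaussian of width $w'$.
\begin{itemize}
\item Its amplitudes are negligible beyond $|x|\le \ell w'$, so in two's complement it is supported on the lowest $\ell'=\ell/2+\order(\log\ell)$ qubits. The high qubits are sign-extension copies of qubit $\ell'$.
\item We therefore prepare a Gaussian on an $\ell'$-qubit register and then copy the sign bit with a CNOT cascade of depth $\order(\ell)$ in 1d.
\item If $w'=s$, we follow this with the QFT. If $w'=w$, no QFT is needed.
\end{itemize}
The sub-instance again satisfies the hypothesis of the lemma on $\ell'$ qubits. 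The key point is that we keep the \emph{original} $\ell$ as the truncation parameter, cutting at $\ell$ widths, so every error stays $\ee^{-\Omega(\ell^2)}$ at every level. Summing the errors over $\order(\log\ell)$ levels by the triangle inequality keeps $\epsilon=\ee^{-\Omega(\ell^2)}$, and the depths sum to $\order(\ell+\ell/2+\cdots)=\order(\ell)$.

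\emph{Base case (main obstacle).} The recursion stalls once the register has $k=\order(\log \ell)$ qubits and both widths are $\sim\ell$, because the window $\ell w'$ then no longer halves the register. There we must prepare an explicit $k$-qubit state to precision $\ee^{-\Omega(\ell^2)}$, with no ancillae and within $\order(\ell)$ depth. Generic state preparation costs $\order(2^k)$ depth, which is a power of $\ell$, possibly exceeding $\ell$. I would try three things, in this order.
\begin{enumerate}
\item Make the base register have exactly $\log_2\ell+\order(1)$ qubits. To do so, exploit that the window $|x|\le\ell w'$ is very conservative and in fact only requires $|x|\lesssim \sqrt{\ell}\,w'$ for an $\ee^{-\Omega(\ell)}\cdot$(width-dependent) tail. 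Then balance the constants so that generic synthesis costs $\order(\ell)$.
\item Otherwise, use the Grover--Rudolph/Kitaev--Webb scheme on $k$ qubits. This needs only $k$ multiplexed rotations, each multiplexor on $\le k$ controls costing $\order(2^k)$ depth in 1d.
\item As a fallback, give the base-case state an explicit MPS (which the SM promises has small bond dimension) and compile its sequential form.
\end{enumerate}
Verifying that one of these fits in $\order(\ell)$ depth at precision $\ee^{-\Omega(\ell^2)}$ is the step I expect to need the most care.
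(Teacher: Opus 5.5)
Your Fourier-duality step, the sign extension by a CNOT cascade, and the final QFT are the same ingredients the paper uses, and your recursion is sound. It never runs long, though. Once a register is cut to $\sim\ell$ widths, the dual width is $\Theta(\ell)$, so the next Fourier step lands directly in the base case. This matches the paper's single truncation to $B'=\Theta(\ell^2)$ followed by one QFT.

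The genuine gap is the base case, which you leave open, and none of your three options closes it. The base case cannot be avoided. Already for $\sigma=\Theta(1/\ell)$, which the hypothesis allows, the Gaussian and its Fourier dual both have width $\Theta(\ell)$. Precision $\ee^{-\Omega(\ell^2)}$ then forces you to keep $\Omega(\ell)$ widths, i.e.\ $\Theta(\ell^2)$ amplitudes on $q=2\log_2\ell+\order(1)$ qubits, and no Fourier step shrinks this.
\begin{itemize}
\item Option 1 cuts at $\sqrt{\ell}$ widths. That only gives error $\ee^{-\Omega(\ell)}$, so it proves a weaker statement than the lemma.
\item Option 2, by your own count, costs depth of order $2^q\sim\ell^2$.
\item Option 3 does no better. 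On only $q$ qubits, the SM's bond-dimension bound (which is $\order(\ell^2)$ at this precision) is no improvement over the trivial $2^{q/2}\sim\ell$. Sequential MPS compilation then costs $\order\bigl(\sum_j 4^{\min(j,q-j)}\bigr)=\order(2^q)=\order(\ell^2)$, which is exactly the estimate the paper writes down and then discards.
\end{itemize}

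The missing idea is that ``without ancillae'' constrains the whole $\ell$-qubit register, not the $q$-qubit base register. While the base state $\ket{\beta_K}$ is being prepared, the other $\ell-q=\Theta(\ell)$ qubits sit idle in $\ket{0}$. They can serve as clean ancillae, provided they are returned to $\ket{0}$ before the sign extension. The paper then invokes an ancilla-assisted nearest-neighbor state-preparation result (Theorem~9 of \cite{10.1109/TCAD.2023.3311734}). Its depth $\order(2^{q/2}+2^q/\ell)$ is $\order(\ell)$ precisely because $2^q=\Theta(\ell^2)$ and $\Theta(\ell)$ workspace qubits are available. With this substitution your argument goes through; without it, your construction only yields depth $\order(\ell^2)$.
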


\emph{Entanglement entropy.}---
To show optimality of the above protocol, we look into entanglement structure of SSSs, quantified by the smooth max-entropy \cite{Renner_thesis05}  \begin{equation}
    H^{\epsilon}_{\rm max}(\rho) := \min_{\widetilde{\rho}\ge 0:\, \norm{\widetilde{\rho}-\rho}_1\le \epsilon} H_{\rm max}(\widetilde{\rho}),
\end{equation}
where $\rho$ is a normalized density matrix, and $H_{\rm max}(\widetilde{\rho}) := \log \mathrm{rank}(\widetilde{\rho})$ is the entanglement rank or the Rényi-0 entropy~\footnote{Note that we ignore the constraint $\tr{\protect\widetilde{\rho}}\le\tr{\rho}$ in the minimization \cite{Renner_thesis05}; \eqref{eq:maxentropy>} still holds if we impose this additional constraint.}.
Although the von Neumann entanglement entropy $H(\rho)=-\tr{\rho \log \rho}$ is more widely used to quantify bipartite entanglement, the smooth max-entropy is a one-shot version that bounds the resource to prepare a single approximate copy of the state. Our second main result is:

\begin{thm}\label{thm:bound}
    Any $\xi$-SSS in the DM has bipartite entanglement \begin{equation}\label{eq:maxentropy>}
        H^{\epsilon}_{\rm max}(\rho_L) =\Omega\lr{ \log \frac{1}{\xi}},
    \end{equation}
for any constant $\epsilon\in (0,1)$, where $\rho_L$ is the reduced density matrix of $N/2$ qubits. As a result, preparing any such $\xi$-SSS, even approximately, requires a 1d adaptive circuit of depth $\Omega(\log(1/\xi))$.
\end{thm}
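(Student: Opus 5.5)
The plan is to use the Schmidt decomposition of Dicke-manifold states across the cut $L|R$ with $|L|=|R|=N/2$, and to show that a state with small smooth max-entropy cannot have both a small $Z$-variance and a large $Y$-variance.

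\textbf{Schmidt form and variance transfer.} Any state in the DM can be written as
\[
\ket{\psi}=\sum_{k}c_k\ket{D^{N/2}_{k}}_L\otimes\ket{D^{N/2}_{M-k}}_R
\]
summed over sectors. Write $Z=Z_L+Z_R$, and likewise for $Y$.
\begin{itemize}
\item From \eqref{eq:DZ2=xi}, the fluctuation of $Z_L$ is locked to that of $-Z_R$ up to $N\xi^2$. Indeed, $\Delta(Z_L+Z_R)^2\le N\xi^2$, while each $\Delta Z_L^2$ is at most $O(N)$.
\item The key relation is that the $Y$ fluctuation, which is large by \eqref{eq:DeltaY2>}, $\expval{\psi}{\Delta Y^2}=\Omega(N/\xi^2)$, must come from correlations between $L$ and $R$.
\item Since $Y^2\le (N/2)^2$ on each half, the bound $\Delta Y^2=\Omega(N/\xi^2)$ forces the $Z_L$ distribution to be spread over $\Omega(\sqrt N/\xi)$ values. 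This uses an uncertainty relation within the DM of the left block. Concretely, the Holstein–Primakoff picture near the $X$-polarized pole makes $Z_L,Y_L$ conjugate with commutator $\sim \ii X_L=\Theta(N)$, using \eqref{eq:X=N}.
\end{itemize}
The intended intermediate statement is this. $\Delta Y^2$ large means $\Delta Y_L^2$ or $\Delta Y_R^2$ is large, say $\Delta Y_L^2=\Omega(N/\xi^2)$. Each block is a large spin, so this pushes $Z_L$ to be finely structured rather than spread. Equivalently, and more cleanly, I will argue in the conjugate picture: $\Delta Z^2\le N\xi^2$ while $Z_L$ alone has variance $\Theta(N)$, which it must have to keep $X$-polarization.

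\textbf{Counting argument.} Conditioned on $Z_R$, the value of $Z_L$ is determined to within $\sqrt N\xi$, while $Z_L$ ranges over a window of width $\sqrt N$. So the reduced state $\rho_L$ is spread across $\gtrsim 1/\xi$ distinguishable $Z_L$-windows of width $\sqrt N\xi$, with comparable weights.

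I will then show that any $\widetilde\rho$ of rank $r$ that is $\epsilon$-close to $\rho_L$ needs $r=\Omega(1/\xi)$. The route is to bound the largest weight: each rank-one component, being a pure state of the left block, carries $Z_L$-weight spread over at most about $1/\xi$ windows. The coherence between windows is needed to keep $\langle X_L\rangle=\Theta(N)$.

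\textbf{Main obstacle.} The step I expect to be hardest is making the previous point rigorous. Mass spread over many $Z_L$ values does not by itself imply high rank, because a single pure state like a CSS is spread in $Z_L$. So the argument must use the Schmidt structure.

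The fix I will try first is to use the Schmidt vectors conditioned on $R$. Measuring $Z_R$ in coarse windows of width $\sqrt N\xi$ collapses $L$ onto distinct, nearly orthogonal $Z_L$ windows. Consequently $\rho_L$ is approximately block-diagonal, with $\Omega(1/\xi)$ blocks of comparable weight $O(\xi)$. The needed input is that the $Z_R$ distribution is itself spread with no window carrying more than $O(\xi)$ weight; this follows from $\Delta Z_R^2=\Theta(N)$ plus anticoncentration, which I will derive from $\langle X_R\rangle=\Theta(N)$ and Dicke-manifold structure. Given this, any rank-$r$ approximant captures weight at most $rO(\xi)$, which yields $H^\epsilon_{\max}\ge\log(c(1-\epsilon)/\xi)$.

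\textbf{Depth consequence.} For the circuit bound, I will use the fact that a 1d adaptive circuit of depth $D$ can create entanglement rank at most $2^{O(D)}$ across a cut. Measurements and feedforward do not increase the Schmidt rank of each post-measurement branch beyond the light-cone count. Combining this with the entropy bound gives $D=\Omega(\log(1/\xi))$.
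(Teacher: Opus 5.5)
Your final strategy is sound and is essentially the paper's argument. It truncates, splits $\rho_L$ by coarse $Z_R$ windows, bounds each window's weight by $O(\xi)$, places the conditional left states on $Z_L$ windows of bounded overlap to get $\|\rho_L\|_{\mathrm{op}}=O(\xi)$, and then lower-bounds the rank of any $\widetilde\rho_L$ through its support projector. The paper bounds $\|\rho_L^{\mathrm{trun}}\|_{\mathrm{op}}$ differently, via $\|\Psi\|_{\mathrm{op}}\le\sum_z|\beta_z|\,\|\Psi^{(z)}\|_{\mathrm{op}}$ plus Cauchy--Schwarz rather than conditioning, but the structure is the same.

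The gap is that the one estimate carrying all the content, that each $Z_R$ window of width $\sim\sqrt N\xi$ has weight $O(\xi)$, is never supplied, and neither source you name can provide it. $\Delta Z_R^2=\Theta(N)$ gives no anticoncentration, since half the mass could sit on a single value. $\langle X_R\rangle=\Theta(N)$ cannot give it either: a spin-squeezed state of the block $R$ alone has $\langle X_R\rangle=\Theta(N)$ and $Z_R$ concentrated far below $\sqrt N\xi$.

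What you need is that $\rho_R$ comes from a \emph{globally} symmetric state. Each $\ket{D_z}$ splits across the cut with hypergeometric (Clebsch--Gordan) weights $p_z(m)$, and $\max_m p_z(m)<\sqrt{8/(\pi N[1-(z/N)^2])}=O(N^{-1/2})$ for $|z|\le(1-\zeta)N$. For a fixed $Z_R$ value, each retained $z$ contributes exactly one term $|\beta_z|^2p_z(\cdot)$. Hence $\mathrm{Prob}[Z_R=v]\le\max_{z,m}p_z(m)=O(N^{-1/2})$, which gives $O(\xi)$ per window. Polarization enters only through $|z_0|\le(1-2\zeta)N$, which keeps the retained $z$ away from the poles, where $p_z$ is $O(1)$ (at $z=\pm N$ the Dicke state is a product state).

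Two smaller repairs are also needed.
\begin{itemize}
\item \emph{Order of truncation.} ``Conditioned on $Z_R$, $Z_L$ is fixed to within $\sqrt N\xi$'' does not follow from the global Chebyshev bound. The $\eta^2$ tail weight could sit inside a few windows whose own weight is only $O(\xi)$. Instead, truncate $\psi$ to $|z-z_0|<w=\xi\sqrt N/\eta$ first, run the window argument on $\rho_L^{\mathrm{trun}}$, and pass back to $\widetilde\rho_L$ via $\|Q\ket{\psi^{\mathrm{trun}}}\|\ge\sqrt{1-\epsilon}-\eta$, as the paper does.
\item \emph{Overlap of windows.} The conditional $Z_L$ supports then have width $\sim w$ and each overlaps $O(1/\eta)$ others, so they are not nearly orthogonal. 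Bounded multiplicity suffices: each conditional piece is $\le p_j P_{S_j}$, so $\mathrm{Tr}(Q\rho_L^{\mathrm{trun}})\le\max_j p_j\cdot O(1/\eta)\cdot\mathrm{rank}\,Q=O(\xi/\eta)\,\mathrm{rank}\,Q$.
\end{itemize}

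Finally, drop the early heuristics, which are false and not needed. In the DM, $\Delta Z_L^2=\Theta(N)$, so $Z_L$ is not spread over $\Omega(\sqrt N/\xi)$ values. A pure state of $L$ is not in general confined to $\sim1/\xi$ windows. The depth consequence is argued exactly as in the paper.
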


Here we assume $N$ is even for simplicity.
This provides a many-body entanglement measure of SSSs beyond QFI, which differentiates SSSs from the GHZ state which has $\order(1)$ bipartite entanglement, suggesting that the optimally robust metrological power of SSSs~\cite{sss_noise24} may fundamentally require substantial resources. Theorem~\ref{thm:bound} requires the state to be permutation symmetric, because otherwise it can be a tensor product of two SSSs of $N/2$ qubits that has zero entanglement across the cut. Nevertheless, it is plausible that for a general SSS, there exists a bipartition of the system with $\Omega(\log \frac{1}{\xi})$ entanglement entropy; we leave this general case for future work. The entanglement bound shows that our protocol in Theorem~\ref{thm:prepare_sss} is optimal for generating SSSs in the DM, because 1d adaptive circuits can increase entanglement entropy at most linearly in circuit depth~\cite{adaptive_entanbound22}.

We prove Theorem~\ref{thm:bound} in SM, with the intuition being the uncertainty principle: The standard uncertainty relation~\eqref{eq:DeltaY2>} is not sufficient because it does not exclude GHZ-like states with $\order(1)$ entanglement. We need a generalized uncertainty relation called \emph{support uncertainty principles} or \emph{entropic uncertainty}~\cite{uncertain_support89,uncertain_sparse02,uncertain_frame03,uncertain_refine13,uncertain_rmp17,uncertain_product24}; see~\cite{uncertain_rev14} for a survey. Its simplest form is as follows: Any function $f_x:x=0,\cdots,n-1$ with its discrete Fourier transform $\hat{f}_k$ satisfy $n_x\cdot n_k \ge n$, where $n_x$ ($n_k$) is the number of nonzero $f_x$ ($\hat{f}_k$)~\cite{uncertain_support89}. In other words, $f$ and $\hat{f}$ cannot be both concentrated in sets of small cardinality. For a $\xi$-SSS $\ket{\psi}$ in the DM polarized along $X$, observables $\frac{Y}{\sqrt{N}},\frac{Z}{\sqrt{N}}$ effectively form Fourier conjugates $x,k$, so \eqref{eq:DZ2=xi} requires the ``support cardinality'' of $Y$ to be large $n_{Y/\sqrt{N}}\gtrsim \xi^{-1}$, excluding the GHZ example. Since the DM is spanned by CSSs $\ket{\omega}^{\otimes N}$ where $\ket{\omega}$ is a state on the single-qubit Bloch sphere with solid angle $\omega$, \begin{equation}\label{eq:psi=omega}
    \ket{\psi}\approx \sum_\omega \psi_\omega \ket{\omega}=\sum_\omega \psi_\omega \ket{\omega_L}\otimes \ket{\omega_R},
\end{equation}
where the weight $|\psi_\omega|^2$ is spread out in $\sim n_{Y/\sqrt{N}}\gtrsim \xi^{-1}$ different values of $\omega$. Note that \eqref{eq:psi=omega} only includes $\omega$ that are $\gtrsim N^{-1/2}$ apart so that each pair $\ket{\omega}^{\otimes N},\ket{\omega'}^{\otimes N}$ is almost orthogonal. We have also expanded $\ket{\omega}=\ket{\omega_L}\otimes \ket{\omega_R}$ to CSSs of the two $\frac{N}{2}$-qubit subsystems $L,R$ in \eqref{eq:psi=omega}. Because $\{\ket{\omega_L}\}$ are also approximately orthogonal, \eqref{eq:psi=omega} leads to entanglement $\sim \log n_{Y/\sqrt{N}}\gtrsim \log(1/\xi)$. 

\emph{Preparing Dicke states via squeezing.}--- Dicke states $\ket{D^N_M}$ are another important family of states with wide applications in quantum technologies~\cite{Dicke_qcommuni09,metro_QIrev14,Dicke_permu_code14,Dicke_qsimu21,DQI25,Dicke_symmetrize26}. Ref.~\cite{Dicke_circuit24} provides a protocol to prepare the Dicke state for any $M$ in 1d adaptive circuits, which generally requires depth $\Theta(\log^2N)$ for deterministic preparation when combined with the adaptive rotation scheme in Ref.~\cite{Dicke_log3_24} to convert one Dicke state to another via measurements. However, this depth is larger than the lower bound $\Omega(\log N)$ given by the $\Theta(\log N)$ bipartite entanglement of Dicke states. Instead, generalizing the previous squeezing protocol, we are able to achieve this optimal scaling:

\begin{figure}
    \centering
\scalebox{0.5}{
\begin{quantikz}
\lstick{$|0\rangle_{a}$}&\qw&\gate[1]{H}\gategroup[2,steps=5,style={dashed,rounded
corners,fill=blue!5, inner
xsep=2pt},background,label style={label
position=below,anchor=north,yshift=-0.2cm}]{{Repeat $k$ rounds}}&\gate[2]{U_a}&\gate[1]{H}&\meter{}&\gate[1]{|0\rangle}&\qw\ \\
\lstick{$|\Psi\rangle$}&\qwbundle{N}&\qw&\qw&\qw&\qw&\qw&\qw\
\end{quantikz}
$\xlongequal{|\Psi\rangle=\prod_i X_i|\Psi\rangle}$
\begin{quantikz}
\lstick{$|0\rangle_{a}$}&\qw&\gate[1]{H}\gategroup[2,steps=5,style={dashed,rounded
corners,fill=blue!5, inner
xsep=2pt},background,label style={label
position=below,anchor=north,yshift=-0.2cm}]{{Repeat $k$ rounds}}&\ctrl{1}&\gate[1]{H}&\meter{}&\gate[1]{|0\rangle}&\qw\ \\
\lstick{$|\Psi\rangle$}&\qwbundle{N}&\gate[1]{R_z(\phi_a^{(r)})}&\targ&\qw&\qw&\qw&\qw&\qw\
\end{quantikz}
}
    \caption{An alternative squeezing protocol, where the angle at round-$r$ is $\phi_a^{(r)}=\frac{2}{q_{\rm f}}\frac{\sqrt{\langle\Delta Y^2\rangle^{(r-1)}}}{|\langle X\rangle^{(r-1)}|}$. Here superscript $(r-1)$ means quantities of the state after round-$(r-1)$, and $q_{\rm f}$ is a constant one can choose. The protocol prepares $\ket{\Phi}$ at $k=1$, and the Dicke state $\ket{D^N_{N/2}}$ at $k=\log_2N$ for a different angle $\phi_a^{(r)}=2^{r-1}\pi/N$ with success probability $\Theta(1/\sqrt{N})$ (notice this protocol is different from Theorem~\ref{thm:dicke}).
    }
    \label{fig:recursivebinary}
\end{figure}

\begin{thm}\label{thm:dicke}
    The Dicke state $\ket{D^N_{N/2}}$ can be prepared in 1d adaptive circuits of depth $D=\order(\log N)$ and width $2N+\order(\log N)$ with $\Omega(1)$ success probability.
\end{thm}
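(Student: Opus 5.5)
We will obtain $\ket{D^N_{N/2}}$ by running the squeezing protocol of Theorem~\ref{thm:prepare_sss} essentially to the Heisenberg limit and then finishing with a fine, fully resolving phase estimation of $Z$. The finishing step touches only $\order(1)$ bits, so it does not cost $\log^2 N$ depth or a $1/\sqrt{N}$ success probability.

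\emph{Stage 1.} Apply the circuit of Fig.~\ref{fig:sss_circuit} to $\ket{+}^{\otimes N}$. Use $\ell_1=\tfrac12\log_2 N-\order(\log\log N)$ ancillae, $\phi\sim N^{-1/2}$, and the Gaussian register prepared by Lemma~\ref{lem:prepGaussian}. The depth is $\order(\log N)$. The outcome $m$ gives the unnormalized state $f_m(Z)\ket{+}^{\otimes N}$, whose $Z$-distribution is a Gaussian window centered at $Z_m\approx 2^{-\ell_1}m/\phi$ with width $\Delta Z\sim \sigma/\phi\sim \mathrm{polylog}N$, up to tails of size $e^{-\Omega(\ell_1^2)}$.

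\emph{Stage 2.} The window must now be recentered on $Z=0$, i.e.\ $M=N/2$. Two options are available.
\begin{itemize}
\item[(a)] Choose the window so that with constant probability it contains $Z=0$. This holds because $m$ concentrates near $0$ when $\phi\sqrt N\lesssim 2^{-\ell_1}\cdot\mathrm{poly}$.
\item[(b)] Deterministically shift the window using the adaptive rotation scheme of Ref.~\cite{Dicke_log3_24}, which converts between states supported on nearby Dicke sectors. Here it is applied only to a shift of size $\order(\sqrt N)$, and the rotation angle $\sim Z_m/N$ is known from $m$.
\end{itemize}
We would try (a) first, since it suffices for $\Omega(1)$ success.

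\emph{Stage 3.} Perform exact phase estimation of $Z$ restricted to the residual window, in the style of the Dicke protocol of Ref.~\cite{Dicke_circuit24}. Use $\ell_2=\order(\log\log N)$ further ancillae controlling rotations $R_z$ by angles $2^{a}\pi/N$ for the top bits only, since the low-order bits are already known from Stage 1. Follow with an inverse QFT and measurement. Within a window of width $\mathrm{polylog}N$, this resolves $Z$ exactly. The outcome $Z=0$ occurs with probability $\sim 1/\Delta Z=1/\mathrm{polylog}N$, which is too small on its own. This is the main obstacle.

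To obtain $\Omega(1)$ success, we would push Stage 1 fully to the Heisenberg limit, so that $\Delta Z=\order(1)$. The parameter constraint $\xi\ge c_1N^{-1/2}\log^6N$ in Theorem~\ref{thm:prepare_sss} arises only from keeping $\langle X\rangle=\Theta(N)$, and that condition is not needed here. We then need a lemma bounding the Stage-1 outcome distribution: with $\sigma 2^{\ell_1}\sim\ell_1$, $\phi=2^{-\ell_1}$ and QFT period $N$, the filter $f_m$ becomes a Gaussian of width $\order(\ell_1)$ in $Z$. Using the Poisson formula \eqref{eq:fm=}, we would show that the probability of landing in a window containing $Z=0$ and then collapsing to exactly $Z=0$ is a constant. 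If the width remains $\Theta(\ell_1)$, we would instead use a narrower register, taking $\sigma\sim 2^{-\ell}$ with $\order(1)$ error rather than $e^{-\Omega(\ell^2)}$ error. This gives $\Delta Z=\order(1)$, and the final measurement succeeds with constant probability.

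The total depth is $\order(\ell_1+\ell_2)=\order(\log N)$. The width is $N$ data qubits, $N$ routing ancillae $b$, and $\order(\log N)$ register ancillae.
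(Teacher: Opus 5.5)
Your proposal has a genuine gap in Stage 2, and the obstacle you flag (the $1/\Delta Z$ collapse probability) is not the real one. Option (a) cannot give $\Omega(1)$ success for any choice of filter. Stages 1 and 3 apply only operators diagonal in $Z$ to $\ket{+}^{\otimes N}$, and $\sum_m f_m(Z)^\dagger f_m(Z)=I$ by Parseval. So, summed over all outcomes, the probability of ending at $Z=0$ is exactly $B_0=2^{-N}\binom{N}{N/2}=\Theta(N^{-1/2})$, however sharp the windows are. Concretely, the window center $z_m$ is distributed like the CSS over a range $\sim\sqrt N$, so a window of width $\Delta Z$ contains $Z=0$ only with probability $\sim\Delta Z/\sqrt N$. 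Your condition for $m$ to concentrate near $0$ forces $\Delta Z\sim\sigma/\phi\gtrsim\sqrt N$, i.e.\ no squeezing at all. This is the $\Theta(1/\sqrt N)$-success protocol noted in the caption of Fig.~\ref{fig:recursivebinary}. Pushing Stage 1 to $\Delta Z=\order(1)$ does not change this, so the lemma you plan to prove at the end is false.

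Option (b) is the necessary ingredient, since a non-diagonal adaptive rotation is required, but it fails if you apply it once after squeezing to the Heisenberg limit. The rotation $\ee^{\ii\theta_m Y}$ with $\theta_m\approx-\mu_m/(2N)$ acts as a translation $\ket{\mu_m;\gamma}\mapsto\ket{0;\gamma}$ only up to an error $\order(\mu_m^2/\gamma^3)$ plus smaller terms. The reason is that a state with $Z$-width $\gamma$ is anti-squeezed into an arc of angular size $\sim 1/\gamma$ in $Y$, and tilting that arc by $\mu_m/N$ smears it in $Z$; for a near-Dicke ring, the rotation tilts the ring instead of moving it to the equator. With $|\mu_m|\sim\sqrt N$ and $\gamma=\order(1)$ this error is enormous.

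What you are missing is the paper's recursion:
\begin{itemize}
\item Run $\mathcal{U}_{\phi_k}$ with $\phi_k=1/(\ell_k\gamma_{k-1})$ and only moderate squeezing, $\ell_k\approx\frac18\log\gamma_{k-1}$. Then $\gamma_k\approx 2^{-\ell_k}\mathrm{poly}(\ell_k)\,\gamma_{k-1}$, while the rotation error $\order(2^{3\ell_k}/\gamma_{k-1})$ stays a negative power of $\gamma_{k-1}$.
\item Rotate back to the equator after every round.
\item After $n=\Theta(\log\log N)$ rounds, $\gamma_n$ is a large constant and the accumulated error is $\order(\gamma_n^{-5/8})$.
\item A final $\order(\log N)$-depth $Z$ measurement then succeeds with probability $\Theta(1/\gamma_n)=\Omega(1)$.
\end{itemize}
The total depth stays $\order(\log N)$ because $\log\gamma_k$, and hence $\ell_k$, decreases geometrically, so $\sum_k\ell_k=\order(\ell_1)=\order(\log N)$.
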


Here we focus on the equator Dicke state $M=N/2$ that is the most challenging to prepare \cite{Dicke_log3_24,Dicke_symmetrize26}; we believe the result can be generalized to any Dicke state. Theorem \ref{thm:dicke} is proven in SM; the key observation is that coarse-grained measurements like \eqref{eq:fm=} are \emph{cheaper} than measurements with full resolution: $\order(1)$ depth is sufficient to squeeze $\alr{\Delta Z^2}$ by a constant factor, while measuring $Z$ as in Ref.~\cite{Dicke_circuit24} requires $\Omega(\log N)$ depth. Indeed, the squeezing protocol above only has depth $\order(\log N)$ for SSSs close to the Heisenberg limit, and the Dicke state can be viewed as the \emph{oversqueezed} limit. Nevertheless, we need to repeat the previous protocol in an recursive way with an adaptive rotation scheme following \cite{PhysRevLett.116.093602,Dicke_log3_24} in order to achieve the precise Dicke state, where each recursive stage does not squeeze the state too much such that squeezing in $Z$ direction in almost all measurement outcomes does not deteriorate much when rotated back to the equator.

Since each fanout in Fig.~\ref{fig:sss_circuit} can be realized by $\order(\log N)$-depth unitary circuits with all-to-all connectivity and $\order(N)$ ancillae, Theorem \ref{thm:dicke} also implies an efficient all-to-all unitary circuit to prepare Dicke states: 
\begin{cor}
    The Dicke state $\ket{D^N_{N/2}}$ can be prepared in all-to-all unitary circuits of depth $D=\order(\log^2 N)$ and width $2N+\order(\log N)$ with $\Omega(1)$ success probability.
\end{cor}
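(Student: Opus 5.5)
The plan is to compile the 1d adaptive circuit of Theorem~\ref{thm:dicke} gate by gate into an all-to-all unitary circuit. I would then check that the only expensive ingredient is the nonlocal fanout, which costs $\order(\log N)$ depth per use. The 1d adaptive protocol consists of $\order(\log N)$ layers. These are single-qubit and nearest-neighbour gates, the $\order(\log N)$-qubit ancilla operations (Gaussian state preparation via Lemma~\ref{lem:prepGaussian}, \textsf{GRAY}, inverse QFT), the ancilla-controlled fanouts onto the $N$ data qubits, and mid-circuit measurements with classical feedforward. First I would observe that the adaptivity enters in only two places. One is realizing the fanouts in $\order(1)$ depth (GHZ encoding plus Pauli corrections). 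The other is the outcome-dependent adaptive rotations between recursive stages. With all-to-all connectivity we do not need the GHZ trick at all: a fanout from a control ancilla to $N$ targets is implemented unitarily by a binary tree of \textsf{CNOT}s in depth $\order(\log N)$. Copy the control into fresh ancillae, doubling each layer, then uncompute; this uses $\order(N)$ extra qubits that are returned to $\ket{0}$ and can be reused.

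Second, I would remove the remaining measurement-dependent feedforward by deferred measurement. Each measured ancilla register $A$ is kept coherent. The subsequent classically controlled rotations become quantum-controlled rotations, whose angle depends on the register value $m$. Such a controlled rotation $\prod_{a}$ (control $a$)$\,R(\theta_a)$ on all data qubits again decomposes, as in Fig.~\ref{fig:sss_circuit}, into on-site rotations interleaved with fanouts. So it costs $\order(\log N)$ per controlling bit. Fresh ancillae are used for each stage rather than reset, which keeps the width $2N+\order(\log N)$ up to reusable workspace. Counting bits and stages: each of the $\order(\log N)$ depth layers of the adaptive circuit that touches all data qubits is a fanout. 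Each fanout is replaced by an $\order(\log N)$-depth tree, so the total depth is $\order(\log N)\cdot\order(\log N)=\order(\log^2 N)$. Local gates and $\order(\log N)$-qubit subroutines add at most $\order(\log N)$ depth per stage.

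Finally, success probability is unchanged by deferring measurements. The final measurement of all ancilla registers reproduces the same outcome distribution, and postselection on the good outcomes, which occur with $\Omega(1)$ probability by Theorem~\ref{thm:dicke}, yields $\ket{D^N_{N/2}}$ on the data qubits. The main obstacle I expect is the step where the protocol's adaptive rotation angles depend on the accumulated measurement record. I need to check that the number of fanout-type operations, summed over all recursive stages, is still $\order(\log N)$, rather than $\order(\log N)$ per stage times many stages. I would first verify this by reading off from the proof of Theorem~\ref{thm:dicke} that its total 1d depth $\order(\log N)$ already bounds the total number of fanouts, since each fanout occupies at least one layer. If the angle dependence on previous outcomes were instead arbitrary, I would fall back on keeping those corrections classical: a classically controlled single-qubit layer is allowed in the "$\Omega(1)$ success probability" model. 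That fallback would suffice unless a strictly unitary circuit is demanded.
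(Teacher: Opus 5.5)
Your proposal is correct and is essentially the paper's argument. The paper's proof is only the remark that each fanout in the 1d adaptive protocol of Theorem~\ref{thm:dicke} can be replaced by an $\order(\log N)$-depth all-to-all unitary tree using $\order(N)$ ancillae; there are $\order(\sum_k \ell_k)+\order(\log N)=\order(\log N)$ such fanouts, and your explicit deferral of the mid-circuit measurements fills in a step the paper leaves implicit.

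The obstacle you flag is harmless, but your proposed check would not settle it. The 1d depth bounds the fanout count of the adaptive circuit, yet there the corrections are single-qubit layers rather than fanouts. What actually resolves it is that the stage-$k$ angle $\theta_m=-\mu_m/(2N)$ depends only on that stage's outcome. Completing the square gives $\mu_m=4\pi\gamma^2\phi\,2^{-\ell}m/\sigma^2$, which is linear in $m$. So the coherently controlled rotation factorizes over the $\ell_k$ bits of $m$ into $\order(\ell_k)$ controlled global $Y$-rotations, each costing two fanouts, and the total depth remains $\order(\log^2 N)$.
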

With $\order(N)$ ancillae, this protocol improves upon the previous best depth $\order(\log^3 N)$ obtainable from Ref.~\cite{Dicke_log3_24}, where the $\log N$ factor improvement again comes from replacing exact $Z$ measurements with coarse-grained ones. With $\Theta(N\log N)$ ancillae, $\order(\log N)$ depth is achievable~\cite{Dicke_symmetrize26} using a different approach of symmetrization, which however does not seem to generalize directly for constant space overhead. It is still an open question whether an $\order(\log N)$-depth unitary preparation circuit exists with $\order(N)$ ancillae.

\emph{Performance under noises.}--- Now we consider our adaptive-circuit squeezing protocol in realistic settings with noise, and we expect the Dicke state preparation to behave similarly. Our protocol has two appealing aspects, namely the depth is short and the final SSS has optimal noise robustness~\cite{sss_noise24}. Indeed, Pauli-$X$ errors on the data qubits do not propagate in our protocol and are equivalent to noise channels acting on the final state, so the final squeezing will remain $\order(\xi)$ up to noise strength $p=\order(\xi^2/\log\frac{1}{\xi})$ per qubit per layer during the protocol, if that is the only error source. However, the circuit can be much more sensitive to other noise sources, especially for the global fanout gates in Fig.~\ref{fig:sss_circuit}. In the 1d layout shown in Fig.~\ref{fig:noisyprotocol}(a), the fanout can be realized by initializing each edge ancilla qubit in $\ket{0}$, measuring $S_v=X_v\prod_{e\sim v}X_e$ for each vertex qubit $v$ and its neighboring edges $e$, measuring the edge qubits in $Z_e$, and then adaptively applying single-qubit gates based on those measurement results. The idea is that a final edge is $\ket{1}$ if and only if one of its two neighboring vertices is flipped by $X_v$, so an adaptive feedforward ensures all vertices are flipped at the same time. On the other hand, the global parity of all the $S_v$ measurements determines whether to apply a $Z_a$ on the reference qubit $a$. Nevertheless, the ideal implementation of the fanout above can be destroyed easily by a single measurement fault because the feedforward relies on the global parity of the measurement outcomes.

This motivates several modifications of our protocol to mitigate noise. First, each measurement can be repeated multiple rounds to correct errors or simply postselect the case where measurement results agree.
Second, since any one of the fanouts in Fig.~\ref{fig:sss_circuit} could fail, we consider the modified protocol in Fig.~\ref{fig:recursivebinary} with only one reference ancilla $a$ and one fanout in each round, where each round only aims to squeeze an already squeezed state further by a constant factor. This modified protocol generalizes the previous example $\ket{\Phi}$, and achieves squeezing by moderate amount of postselection of the reference readouts.
Third, we generalize the fanout implementation to 2d as shown in Fig.~\ref{fig:noisyprotocol}(a), which is fundamentally robust against the edge measurement errors. The idea is that the ideal edge measurements should satisfy $\prod Z_e=+1$ for each four edges around a plaquette, which can be utilized to correct the measurement errors like the 2d Ising model as a repetition code~\cite{qmemory02,PhysRevX.5.031043}. In fact, this 2d layout deals with all types of single-qubit errors on edges throughout the protocol, because an $X_e$ error flips the $Z_e$ readout which can be almost corrected, while the $Z_e$ errors do not flip the total parity of $S_v$ measurements.
Among all types of single-qubit errors, the above modifications leave $Z_v$ errors unaddressed, which could flip the $S_v$ measurements and make the feedforward operation ambiguous.
Nevertheless, we show that the above techniques already give rise to significant improvements in realistic settings.

\begin{figure}
    \centering
    \includegraphics[width=\linewidth]{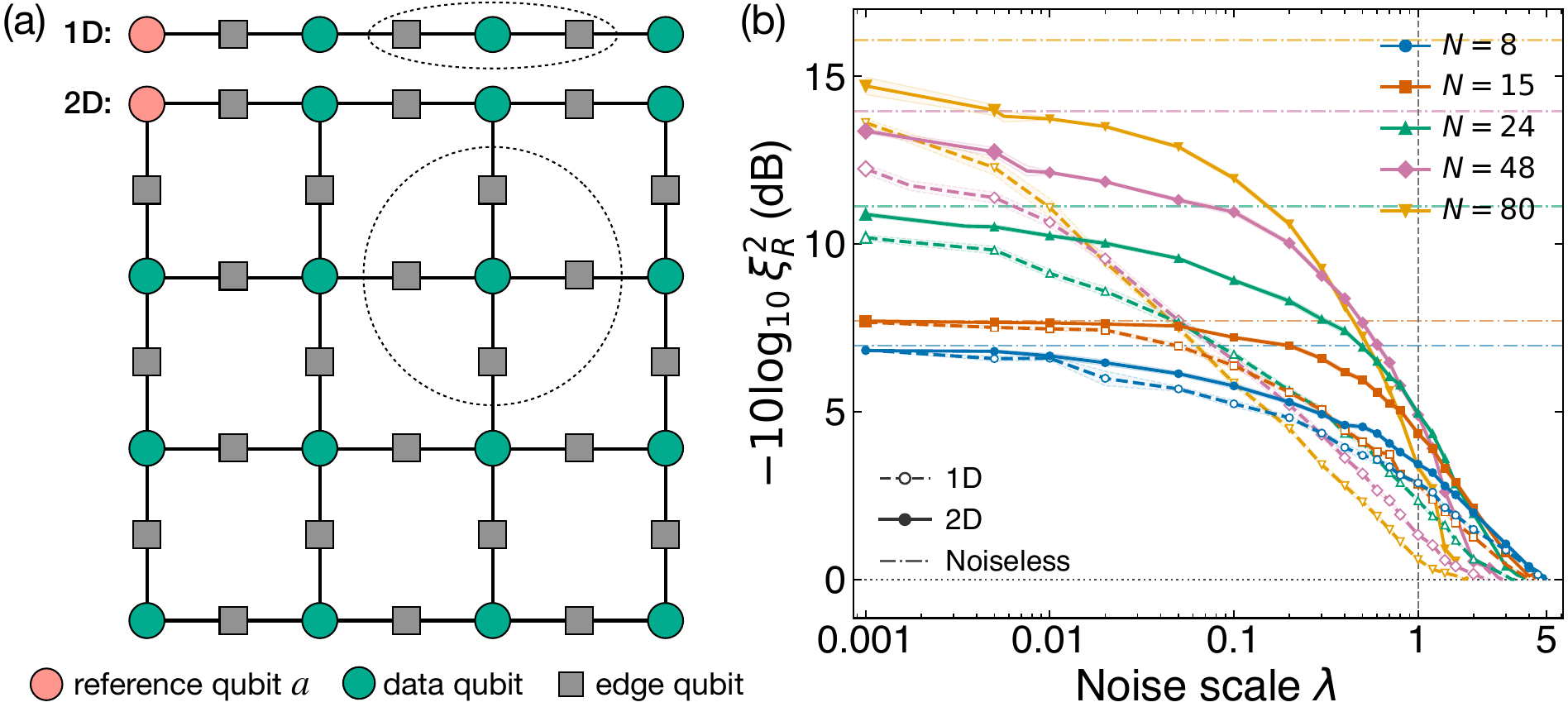}
    \caption{(a) Illustration of the Lieb lattice~\cite{PhysRevLett.62.1201} geometry of the 1d and 2d protocol.
    (b) Wineland spin squeezing parameter $\xi^2_R$ obtained from the 1d and 2d protocol versus the noise scale $\lambda$ for different system sizes $N$. We switch from the protocol in Fig.~\ref{fig:recursivebinary} (smaller marker) to that in Fig.~\ref{fig:sss_circuit} (larger marker) when tuning down $\lambda$. The noise level used in the simulation is set to be $\lambda$ times $(p_2,p_1,p_{\mathrm{readout}},p_{\mathrm{idle}},p_{\mathrm{init}})=(0.8,0.02,1,0.001,0.1)\%$ which are the typical stochastic incoherent raw noise rates for two- and one-qubit depolarizing errors, classical readout bit flips, idle Pauli errors, and initialization errors in the current superconducting qubit devices.}
    \label{fig:noisyprotocol}
\end{figure}

Under a local stochastic error model~\cite{SM}, Fig.~\ref{fig:noisyprotocol}(b) shows the metrological gain of the output state measured by $\xi_R^2=N\alr{\Delta Z^2}/\alr{X}^2$ for different system sizes and noise levels. The details of the simulation can be found in SM. In particular, we measure each $S_v$ twice to postselect, and use a classical minimum-weight-perfect-matching (MWPM) decoder~\cite{Edmonds_1965,qmemory02,lee2022decodingmeasurementpreparedquantumphases} to decode the edge readouts. We use both protocols in Fig.~\ref{fig:recursivebinary} and Fig.~\ref{fig:sss_circuit} with parameters $\ell,\phi,\sigma,k$ optimized, and show whichever has the smaller $\xi_R$. The results indicate that the protocol in Fig.~\ref{fig:recursivebinary} is more robust under larger noises than the protocol in Fig.~\ref{fig:sss_circuit}. A remarkable result is that at the current noise level of superconducting devices, our 2d protocol enables 4.2~dB metrological gain using only 15 data qubits (55 qubits in total including ancillae, $k=5$), which is comparable to recent experiments on analog cold atom platforms~\cite{Franke2023,Eckner2023,Bornet2023}. Although the postselection acceptance rate 0.25\% is not large, the experiment can be repeated utilizing the fast operation speed of superconducting circuits.
Furthermore, if the noise level improves to $\sim 1/2$ of the current values, we see already that $\xi_R$ keeps decreasing with increasing system size, suggesting scalability of our approach up to at least hundreds of qubits.

\emph{Conclusion and outlook.}--- We have developed efficient digital quantum algorithms for preparing SSSs and Dicke states, and shown that strong squeezing necessarily requires substantial bipartite entanglement. A natural next step is to demonstrate these algorithms on real quantum hardware, such as superconducting circuits. Moreover, our digital protocols offer a route to incorporate quantum error correction, which could potentially further improve the squeezing performance and advance quantum computational sensing~\cite{qcompusensing_rev26}. Generalizing the ideas of our protocols, it would also be interesting to see whether squeezing can serve as an algorithmic primitive for accelerating more general quantum information processing tasks.

\emph{Acknowledgements.}---
We thank Zhenning Liu, Yilun Yang, Georgios Styliaris, and Emanuel Knill for valuable discussions. C.Y. was supported by the Department of Energy under Quantum Pathfinder Grant DE-SC0024324, and is partially supported by the National Science Foundation under Award No.~2016245 and by the Stanford Q-FARM Bloch Postdoctoral Fellowship in Quantum Science and Engineering.
M.Y. was supported by the MCQST Distinguished Postdoc Fellowship funded by the Deutsche Forschungsgemeinschaft (DFG) under Germany’s Excellence Strategy EXC2111-390814868, and is supported in part by DFG under the Excellence Strategy—Cluster of Excellence Matter and Light for Quantum Computing (ML4Q) EXC2004/1–390534769.
R.W. acknowledges funding from
the U.S. Department of Energy, Quantum Systems Accelerator, NSF PFC grant No. PHYS 2317149, and the Natural Sciences and Engineering Research Council of Canada (NSERC) through a Canada Graduate Research Scholarship – Doctoral (CGRS D).

\bibliography{biblio}

\onecolumngrid

\newpage

\setcounter{equation}{0}
\setcounter{figure}{0}
\setcounter{page}{1}
\setcounter{section}{0}
\renewcommand{\theequation}{S\arabic{equation}}
\renewcommand{\thefigure}{S\arabic{figure}}
\renewcommand{\thesection}{S\arabic{section}}
\renewcommand{\thepage}{S\arabic{page}}

\begin{center}
    {\large \textbf{Supplementary Material: Digital Quantum Algorithms for Generating and Utilizing Spin Squeezed States}}
\end{center}

\tableofcontents

\section{Preparing spin squeezed states: Proof of Theorem \ref{thm:prepare_sss}}


\subsection{Set up and parameters}

In the main text, we have shown the final (unnormalized) state on data qubits for measurement outcome (\emph{4}) is \begin{equation}
    \ket{\psi_m}= f_m(Z) \ket{+}^{\otimes N},
\end{equation}
where $m=-2^{\ell-1},\cdots,2^{\ell-1}-1$, and \begin{equation}\label{eq:f'm=}
    f_m(Z)=2^{-\frac{\ell}{2}}\sum_{x} \alpha_x \ee^{2\pi\ii x\lr{\phi Z-2^{-\ell}m}} = 2^{-\frac{\ell}{2}} \frac{\mathcal{N}}{\sigma} \sum_{q\in \mathbb{Z}} \ee^{-\frac{\pi}{\sigma^2}\lr{\phi Z-2^{-\ell}m}^2}
\end{equation}
(summed over $x=-2^{\ell-1},\cdots,2^{\ell-1}-1$, same below)
with 
\begin{equation}\label{eq:alphax=}
    \alpha_x = \mathcal{N}\ee^{-\pi\sigma^2x^2}
\end{equation}
normalized by $\sum_x |\alpha_x|^2=1$ so that $\norm{f_m(Z)}\le 1$. The parameters are chosen by \begin{equation}\label{eq:phisigma=}
    \phi=N^{-1/2}/\ell,\quad \sigma = 2^{-\ell}\ell,
\end{equation}
so that \begin{equation}
    \mathcal{N}^2 = \Theta(\sigma)=\Theta(2^{-\ell}\ell)
\end{equation}
from estimation a Gaussian integral. We will use such estimates for Gaussians frequently.

The protocol finally measures the system into $\ket{\psi_m}$ with probability $\braket{\psi_m|\psi_m}$.
We aim to show that with probability \begin{equation}\label{eq:Psuccess}
    P_{\rm success}\ge 1-\frac{1}{\ell},
\end{equation}
the outcome $m$ satisfies \begin{subequations}
    \begin{align}
    \expval{\psi_m}{X}&= \Theta(N), \label{eq:X=N_} \\
    \expval{\psi_m}{\Delta Z^2} &= \order(\ell^{10}) 2^{-2\ell}N. \label{eq:DZ2=xi_}
\end{align}
\end{subequations} 
This proves Theorem \ref{thm:prepare_sss} when combined with Lemma \ref{lem:prepGaussian} (which we prove in the latter Section \ref{sec:Gauss}) on the $\order(\ell)$ protocol depth, by choosing $\ell$ as the minimal integer obeying \begin{equation}\label{eq:ell=xi}
    2^{\ell} \ge \frac{1}{\xi} \log^6 \frac{1}{\xi}.
\end{equation}
Although we will restrict $\ell$ by \begin{equation}\label{eq:c1'}
    2^\ell/\ell \le c_1'\sqrt{N},
\end{equation}
for some sufficiently small constant $c_1'$, \eqref{eq:ell=xi} can be satisfied for any \begin{equation}
    \xi \ge c_1 \frac{\log^6 N}{\sqrt{N}} ,
\end{equation}
with constant $c_1$ determined by $c_1'$.

\subsection{Truncation in $z$}

Expanding $\ket{+}^{\otimes N}$ into Dicke basis $\ket{D_z}:=\ket{D^N_{\frac{N-z}{2}}},z=0,\pm2,\cdots$ (assuming even $N$ for simplicity), \begin{equation}\label{eq:CSS=}
    \ket{+}^{\otimes N} = \sum_z \sqrt{B_z} \ket{D_z}, \where B_z = 2^{-N} {N \choose \frac{N+z}{2}}=\sqrt{\frac{2}{\pi N}}\ee^{-\frac{z^2}{2N}}\lr{1+\order(N^{-1})},
\end{equation}
for all $|z|=\order(N^{2/3})$. $B_z$ satisfies Lipschitz condition \begin{equation}\label{eq:Lipshitz}
    |B_z - B_{z'}|\le |z-z'|/N,
\end{equation}
from the Lipschitz condition of the Gaussian $\ee^{-\frac{z^2}{2N}}$, where the $\order(N^{-1})$ part in \eqref{eq:CSS=} is subdominant comparing to the right hand side of \eqref{eq:Lipshitz}.

Denoting \begin{equation}\label{eq:Zq=}
    \mathcal{Z}_q=\{z\in \mathbb{Z}_{\rm even}: q-\frac{1}{2}\le \phi z<q+\frac{1}{2}\},
\end{equation}
The classical function $f_m(z)$ from \eqref{eq:f'm=} is concentrated at $z=z_m$ where \begin{equation}\label{eq:zm=}
     z_m=2^{-\ell}\phi^{-1}m
\end{equation}  
in the $q=0$ period $\mathcal{Z}_0$. 
We can focus on this single period, because we will only consider two observables \begin{equation}\label{eq:gz=}
    g(Z) = I \text{ or } (Z-z_m)^2/N,
\end{equation}
whose expectation value is \begin{align}\label{eq:g-g0}
    \bra{\psi_m} g(Z) \ket{\psi_m}-\bra{\psi_m} g(Z) \ket{\psi_m}_0 &= \sum_{q\in \mathbb{Z}:q\neq 0} \sum_{z\in \mathcal{Z}_q} \abs{f_m(z)}^2 B_z g(z) \nonumber\\
    &= 2\sum_{q=1,2,\cdots} \order(\phi^{-1}N^{-1/2}) \poly{q\phi^{-1}N^{-1/2}} \exp\mlr{-\frac{(q-\frac{1}{2})^2}{2N\phi^2} } + \exp\mlr{-\Omega(N^{1/3})} \nonumber\\
    &= 2\sum_{q=1,2,\cdots} \order(\ell) \poly{q\ell} \exp\mlr{-\frac{(q-\frac{1}{2})^2\ell^2}{2} } + \exp\mlr{-\Omega(N^{1/3})}  \nonumber\\
    &= \exp\mlr{-\Omega(\ell^2)},
\end{align}
using \eqref{eq:CSS=} and \eqref{eq:phisigma=}, where \begin{equation}\label{eq:g0=}
    \bra{\psi_m} g(Z) \ket{\psi_m}_0:= \sum_{z\in \mathcal{Z}_0} \abs{f_m(z)}^2 B_z g(z).
\end{equation}
The $\exp\mlr{-\Omega(N^{1/3})}$ term in \eqref{eq:g-g0} comes from the $|z|=\Omega(N^{2/3})$ contribution, where the exponentially small $B_z$ dominates the other factors scaling at most polynomially.

\subsection{$Z$ variance}
To calculate $\bra{\psi_m} g(Z) \ket{\psi_m}_0$, observe that for the second choice in \eqref{eq:gz=}, \begin{align}
    \abs{B_z g(z)-B_{z_m}g(z_m)} \le \lr{\max_z |B_z|} (z-z_m)^2/N \le \frac{1}{\sqrt{N}} \lr{\frac{z-z_m}{\sqrt{N}}}^c 
\end{align}
with $c=2$. The first choice in \eqref{eq:gz=} also satisfies the above equation with $c=1$.
As a result, letting \begin{equation}
    F_m=\sum_{z\in \mathcal{Z}_0} \abs{f_m(z)}^2,
\end{equation}
we have \begin{align}\label{eq:gz-gzm}
    \bra{\psi_m} g(Z) \ket{\psi_m}_0 - B_{z_m}g(z_m)F_m &= \sum_{z\in \mathcal{Z}_0} \abs{f_m(z)}^2 \mlr{B_z g(z) - B_{z_m}g(z_m)} \nonumber\\
    &\le \sum_{z\in \mathcal{Z}_0} \abs{f_m(z)}^2 \frac{1}{\sqrt{N}} \lr{\frac{z-z_m}{\sqrt{N}}}^c \nonumber\\
    &= N^{-(1+c)/2} \order\lr{(\sigma/\phi)^{1+c}} \nonumber\\
    &= \order(\ell^{2(1+c)}) 2^{-(1+c)\ell}.
\end{align}
We have used $f_m(z)$ in $\mathcal{Z}_0$ concentrates in window $|z-z_m|=\order(\sigma/\phi)$ for all \begin{equation}\label{eq:m<}
    |m|\le 2^{\ell-2}
\end{equation}
so that $z_m$ is not too close to the boundary of $\mathcal{Z}_0$. The probability $P_{|m|\le 2^{\ell-2}}$ to measure this range \eqref{eq:m<} is close to $1$: \begin{align}
    &1-P_{|m|\le 2^{\ell-2}}=1-\sum_{m:|m|\le 2^{\ell-2}} \braket{\psi_m|\psi_m} \nonumber\\
    &\quad =  \sum_{m:|m|> 2^{\ell-2}} \braket{\psi_m|\psi_m} = \sum_{m:|m|> 2^{\ell-2}} \sum_{z\in \mathcal{Z}_0} \abs{f_m(z)}^2 B_z \nonumber\\
    &\quad = \sum_{z\in \mathcal{Z}_0:|\phi z|\le 1/4} B_z \sum_{m:|m|> 2^{\ell-2}} \ee^{-\Omega(\sigma^{-2})} + \sum_{z\in \mathcal{Z}_0:|\phi z|> 1/4} \frac{1}{\sqrt{N}} \ee^{-\Omega(\phi^{-2}/N)} \sum_{m:|m|> 2^{\ell-2}} \order(1) \hspace{30em}\mytag{\text{Using \eqref{eq:f'm=} and \eqref{eq:CSS=}}} \nonumber\\
    &\quad =\order(1)\cdot 2^{\ell}\cdot \ee^{-\Omega(\sigma^{-2})} + \order(\phi^{-1}/\sqrt{N})\ee^{-\Omega(\ell^2)}2^{\ell} \nonumber\\
    &\quad = \ee^{-\Omega(\ell^2)}. \label{eq:Pm<}
\end{align}

\comment{
The concentration of  also implies \begin{equation}\label{eq:Fm=}
    F_m = \Theta(\phi^{-1}\sigma)=\Theta(\sqrt{N}\ell^{1+2\kappa}2^{-\ell}),
\end{equation}
using \eqref{eq:Nf=1}.
}

Combining \eqref{eq:gz-gzm} with \eqref{eq:g-g0}, \begin{equation}
    \bra{\psi_m} g(Z) \ket{\psi_m} = B_{z_m}g(z_m)F_m + \order(\ell^{3(1+c)}) 2^{-(1+c)\ell}.
\end{equation}
For the two choices \eqref{eq:gz=} with $c=1,2$ respectively, we get \begin{subequations}
    \begin{align}
    \braket{\psi_m|\psi_m} &= B_{z_m} F_m + \order(\ell^{6}) 2^{-2\ell}, \\
    \bra{\psi_m}(Z-z_m)^2\ket{\psi_m}&= N\order(\ell^{9})  2^{-3\ell}. \label{eq:Z2psim=}
\end{align}
\end{subequations} 

The $Z$ variance is then bounded \begin{equation}\label{eq:Z2m<}
    \alr{\Delta Z^2}_{\psi_m} \le \frac{\bra{\psi_m}(Z-z_m)^2\ket{\psi_m}}{\braket{\psi_m|\psi_m}} = N \order(\ell^{10})  2^{-2\ell},
\end{equation}
as long as \begin{equation}\label{eq:Pm=2-l}
    \braket{\psi_m|\psi_m} \ge 2^{-\ell}/\ell.
\end{equation}
Since the probability of obtaining $m$ that violates \eqref{eq:Pm=2-l} is \begin{equation}
    \sum_{m:\braket{\psi_m|\psi_m}\ll 2^{-\ell}/\ell} \braket{\psi_m|\psi_m} \le \sum_{m:\braket{\psi_m|\psi_m}\ll 2^{-\ell}/\ell} 2^{-\ell}/\ell \le 1/\ell,
\end{equation}
where the $m$s includes all of those outside range \eqref{eq:m<} due to \eqref{eq:Pm<}, with success probability \eqref{eq:Psuccess} $m$ indeed satisfies \eqref{eq:Pm=2-l} and \eqref{eq:Z2m<}. This establishes \eqref{eq:DZ2=xi_}.

\subsection{$X$ polarization}
Before the inverse QFT stage of the protocol, the state on $A$ $+$ data qubits is \begin{equation}\label{eq:Psi=thetax}
    \ket{\Psi}=\sum_x \alpha_x \ket{x}\otimes \ket{\theta_x}^{\otimes N},
\end{equation}
where \begin{equation}\label{eq:thetax}
    \ket{\theta_x} =\cos \theta_x \ket{+} + \sin \theta_x \ket{-}
\end{equation}
with \begin{equation}\label{eq:thetax<}
    |\theta_x|=|4\pi\phi x|\le \pi 2^{\ell+1}/(\ell \sqrt{N}) \le 2\pi c_1'
\end{equation}
from \eqref{eq:c1'}.
Note that a phase may be ignored in \eqref{eq:thetax} that is irrelevant for the following discussion. For a given $x$, observable $X$ thus follows a binomial distribution \begin{equation}
    \mathrm{Prob}[X=N-2k] = {N\choose k} \cos^{2k} \theta_x \sin^{2(N-k)} \theta_x,
\end{equation}
that concentrates exponentially at $X=N\cos\theta_x$. With a sufficiently small constant $c'_1$, $|\theta_x|\ll 1$ so that \begin{align}
    \mathrm{Prob}[X<N/2]=\sum_{k>N/4}\mathrm{Prob}[X=N-2k] = \ee^{-\Omega(N)}.
\end{align}
This holds for the whole state $\ket{\Psi}$ because observable $x$ is independent with $X$. 

As a result, for every $m$ with probability \eqref{eq:Pm=2-l} that is at most polynomially small in $N$, the normalized $\ket{\psi_m}$ is supported in the $X\ge N/2$ subspace up to $\ee^{-\Omega(N)}$ error, which establishes \eqref{eq:X=N_}. This finishes the proof of Theorem \ref{thm:prepare_sss}.

\subsection{$Y$ variance and noise robustness for metrology}

From \eqref{eq:Psi=thetax}, we also get for each $x$, \begin{equation}\label{eq:ProbY>}
    \mathrm{Prob}[|Y|>y] = \ee^{-\Omega(y^2/N)}
\end{equation}
for any \begin{equation}\label{eq:y>Ntheta}
    y\ge N\theta_x\cdot \ell\gg N\theta_x
\end{equation}
from bounding the binomial distribution. Similar to the arguments above, for every $m$ with probability \eqref{eq:Pm=2-l} that is much larger than \eqref{eq:ProbY>}, the normalized $\ket{\psi_m}$ is supported in the $Y$ subspace $|Y|\le y$ up to $\ee^{-\Omega(y^2/N)}$ error, for any $y\ge \pi\sqrt{N}2^{\ell+1}$ from \eqref{eq:thetax<}. By choosing a growing set of $y_1<y_2<\cdots$ in this range and summing over the fast decaying probabilities from each $Y$ windows $[y_1,y_2],[y_2,y_3],\cdots$, we have \begin{equation}\label{eq:Y2=NO}
    \alr{\Delta Y^2}_{\psi_m} = N\cdot \order(2^\ell) = N\cdot \widetilde{\order}(\xi^{-2})
\end{equation}
using \eqref{eq:ell=xi}, where $\widetilde{\order}$ hides a $\mathrm{polylog}(\xi)$ factor. This almost saturates the Heisenberg uncertainty relation $\alr{\Delta Y^2}=\Omega(N/\xi^2)$ for SSSs, which make them optimally robust for noisy metrology \cite{sss_noise24}. The idea is that noise may mix the polarization along different directions with the signal direction $Z$, so it is desirable to have fluctuation along $Y$ as small as possible. If \eqref{eq:Y2=NO} holds, then the squeezing parameter remains $\widetilde{\order}(\xi)$ if the SSS undergoes arbitrary single-qubit noise with strength $p\ll\xi^2$ per qubit.

\section{Preparing Dicke states deterministically: Proof of Theorem \ref{thm:dicke}}

\subsection{Theorem \ref{thm:prepare_sss} as a transformation of Gaussian states}

In this section, we use Stinespring dilation to represent the state of different measurement outcomes into one state. For example, the dilated protocol $\mathcal{U}$ of Theorem \ref{thm:prepare_sss} followed by resetting $A$ to the all-zero state $\ket{0}_A$, prepares the state \begin{equation}\label{eq:USS=}
    \mathcal{U} \ket{0}_A\ket{+}^{\otimes N} = \ket{0}_A\otimes \sum_{m\in \mathcal{M}} \ket{m}_{\rm C}\otimes \ket{\psi_m} + \order\lr{\frac{1}{\log(1/\xi)}},
\end{equation}
where $\ket{m}_{\rm C}$ represents state $m$ in a classical register $\rm C$, and $\mathcal{U}$ is an isometry $\mathcal{U}^\dagger \mathcal{U}=I$ that maps to an enlarged Hilbert space including the classical register as a subsystem. $\mathcal{M}$ is the set of successful measurement outcomes, so we also includes an error term $\order\lr{\frac{1}{\log(1/\xi)}}$ corresponding to the failure probability, which strictly speaking means $\norm{\ket{\Psi}-\ket{\Psi'}}=\order\lr{\frac{1}{\log(1/\xi)}}$ where $\ket{\Psi},\ket{\Psi'}$ are the left (first term of right) hand side of \eqref{eq:USS=}.

We would like to iterate protocol $\mathcal{U}$ multiple times; however, the initial and final states in \eqref{eq:USS=} are not in the same form. We aim to rewrite \eqref{eq:USS=} as a transformation of Gaussian states. First, the initial state can be approximated by \begin{equation}\label{eq:CSS=Gauss}
    \ket{+}^{\otimes N} = \ket{0,\sqrt{N}} + \order\lr{\frac{1}{N}},
\end{equation}
using \eqref{eq:CSS=}, where \begin{equation}\label{eq:mugamma=}
    \ket{\mu;\gamma}:= \mathcal{N}_{\mu;\gamma} \sum_z \ee^{-\frac{(z-\mu)^2}{4\gamma^2}} \ket{D_z},
\end{equation}
with $\mathcal{N}_{\mu;\gamma}$ being normalization such that $\norm{\ket{\mu;\gamma}}=1$. Recall that $z$ sums over $z=0,\pm2,\cdots,\pm N$. We will only consider cases with $\gamma\gg 1$
so that $\ket{\mu;\gamma}$ is supported in many Dicke states.

We consider protocol $\mathcal{U}$ acting on a general Gaussian initial state: \begin{align}
    \mathcal{U}_\phi \ket{0}_A \ket{0;\gamma_0} = \ket{0}_A\otimes \sum_{m} \ket{m}_{\rm C}\otimes F_m(\phi Z)\ket{0;\gamma_0} ,
\end{align} 
where $\mathcal{U}_\phi$ is the protocol of Theorem \ref{thm:prepare_sss} with a different parameter \begin{equation}\label{eq:phi=gamma0}
    \phi = 1/(\ell \gamma_0),
\end{equation}
(so $\gamma_0=\sqrt{N}$ returns to the previous case \eqref{eq:phisigma=}; the other parameters like $\sigma$ are the same as before) and $F_m(\phi Z)=f_m(Z)$ \eqref{eq:fm=} with the $\phi$ dependence now explicit. 

Let $F'_m(\phi Z)=\mathcal{N}_f\exp\mlr{-\frac{\pi}{\sigma^2}\lr{\phi Z-2^{-\ell}m}^2}$ be the $q=0$ term of \eqref{eq:f'm=} with $\mathcal{N}_f=2^{-\frac{\ell}{2}} \frac{\mathcal{N}}{\sigma}$, \begin{align}\label{eq:F'-F'}
    \norm{\mlr{F_m(\phi Z) - F'_m(\phi Z)}\ket{0;\gamma_0}} &\le \mathcal{N}_f \sum_{q\in \mathbb{Z}:q\neq 0}\norm{\exp\mlr{-\frac{\pi}{\sigma^2}\lr{\phi Z-2^{-\ell}m+q}^2} \ket{0;\gamma_0} } \nonumber\\
    &\le \mathcal{N}_f \sum_{q\in \mathbb{Z}:q\neq 0} \sum_{q'\in \mathbb{Z}} \norm{\exp\mlr{-\frac{\pi}{\sigma^2}\lr{\phi Z-2^{-\ell}m+q}^2}\mathcal{P}_{q'} \ket{0;\gamma_0} } \nonumber\\
    &= \mathcal{N}_f \sum_{q\in \mathbb{Z}:q\neq 0} \sum_{q'\in \mathbb{Z}} \ee^{-\Omega\lr{\frac{(q-q')^2}{\sigma^2}}} \ee^{-\Omega\lr{\frac{(q')^2}{\phi^2\gamma_0^2}}} \nonumber\\
    &= \ee^{-\Omega(\ell^2)}.
\end{align}
The third line here uses $\norm{\mathcal{P}_{q'}\ket{0;\gamma_0}}=\exp\mlr{-\Omega\lr{\frac{(q')^2}{\phi^2\gamma_0^2}}}$ from \eqref{eq:mugamma=} where $\mathcal{P}_{q'}$ projects onto $z\in \mathcal{Z}_{q'}$ subspace \eqref{eq:Zq=}, and assumed $m$ in range \eqref{eq:m<} to bound the filter function. The last line uses \eqref{eq:phi=gamma0} and \eqref{eq:phisigma=} so that the sum is dominated by $q=q'=\pm1$. Similar to \eqref{eq:Pm<}, we bound the probability of obtaining range \eqref{eq:m<}: \begin{align}
    \sum_{m:|m|>2^{\ell-2}}\norm{F'_m(\phi Z)\ket{0;\gamma_0}}^2 &\le \sum_{m:|m|>2^{\ell-2}}\mlr{\norm{F'_m(\phi Z)\mathcal{P}_0\ket{0;\gamma_0}}^2 +\sum_{q\in \mathbb{Z}:q\neq 0} \order(1) \norm{\mathcal{P}_q \ket{0;\gamma_0}}^2 }\hspace{20em}\nonumber\\
    &= \sum_{m:|m|>2^{\ell-2}} \ee^{-\Omega(\sigma^{-2})}\norm{\mathcal{P}_{|\phi z|\le 2^{-3}}\ket{0;\gamma_0} }^2 + \order(1)\cdot \norm{\mathcal{P}_{|\phi z|> 2^{-3}}\ket{0;\gamma_0} }^2 + \ee^{-\Omega(\ell^2)} \nonumber\\
    &= \ee^{-\Omega(\ell^2)},
\end{align}
where we have used $\mathcal{P}_{|\phi z|\le 2^{\ell-3}}$ to project onto the $z$ subspace in the subscript, and $\norm{\mathcal{P}_{|\phi z|\le 2^{-3}}\ket{0;\gamma_0} }\le 1, \norm{\mathcal{P}_{|\phi z|> 2^{-3}}\ket{0;\gamma_0} }=\ee^{-\Omega(\ell^2)}$. Therefore we can restrict the sum over $m$ in \eqref{eq:U=F'} to \eqref{eq:m<}, for which we can further apply \eqref{eq:F'-F'}: \begin{equation}\label{eq:U=F'}
    \mathcal{U}_\phi \ket{0}_A \ket{0;\gamma_0} = \ket{0}_A\otimes \sum_{m:|m|\le 2^{\ell-2}} \ket{m}_{\rm C}\otimes F'_m(\phi Z)\ket{0;\gamma_0} + \ee^{-\Omega(\ell^2)}.
\end{equation}

Finally, \begin{align}
    F'_m(\phi Z)\ket{0;\gamma_0} = \mathcal{N}_{\mu;\gamma} \mathcal{N}_f \sum_z \ee^{-\frac{\pi}{\sigma^2}\lr{\phi z-2^{-\ell}m}^2} \ee^{-\frac{z^2}{4\gamma_0^2}} \ket{D_z} =\sqrt{P_m} \ket{\mu_m; \gamma},
\end{align}
where $P_m\ge 0$ is normalization, and \begin{equation}\label{eq:mum<}
    |\mu_m|\le \phi^{-1}=\ell \gamma_0, \quad \gamma=\lr{\frac{4\pi \phi^2}{\sigma^2} + \frac{1}{\gamma_0^2}}^{-1/2} =\Theta(\ell \sigma \gamma_0).
\end{equation}
\eqref{eq:U=F'} then becomes \begin{equation}\label{eq:U=Gauss}
    \mathcal{U}_\phi \ket{0}_A \ket{0;\gamma_0} = \ket{0}_A\otimes \sum_{m:|m|\le 2^{\ell-2}} \sqrt{P_m} \ket{m}_{\rm C}\otimes \ket{\mu_m;\gamma} + \ee^{-\Omega(\ell^2)}
\end{equation}
which transforms initial Gaussian state $\ket{0;\gamma_0}$ to Gaussian states with much smaller standard deviation $\gamma\ll \gamma_0$, up to small errors.

\subsection{Rotating Gaussian states close to the equator}

For each $\ket{\mu_m;\gamma}$ in \eqref{eq:U=Gauss} satisfying \eqref{eq:mum<} and \begin{equation}\label{eq:gamma<sqrtN}
    \gamma_0=\order(\sqrt{N}),
\end{equation}
there exists a rotation angle $\theta_m$ such that \begin{equation}\label{eq:thetam=}
    \ee^{\ii \theta_m Y}\ket{\mu_m;\gamma} = \ket{0;\gamma} + \order(N^{-1/3})+ \order(\gamma_0)\ee^{-\Omega\lr{2^\ell}} +\order(2^{3\ell}/\gamma_0),
\end{equation}
where $Y=\sum_i Y_i$.

\begin{proof}
The spin operator\begin{equation}
    Y = -\ii \sum_z \sqrt{\frac{N}{2}\lr{\frac{N}{2}+1}-\frac{z}{2}\lr{\frac{z}{2}+1}} \ket{D_{z-2}} \bra{D_z} + \mathrm{H.c.},
\end{equation}
can be approximated by \begin{equation}\label{eq:PYP=}
    \norm{\lr{Y-\frac{N}{2}T}\mathcal{P}_{|z|\le w}} = \order\lr{\frac{w^2}{N}}
\end{equation}
restricting to $w$-width window near the equator $z=0$. The operator \begin{equation}
    T=-\ii \sum_z^\infty \ket{D_{z-2}} \bra{D_z}+\mathrm{H.c.}
\end{equation}
acts on an enlarged Hilbert space containing all $z\in \mathbb{Z}_{\rm even}$, where $\sum_z^\infty$ sums over this set. It effectively generates translation for low-momentum states $\ket{\psi}=\sum_z^\infty \psi(z)\ket{D_z}=\int^\pi_{-\pi}\frac{\dd k}{2\pi} \hat{\psi}(k)\ket{k}_{\rm mo}$:  \begin{align}\label{eq:translate}
    \ee^{\ii \eta T} \ket{\psi} &= \int^\pi_{-\pi}\frac{\dd k}{2\pi} \hat{\psi}(k)\ee^{\ii \eta T}\ket{k}_{\rm mo} \nonumber\\
    &= \int^\pi_{-\pi}\frac{\dd k}{2\pi} \hat{\psi}(k)\ee^{-\ii 2\eta[k+\order(k^3)] }\ket{k}_{\rm mo} \nonumber\\
    &= \int^\pi_{-\pi}\frac{\dd k}{2\pi} \hat{\psi}(k)\ee^{-\ii 2\eta k }\ket{k}_{\rm mo}+ \int^\pi_{-\pi}\frac{\dd k}{2\pi} \hat{\psi}(k)\ee^{-\ii 2\eta k }\order(\eta k^3)\ket{k}_{\rm mo} \nonumber\\ 
    &= \sum_z^\infty \psi(z+4\eta)\ket{D_z} + \order(\eta)\lr{\int^\pi_{-\pi}\frac{\dd k}{2\pi} \abs{\hat{\psi}(k)}^2 k^6}^{1/2},
\end{align}
where we have used $\ket{k}_{\rm mo}=\sum_z^\infty \ee^{\ii k(z/2)}\ket{D_z}$ is eigenstate of $T$ with eigenvalue $\ii(\ee^{\ii k}-\ee^{-\ii k})=-2k+\order(k^3)$. For Gaussian $\ket{\psi}=\ket{\mu;\gamma}_\infty$ that is \eqref{eq:mugamma=} with $z$ summed over all even integers, \begin{align}
    \hat{\psi}(k) &=\mathcal{N}_{\mu;\gamma} \sum_z^\infty \ee^{-\ii k(z/2)}\ee^{-\frac{(z-\mu)^2}{4\gamma^2}} \nonumber\\
    &=\mathcal{N}_{\mu;\gamma}\sqrt{\pi}\gamma \sum_{n\in \mathbb{Z}}\ee^{-\ii(k+2\pi n)\mu} \ee^{-\frac{\gamma^2}{4}(k+2\pi n)^2}
\end{align}
is concentrated in $|k|=\order(\gamma^{-1})$ using Poisson summation formula, so \eqref{eq:translate} becomes \begin{equation}\label{eq:translate1}
    \ee^{\ii \eta T} \ket{\mu;\gamma}_\infty = \ket{\mu+4\eta;\gamma}_\infty + \order(\eta/\gamma^{3}).
\end{equation}

For \begin{equation}
    |\mu|\le w/2\ll N, 
\end{equation}
the projection error in the $w$-width window is small
\begin{equation}\label{eq:Pmugamma=}
    \mathcal{P}_{|z|\le w}\ket{\mu;\gamma} = \ket{\mu;\gamma} + \ee^{-\Omega\lr{\frac{w^2}{\gamma^2}}} = \ket{\mu;\gamma}_\infty + \ee^{-\Omega\lr{\frac{w^2}{\gamma^2}}},
\end{equation}
so that
\begin{align}
    &\lr{\ee^{\ii \theta Y} - \ee^{\ii \frac{\theta N}{2}T} } \ket{\mu_m;\gamma}_\infty =\ii \int^{\theta N/2}_0 \dd \eta \ee^{\ii \lr{\theta -\frac{2\eta}{N}}Y} \lr{\frac{2}{N}Y-T} \ee^{\ii \eta T} \ket{\mu_m;\gamma}_\infty \hspace{20em}\mytag{\text{Duhamel identity}} \\
    &\qquad \le\int^{\theta N/2}_0 \dd \eta \norm{\lr{\frac{2}{N}Y-T} \ket{\mu;\gamma}_\infty} + \int^{\theta N/2}_0 \dd \eta \order(\eta/\gamma^3) \mytag{\text{Using \eqref{eq:translate1}, }\norm{T}\le 2\text{ and }\norm{Y}= N}\\
    &\qquad \le\int^{\theta N/2}_0 \dd \eta \norm{\lr{\frac{2}{N}Y-T} \mathcal{P}_{|z|\le w}} + \int^{\theta N/2}_0 \dd \eta \ee^{-\Omega\lr{\frac{w^2}{\gamma^2}}} +\order(\mu_m^2/\gamma^3)  \mytag{\text{Using \eqref{eq:Pmugamma=} and }\theta=-\mu_m/(2N)} \\ 
    &\qquad = \order(|\mu_m|w^2/N^2)+ \order(|\mu_m|)\ee^{-\Omega\lr{\frac{w^2}{\gamma^2}}} +\order(\mu_m^2/\gamma^3)\mytag{\text{Using \eqref{eq:PYP=}}}\\ \label{eq:Y-T<}
\end{align}
Here we have used $|\mu|\le |\mu_m|$ from the choice of $\theta$, and the last line follows from \eqref{eq:Pmugamma=} assuming \begin{equation}\label{eq:mum<w}
    |\mu_m|\le w/2.
\end{equation}

Combining \eqref{eq:Y-T<} with \eqref{eq:Pmugamma=} and \eqref{eq:translate1} yields 
\begin{align}\label{eq:mu_to_0}
    \ee^{\ii \theta Y} \ket{\mu_m;\gamma} &= \ee^{\ii \theta Y} \ket{\mu_m;\gamma}_\infty +\ee^{-\Omega\lr{\frac{w^2}{\gamma^2}}} \nonumber\\
    &= \ket{0;\gamma} + \order(\ell \gamma_0 w^2/N^2)+ \order(\ell \gamma_0)\ee^{-\Omega\lr{\frac{w^2}{\gamma^2}}} +\order(2^{3\ell}/\gamma_0)
\end{align}
where we have used \eqref{eq:phi=gamma0},\eqref{eq:mum<} and \eqref{eq:phisigma=}. \eqref{eq:mu_to_0} becomes \eqref{eq:thetam=} by setting \begin{equation}
    w=2/\phi = 2\ell \gamma_0
\end{equation}
that satisfies \eqref{eq:mum<}, and using \eqref{eq:gamma<sqrtN} and $\ell=\order(\log N)$.
\end{proof}

\subsection{The whole protocol}
We use measurement outcome $m$ in to \eqref{eq:U=Gauss} adaptively apply rotation $\theta_m$ in \eqref{eq:thetam=}. This procedure $\mathcal{R}_\phi$ ensures that \begin{equation}
    \mathcal{R}\mathcal{U}_\phi \ket{0}_A \ket{0;\gamma_0} = \ket{0}_A\ket{0;\gamma}\otimes \sum_{m:|m|\le 2^{\ell-2}} \sqrt{P_m} \ket{m}_{\rm C}  + \epsilon
\end{equation}
where \begin{equation}
    \epsilon = \ee^{-\Omega(\ell^2)}+\order(N^{-1/3})+ \order(\gamma_0)\ee^{-\Omega\lr{2^\ell}} +\order(2^{3\ell}/\gamma_0).
\end{equation}

Starting from initial state \eqref{eq:CSS=Gauss}, we apply $\mathcal{R}_{\phi_1}\mathcal{U}_{\phi_1}$ with \begin{equation}\label{eq:gamma0=sqrtN}
    \gamma_0=\sqrt{N}
\end{equation} 
to squeeze to $\ket{0;\gamma_1}$, and then apply another $\mathcal{R}_{\phi_2}\mathcal{U}_{\phi_2}$ to squeeze to $\ket{0;\gamma_2}$, and so on: \begin{equation}\label{eq:RURU}
    \ket{\Psi}=\mathcal{R}_n\mathcal{U}_{\phi_n}\cdots \mathcal{R}_2\mathcal{U}_{\phi_2}\mathcal{R}_1\mathcal{U}_{\phi_1} \ket{0}_A\ket{+}^{\otimes N} = \ket{0}_A \ket{0;\gamma_n}\otimes \ket{\text{outcome}}_{\rm C} + \order(N^{-1})+\sum_{k=1}^n \epsilon_k.
\end{equation} 
Here $\ket{\text{outcome}}_{\rm C}$ is some unnormalized state for the classical register for the past measurement outcomes.
The only free parameters is $\ell_1,\cdots,\ell_n$, because $\gamma_k$ is determined by \eqref{eq:mum<}: \begin{equation}\label{eq:gammak=gammak-1}
    \gamma_k = \Theta(\ell_k^3 2^{-\ell_k}\gamma_{k-1} ).
\end{equation}
The error is \begin{equation}\label{eq:epsk=}
    \epsilon_k = \ee^{-\Omega(\ell_k^2)}+\order(N^{-1/3})+ \order(\gamma_{k-1})\ee^{-\Omega\lr{2^{\ell_k} }} +\order(2^{3\ell_k}/\gamma_{k-1}).
\end{equation}

We choose \begin{equation}\label{eq:lk=}
    \ell_k = \floor{\frac{1}{8} \log(\gamma_{k-1}) } +1, 
\end{equation}
so that \eqref{eq:gammak=gammak-1} becomes \begin{equation}
    \log(\gamma_k) = \frac{7}{8}\log(\gamma_{k-1}) + \order\lr{\log\log \gamma_{k-1}}\in \mlr{\frac{4}{5}\log(\gamma_{k-1}),\frac{9}{10}\log(\gamma_{k-1})}, 
\end{equation}
for $\gamma_{k-1}\gg 1$.
Iterating $k$ from \eqref{eq:gamma0=sqrtN}, there exists \begin{equation}
    n=\Theta(\log \log N).
\end{equation} 
such that $\gamma_n$ is a sufficiently large constant independent of $N$, and \begin{equation}\label{eq:gammak>gamman}
    \log\gamma_k \ge \lr{\frac{10}{9}}^{n-k} \log\gamma_n.
\end{equation}

Using \eqref{eq:epsk=} and \eqref{eq:lk=}, the error with $k\le n$ is \begin{align}
    \epsilon_k &= \ee^{-\Omega(\log \gamma_{k-1})^2} + \order(\gamma_{k-1})\ee^{-\Omega(\gamma_{k-1}^{1/8})} + \order(\gamma_{k-1}^{-5/8}) \nonumber\\
    &= \order(\ee^{-\frac{5}{8}\log(\gamma_{k-1})}).
\end{align}
The total error in \eqref{eq:RURU} is then \begin{equation}\label{eq:total_error}
    \widetilde{\epsilon}:=\order(N^{-1})+\sum_{k=1}^n \epsilon_k = \sum_{k=1}^n \order(\ee^{-\frac{5}{8}\lr{\frac{5}{4}}^{n-k} \log(\gamma_n)}) = \order(\gamma_n^{-5/8}).
\end{equation}

We finally measure $Z$ in state \eqref{eq:RURU}, which can be achieved by $\order(\log N)$-depth circuit \cite{Dicke_circuit24}. The probability $P_{\rm success}=\norm{\mathcal{P}_{z=0}\ket{\Psi}}^2$ to measure the target $z=0$ is \begin{equation}
    P_{\rm success}=\mlr{\Theta(\gamma_n^{-1/2}) - \order(\widetilde{\epsilon})}^2 = \Theta(\gamma_n^{-1})
\end{equation}
using \eqref{eq:total_error} and \eqref{eq:mugamma=}, which is a $>0$ constant. The whole protocol has depth \begin{equation}
    \order(\log N)+\sum_{k=1}^n \order(\ell_k) = \order(\log N) + \order(\ell_1) = \order(\log N)
\end{equation}
using \eqref{eq:lk=} and \eqref{eq:gammak>gamman}.


\section{Gaussian states: Proof of Lemma \ref{lem:prepGaussian} and a matrix product state representation}
\label{sec:Gauss}

\subsection{Upper bound on preparation circuit depth}
Let $\mathcal{X}_\ell=\{-2^{\ell-1},\cdots,2^{\ell-1}-1\}$ be a equispaced grid and $|\mathcal{X}_\ell|=2^\ell=L$. For $x\in\mathcal{X}_\ell$, define $f_x=e^{-\pi\sigma^2 x^2}$.  And we define a parameter
\begin{equation}
B=2^{2(\ell-1)}\pi\sigma^2\geq 1,    
\end{equation}
which relates the Gaussian width and the grid width. Therefore, $0\leq\pi\sigma^2x^2\leq B$ and $f_x\geq e^{-B}$, $\forall x\in\mathcal{X}_\ell$. The normalized target state on $\ell$ qubits is
\begin{equation}
\label{eq:gaussianstate}
    |\alpha\rangle=\frac{1}{\|f\|_2}\sum_{x\in\mathcal{X}_\ell}f_x|x\rangle.
\end{equation}

We consider the case $B=\Omega(\ell^2)$, i.e. $\sigma=\Omega(2^{-\ell}\ell)$. If $B=\Omega(2^{2\ell}/\ell^2)$, we can truncate the Gaussian to smaller system effective supported on $\log\ell$ qubits and efficiently prepare it with state error $e^{-\Omega(\ell^2)}$, as we will show later. If $\Theta(\ell^2)\leq B<\Theta(2^{2\ell}/\ell^2)$, the Gaussian width satisfies $\Theta(2^\ell/\ell)\geq 1/\sigma>\Theta(\ell)$. For $B=\omega(\ell^2)$, we can always choose a smaller number of qubits $\ell'<\ell$ so that the same Gaussian, when viewed on that smaller system, has an effective parameter $B'=\Theta(\ell^2)$. (For $B=\Theta(\ell^2)$, we just have $B'=B$.) According to the definition of the effective parameter, we have
\begin{equation}
    \frac{B}{2^{2\ell}}=\frac{B'}{2^{2\ell'}}=\frac{\pi\sigma^2}{4},
\end{equation}
so
\begin{equation}
    B'=B2^{-2(\ell-\ell')}.
\end{equation}
Now choose $B'=\Theta(\ell^2)$ and substitute it into the above equation, we have
\begin{equation}
    \ell'=\ell-\log_2\frac{\sqrt{B}}{\ell}+\order(1).
\end{equation}
From the range of $B$, we get
\begin{equation}
\label{eq:lprimerange}
    \ell\geq\ell'>2\log_2\ell.
\end{equation}
Denote the truncated state on $\ell'$ qubits as $|\alpha_{\ell'}\rangle$, the state error is
\begin{equation}
    \||\alpha\rangle-|\alpha_{\ell'}\rangle\|_2\leq\sqrt{\frac{\int^{\infty}_{2^{\ell'-1}}e^{-2\pi\sigma^2x^2}\mathrm{d}x}{\sum_{x\in\mathcal{X}_\ell}e^{-2\pi\sigma^2x^2}}}\leq \sqrt{\frac{\order(\frac{e^{-2B'}}{\sigma\sqrt{B'}})}{\Omega(\frac{1}{\sigma})}}=\order (B'^{-\frac{1}{4}}e^{-B'})=e^{-\Omega(\ell^2)},
\end{equation}
where we have used facts similar to Eq.~\eqref{eq:truncate} and Eq.~\eqref{eq:norm}.

The main fact for Lemma \ref{lem:prepGaussian} that we will utilize is that a Gaussian with large width $\Theta(2^{\ell'}/\sqrt{B'})$ after a Fourier transform becomes a Gaussian with small width $\Theta(\sqrt{B'})$, which will be easier to prepare.

Consider the periodization of the Gaussian
\begin{equation}
    G(x)=\sum_{\nu\in\mathbb{Z}}e^{-\pi\sigma^2(x+\nu L')^2},
\end{equation}
where $L'=2^{\ell'}$. We apply the Poisson summation formula, \begin{equation}\label{eq:x2=y2}
    G(x)= \sum_{\nu\in \mathbb{Z}}\ee^{-\pi \widetilde{\sigma}^2(\widetilde{x}+\nu)^2} = \frac{1}{\widetilde{\sigma}} \sum_{r\in \mathbb{Z}} \ee^{-\frac{\pi}{\widetilde{\sigma}^2}r^2} \ee^{2\pi\ii\widetilde{x}r}=\frac{1}{L'\sigma}\sum_{r\in\mathbb{Z}}e^{-\frac{\pi r^2}{L'^2\sigma^2}}e^{\mathrm{i}2\pi rx/L'}=\frac{1}{L'\sigma}\sum_{r\in\mathbb{Z}}e^{-\frac{\pi^2 r^2}{4B'}}e^{\mathrm{i}2\pi rx/L'}
\end{equation}
where $\widetilde{x}=x/L'$ and $\widetilde{\sigma}=\sigma L'$.

\comment{
Its Fourier coefficients are
\begin{equation}
\begin{aligned}
    c_r&=\frac{1}{N}\int_{-N/2}^{N/2}G(x)e^{-\mathrm{i}2\pi rx/N}\mathrm{d}x\\
    &=\frac{1}{N}\int_{-\infty}^{\infty}e^{-\pi\sigma^2x^2}e^{-\mathrm{i}2\pi rx/N}\mathrm{d}x\\
    &=\frac{1}{N\sigma}e^{-\frac{\pi r^2}{N^2\sigma^2}}=\frac{\sqrt{\pi}}{2\sqrt{B}}e^{-\frac{\pi^2r^2}{4B}}.
\end{aligned}
\end{equation}
Therefore
\begin{equation}
    G(x)=\frac{1}{N\sigma}\sum_{r\in\mathbb{Z}}e^{-\frac{\pi r^2}{N^2\sigma^2}}e^{\mathrm{i}2\pi rx/N}=g_x.
\end{equation}
}
Now choose an integer cutoff $K$, and define
\begin{equation}
    P_K(x)=\frac{1}{L'\sigma}\sum_{r=-K}^Ke^{-\frac{\pi^2 r^2}{4B'}}e^{\mathrm{i}2\pi rx/L'}
\end{equation}
and
\begin{equation}
    |\alpha_K\rangle=\frac{1}{\|p\|_2}\sum_{x\in\mathcal{X}_\ell}p_x|x\rangle,
\end{equation}
with $p_x=P_K(x)$.

The errors come from both the periodization and the truncation of the Fourier series. For $|x|\leq L'/2$ and $\nu\neq 0$, we have $|x+\nu L'|\geq |\nu L'|-|x|\geq (|\nu|-1/2)L'$, so
\begin{equation}
    |G(x)-f_x|\leq 2\sum_{\nu=1}^{\infty}e^{-\pi\sigma^2L'^2(\nu-1/2)^2}\leq 2\sum_{\nu=1}^{\infty}e^{-\pi\sigma^2L'^2(\nu-1+1/4)}\leq \frac{2e^{-\pi\sigma^2L'^2/4}}{1-e^{-\pi\sigma^2L'^2}}=\frac{2e^{-B'}}{1-e^{-4B'}}\leq C e^{-B'}.
\end{equation}
For every $x\in\mathbb{R}$, we have
\begin{equation}\label{eq:truncate}
        |G(x)-P_K(x)|\leq \frac{2}{L'\sigma}\sum_{r=K+1}^{\infty}e^{-\frac{\pi r^2}{L'^2\sigma^2}}\leq\frac{2}{L'\sigma}\int_K^\infty e^{-\frac{\pi r^2}{L'^2\sigma^2}}\mathrm{d}r\leq \frac{L'\sigma}{\pi K}e^{-\frac{\pi K^2}{L'^2\sigma^2}}=\frac{2}{\pi^{3/2}}\frac{\sqrt{B'}}{K}e^{-\frac{\pi^2K^2}{4B'}}.
\end{equation}
Therefore
\begin{equation}
    \|f-p\|_2\leq\|f-g\|_2+\|g-p\|_2\leq C\sqrt{N}\left(e^{-B'}+\frac{\sqrt{B'}}{K}e^{-\frac{\pi^2K^2}{4B'}}\right).
\end{equation}
For $|x|<\frac{1}{2\sqrt{\pi}\sigma}$, $f_x\geq e^{-1/4}$, and there are $\Omega(\frac{1}{\sqrt{\pi}\sigma})$ such integer $x$, so
\begin{equation}\label{eq:norm}
    \|f\|_2^2\geq \frac{cL'}{\sqrt{B'}}.
\end{equation}
From the triangle inequality, we can derive
\begin{equation}
\label{eq:triangle}
    \left\|\frac{f}{\|f\|_2}-\frac{p}{\|p\|_2}\right\|_2\leq \frac{2\|f-p\|_2}{\|f\|_2},
\end{equation}
so
\begin{equation}
    \||\alpha_{\ell'}\rangle-|\alpha_K\rangle\|_2\leq CB'^{1/4}\left(e^{-B'}+\frac{\sqrt{B'}}{K}e^{-\frac{\pi^2K^2}{4B'}}\right).
\end{equation}
If we choose $K=\Theta(B')$, then we can achieve state error
\begin{equation}
    \||\alpha\rangle-|\alpha_K\rangle\|_2\leq \||\alpha\rangle-|\alpha_{\ell'}\rangle\|_2+\||\alpha_{\ell'}\rangle-|\alpha_K\rangle\|_2=e^{-\Omega(\ell^2)}.
\end{equation}
There is a consistency condition: we hope the number of Fourier coefficients is smaller than the number of the original Gaussian amplitudes, i.e. $K< 2^{\ell'-1}$, which is satisfied because from Eq.~\eqref{eq:lprimerange}, we have $2^\ell\geq2^{\ell'}>\ell^2$.

Now look at the local unitary preparation circuit depth. We first prepare a state with the Fourier coefficient on $q=\left\lceil\log_2(2K+1)\right\rceil$ qubits, i.e.
\begin{equation}
|\beta_K\rangle_q=\frac{1}{\sqrt S}
\sum_{r=-K}^{K}e^{-\frac{\pi r^2}{L'^2\sigma^2}}|r\rangle_q,~~S=\sum_{r=-K}^Ke^{-\frac{2\pi r^2}{L'^2\sigma^2}}.
\end{equation}
If we write $|\beta_K\rangle_q$ exactly as an MPS, the maximum bond dimension will be $2^{\lfloor q/2\rfloor}$. The sequential preparation of MPS~\cite{PhysRevA.75.032311} from $|0\rangle^{\otimes q}$ requires circuit depth
\begin{equation}
T
\leq
C\sum_{j=1}^{q}4^{1+s_j}
=
\order\!\left(\sum_{j=1}^{q}4^{\min(j,q-j)}\right)
=
\order(2^q)=\order(K),
\end{equation}
where $2^{s_j}\leq\min(2^j,2^{q-j})$ is the bond dimension for the $i$-th qubit, and for the implementation of each isometry, we use the result that an arbitrary $b=1+s_j$ qubit unitary can be implemented using $\order(4^b)$ one-qubit gates and nearest-neighbor \textsf{CNOT}'s, without extra ancillas~\cite{4toq}. 


A better approach in our setting is from Theorem 9 in Ref.~\cite{10.1109/TCAD.2023.3311734}. A direct corollary is that any state on $q$ qubits can be prepared exactly by a nearest-neighbor unitary circuit of depth $\order(2^{q/2}+2^q/\ell)$ from $|0\rangle^{\otimes q}$, using $\ell-q$ ancillas and setting all ancillas to be $|0\rangle$ at the end. In our case, no extra ancillas are needed, because we have $\ell$ qubits in total and we prepare $|\beta_K\rangle$ on only $q=\left\lceil\log_2(2K+1)\right\rceil\ll\ell$ qubits, and the rest qubits can be used as ancillas. In this way, $T=\order(\ell)$ for $K=\Theta(B')=\Theta(\ell^2)$.

Another approach is using the Kitaev-Webb protocol~\cite{Gaussian_Kitaev08} for Gaussian amplitude state preparation, and it gives $T=\order(\text{poly}(q))=\order(\text{polylog}(K))$.

If we represent $r\in\{-2^{q-1},\cdots,2^{q-1}-1\}$ as $|sb_{q-1}\cdots b_1\rangle$, i.e. $r=-s2^{q-1}+\sum_{j=1}^{q-1}b_j2^{j-1}$, $s$ will be the sign qubit, with $s=0$ giving $r>0$ and $s=1$ giving $r<0$. After preparing $|\beta_K\rangle_q$ on $q$ qubits, we need to extend it to $\ell'$ qubits, i.e.
\begin{equation}
    |0\rangle^{\otimes \ell'-q}|r\rangle_q\mapsto |r\rangle_{\ell'},
\end{equation}
which is equivalent to
\begin{equation}
    |0\rangle^{\otimes \ell'-q}|sb_{q-1}\cdots b_1\rangle\mapsto |s\rangle^{\otimes \ell'-q}|sb_{q-1}\cdots b_1\rangle
\end{equation}
in our convention. And this can be done by sequential $\ell'-q$ nearest-neighbor \textsf{CNOT}'s, which produces
\begin{equation}
|B_K\rangle_\ell
=
\frac1{\sqrt S}
\sum_{r=-K}^{K}e^{-\frac{\pi r^2}{L'^2\sigma^2}}|r\rangle_{\ell'}. 
\end{equation}

Then we only need to apply the standard $\ell'$-qubit QFT, which as a nearest-neighbor exact implementation of depth $\order (\ell')$~\cite{PhysRevA.76.052310}, and the only difference is we need to interpret the computational-basis labels by the signed convention we used. Finally, we need to do another extension from $\ell'$ qubits to $\ell$ qubits, i.e.
\begin{equation}
    |0\rangle^{\otimes \ell-\ell'}|r\rangle_{\ell'}\mapsto |r\rangle_{\ell},
\end{equation}
which take depth $\order(\ell-\ell')$.

In total, we can achieve state error
\begin{equation}
    \||\alpha\rangle-|\alpha_K\rangle\|_2=e^{-\Omega(\ell^2)}
\end{equation}
in local unitary preparation circuit depth $T=\order(\ell)+\order(\ell')+\order(\ell-\ell')=\order(\ell)$.

\subsection{The Gaussian amplitude state has small entanglement}
Consider the Taylor expansion of $f_x$ in Eq.~\eqref{eq:gaussianstate} up to a cutoff $m$, i.e. the polynomial
\begin{equation}
    P_m(x)=\sum_{j=0}^m\frac{(-\pi\sigma^2x^2)^j}{j!}.
\end{equation}
Let
\begin{equation}
    |\alpha_m\rangle=\frac{1}{\|p\|_2}\sum_{x\in\mathcal{X}_\ell}p_x|x\rangle,
\end{equation}
with $p_x=P_m(x)$. The pointwise error is bounded by
\begin{equation}
    \max_{x\in\mathcal{X}_\ell}|f_x-p_x|\leq \frac{(\pi\sigma^2x^2)^{m+1}}{(m+1)!}\leq\frac{B^{m+1}}{(m+1)!},
\end{equation}
where we have used the Taylor's theorem with the Lagrange form of the remainder. Because $|\mathcal{X}_\ell|=L$, we have
\begin{equation}
    \|f-p\|_2\leq\sqrt{L}\frac{B^{m+1}}{(m+1)!}.
\end{equation}
Using Eq.~\eqref{eq:triangle}, we have the state error
\begin{equation}
    \left\||\alpha\rangle-|\alpha_m\rangle\right\|_2\leq\frac{2e^BB^{m+1}}{(m+1)!}\leq 2e^B\left(\frac{eB}{m+1}\right)^{m+1}\leq\epsilon,
\end{equation}
where we have used $\|f\|_2\geq\sqrt{L}e^{-B}$ and the Stirling's formula. A sufficient condition for $m$ to upper bound the state error by $\epsilon$ is
\begin{equation}
    m+1=\left\lceil\frac{4(B+\Lambda)}{\log{(2+\frac{\Lambda}{B})}}\right\rceil,
\end{equation}
where $\Lambda=\log\frac{2}{\epsilon}$, because
\begin{equation}
\begin{aligned}
    (m+1)\log\left(\frac{m+1}{eB}\right)&\geq \frac{4(B+\Lambda)}{\log{(2+\frac{\Lambda}{B})}}\log\frac{4(1+\Lambda/B)}{e\log{(2+\frac{\Lambda}{B})}}\\
    &\geq\frac{4(B+\Lambda)}{\log{(2+\frac{\Lambda}{B})}}\log\frac{4(1+\Lambda/B)}{e\sqrt{1+\Lambda/B}}=\frac{4(B+\Lambda)}{\log{(2+\frac{\Lambda}{B})}}\log\frac{4\sqrt{1+\Lambda/B}}{e}\\
    &\geq\frac{4(B+\Lambda)}{\log{(2+\frac{\Lambda}{B})}}\log\sqrt{2+\Lambda/B}=2(B+\Lambda).
\end{aligned}
\end{equation}

From Proposition 2.8 in Ref.~\cite{Khoromskij2011}, we know that an equidistantly sampled polynomial of degree $p$ has matrix product state (MPS) bond dimension at most $p+1$. In our case $p=2m$. Therefore, if we want to represent $|\alpha\rangle$ by a MPS with error upper bounded by $\epsilon$, we need bond dimension at most
\begin{equation}
    \chi=\order(2m+1)=\order\left(2\left\lceil\frac{4(B+\log\frac{2}{\epsilon})}{\log{(2+\frac{\log\frac{2}{\epsilon}}{B})}}\right\rceil-1\right).
\end{equation}
We list some example upper bounds for MPS bond dimension $\chi$ under difference choices of $\epsilon$ and $B$ as below. In other words, $|\alpha\rangle$ can be exponentially well approximated by $|\alpha_m\rangle$ which has entanglement entropy only $\order(\log\ell)$.
\begin{table}[h!]
    \centering
    \begin{tabular}{c|ccc}
    \hline
        $B$ & $\epsilon=\Theta(1)$ & ~~$\epsilon=e^{-\Theta(\ell)}$~~ & ~~$\epsilon=e^{-\Theta(\ell^2)}$~~ \\
        \hline
        $\order(1)$ & $\order(1)$ &$\order(\ell/\log\ell)$&$\order(\ell^2/\log\ell)$\\
        $\order(\ell)$&$O(\ell)$&$\order(\ell)$&$\order(\ell^2/\log\ell)$\\
        $\order(\ell^2)$&$\order(\ell^2)$&$\order(\ell^2)$&$\order(\ell^2)$\\
    \hline
    \end{tabular}
\end{table}

This upper bound of the bond dimension given precision $e^{-\Omega(\ell^2)}$ shown in the table above for $B=o(\ell^2)$ generally implies local unitary preparation circuit depth $\order{(\mathrm{poly}(\ell))}$ by sequential preparation of MPS. Whether or not the explicit Pascal form of the MPS representation we will show later can be used to shorten the circuit depth remains an open question.

For $\Theta(\ell^2)\leq B<\Theta(2^{2\ell}/\ell^2)$, the Gaussian is effectively supported on a smaller system and we can define an effective $B'=\Theta(\ell^2)$ like in the last section, and therefore it still has a small entanglement $O(\log\ell)$ with the same state precision. For $B=\Omega(2^{2\ell}/\ell^2)$, the Gaussian is even narrower and closer to a product state. Generally for any $B\geq 1$, we have the bond dimension
\begin{equation}
\chi
=
\order\left[
\min\left\{
1+
\frac{B_*+\Lambda}
{\log(2+\Lambda/B_*)},
\;
1+\left(\frac{2^{2\ell}\Lambda}{B}\right)^{1/4}
\right\}
\right],
\end{equation}
where
\begin{equation}
\Lambda=\max\!\left\{1,\log\frac{c}{\epsilon}\right\},
\qquad
B_*=\min\{B,\Lambda\},
\end{equation}
with $c$ a constant.

$|\alpha_m\rangle$ actually has an explicit MPS representation given by the Pascal matrices. Theorem 6 in Ref.~\cite{Oseledets2013} gives
\begin{equation}
P_m(x)=\bm{c}^\top A^{s_1}A^{s_2}\cdots A^{s_\ell}\bm{e}_0,
\end{equation}
where $s_j=\{0,1\}$,
\begin{equation}
    x=\sum_{j=1}^{\ell}2^{j-1}s_j-2^{\ell-1},
\end{equation}
$\bm{e}_0=(1,0,\cdots,0)^\top$, and $\bm{c}=(c_0,\cdots,c_{2m})^\top$ is defined as the coefficient vector of $P_m(x)$, i.e.
\begin{equation}
    P_m(x)=\bm{c}^\top\cdot \bm{v}(x)
\end{equation}
with
\begin{equation}
    c_{2j}=\frac{(-\pi\sigma^2)^j}{j!},~~c_{2j+1}=0,
\end{equation}
and 
\begin{equation}
    \bm{v}(x)=(1,x,x^2,\cdots,x^{2m})^\top
\end{equation}
the monomial vector. The bulk MPS tensor is the Pascal matrix 
\begin{equation}
    A^{s_j}_{ab}=\left\{
    \begin{aligned}
        &\left(\begin{matrix}
            a\\b
        \end{matrix}\right)[t_j(s_j)]^{a-b},& a\geq b\\
        &0,&a<b
    \end{aligned}
    \right.
\end{equation}
which translates the monomial vector, i.e. $A(t)\bm{v}(x)=\bm{v}(x+t)$, with
\begin{equation}
    t_j(b_j)=2^{j-1}b_j,\text{ for }0\leq j<\ell,\text{ and }t_\ell(b_\ell)=2^{\ell-1}(b_\ell-1).
\end{equation}

\section{Entanglement entropy of spin squeezed states: Proof of Theorem \ref{thm:bound}}

The sketch of the proof is as follows. First, we truncate the Dicke-state expansion of $|\psi\rangle$ around $\langle Z\rangle_\psi$ to obtain an unnormalized truncated $\ket{\psi^{\rm trun}}$. Next, we bound the largest eigenvalue, equivalently the operator norm, of the unnormalized reduced density matrix
\begin{equation}
    \rho_L^{\mathrm{trun}}=\mathrm{Tr}_R|\psi^{\mathrm{trun}}\rangle\langle\psi^{\mathrm{trun}}|    
\end{equation}
of the left subsystem by bounding the largest Schmidt coefficient of each retained Dicke state and then using the corresponding amplitudes to bound their combined contribution. Finally, the support of any approximate unnormalized density matrix $\epsilon$-close to
\begin{equation}
    \rho_L=\mathrm{Tr}_R|\psi\rangle\langle\psi|
\end{equation}
must retain at least a certain amount of the truncated state $\rho_L^{\mathrm{trun}}$’s weight. Since each dimension of that support can contribute at most the operator norm of $\rho_L^{\mathrm{trun}}$, capturing this weight requires sufficiently large rank, yielding the desired smooth max-entropy bound.

The lower bound on the depth of one-dimensional local adaptive circuits follows because each layer can increase the rank of the reduced density matrix by at most a constant factor on each measurement branch.

In this section, we assume that $\ket{\psi}$ lies in the
permutation-symmetric Dicke manifold and satisfies
\begin{equation}
|\langle X\rangle_\psi|\ge\kappa N,\qquad \operatorname{Var}_{\psi}(Z)=\langle\Delta Z^2\rangle_{\psi}\le N\xi^2,
\end{equation}
    where $0<\kappa\le1,0<\xi\le1$. Since $\langle X\rangle_\psi^2+\langle Z\rangle_\psi^2\le N^2$, we have \begin{equation}
    z_0:=\langle Z\rangle_\psi, \qquad \left|z_0\right|\le(1-2\zeta)N,
\end{equation}
where $\zeta=\frac{1-\sqrt{1-\kappa^2}}{2}$.

\subsection{Truncating the state to polarizations near $\langle Z\rangle_\psi$}
We truncate $\ket{\psi}$ in the Dicke-state basis and obtain an unnormalized approximation \begin{equation}\label{eq:psi_trun}
    \ket{\psi^{\rm trun}}:=\sum_{|z-z_0|<w} \beta_z \ket{D_z},  
\end{equation}
where $\beta_z = \alr{D_z|\psi}$ and $\ket{D_z}=\ket{D^N_{(N-z)/2}}$ for \(z\in\{-N,-N+2,\ldots,N\}\). In other words, $\psi^{\mathrm{trun}}$ is obtained by truncating
$\psi$ to the subspace $|z-z_0|<w$. To control the discarded
weight, we introduce a parameter $\eta\in(0,\sqrt{1-\epsilon})$ independent of $N$
and choose the half-width of the truncation window as
\begin{equation}\label{eq:n'=}
    w:=\frac{\xi\sqrt{N}}{\eta}.
\end{equation}
At sufficiently large $N$, we can assume that \begin{equation}
    w\le\zeta N, \label{eq:width}
\end{equation}
so that the truncation window remains away from the fully polarized endpoints.

Define the discarded component by
\begin{equation}
|\psi_\perp\rangle
:=|\psi\rangle-|\psi^{\mathrm{trun}}\rangle
=\sum_{|z-z_0|\ge w}\beta_z|D_z\rangle,
\end{equation}
which is orthogonal to $|\psi^{\mathrm{trun}}\rangle$.
Chebyshev's inequality then gives the discarded weight during the truncation
\begin{equation}
\begin{aligned}
\|\psi_\perp\|^2
&=\sum_{|z-z_0|\ge w}|\beta_z|^2
\le\frac{1}{w^2}\sum_z(z-z_0)^2|\beta_z|^2=\frac{\langle\psi|(Z-z_0)^2|\psi\rangle}{w^2}
=\frac{\operatorname{Var}_\psi(Z)}{w^2}
\le\frac{N\xi^2}{w^2}
=\eta^2.
\end{aligned}
\end{equation}

\subsection{Bounding the largest eigenvalue of the truncated reduced density matrix} \label{sec:opnorm}

Our goal is to bound the operator norm $\lVert\rho_L^{\mathrm{trun}}\rVert_{\mathrm{op}}$, i.e. the largest eigenvalue of $\rho_L^{\mathrm{trun}}$. We first bound the largest Schmidt coefficient of each
retained Dicke state. We then combine these bounds for
the truncated superposition to obtain an upper bound on
$\|\rho_L^{\mathrm{trun}}\|_{\mathrm{op}}$.

Across the bipartition into $N/2$ spins on each side, write
the Dicke state with total $Z$-magnetization $z$ as
\begin{equation}
|D_z\rangle=\sum_m C_{m,z/2-m}|m\rangle_L\otimes|z/2-m\rangle_R,
\end{equation}
where $\ket{m}_L$ is the permutation-symmetric state of the left
subsystem with total $Z$-magnetization $2m$, with
$m=-N/4,-N/4+1,\ldots,N/4$. The amplitudes $C_{m,z/2-m}$
are Clebsch--Gordan (CG) coefficients.

The squared magnitude of the CG coefficient is
\begin{equation}\label{eq:CG=}
p_z(m):=|C_{m,z/2-m}|^2
=
\frac{
    \dbinom{N/2}{N/4-m}\,
    \dbinom{N/2}{N/4-z/2+m}
}{
    \dbinom{N}{N/2-z/2}
}.
\end{equation}

We look for the largest CG coefficient by evaluating the ratio of two consecutive terms
\begin{equation}
\frac{p_z(m+1)}{p_z(m)}=\frac{(N/4-m)(N/4+z/2-m)}{(N/4+m+1)(N/4-z/2+m+1)}.
\end{equation}

The numerator of the right-hand side minus its denominator is
\begin{equation}
-\left(\frac N2+1\right)(2m+1-z/2).
\label{eq:25}
\end{equation}
Thus the sequence $p_z(m)$ increases before $m\simeq z/4$ and decreases after $m\simeq z/4$. The maximum of $p_z(m)$ is therefore reached at $m_*$ which has the form
\begin{equation}
m_*=\frac z4+s,\qquad |s|\le1.
\label{eq:26}
\end{equation}
The bounded offset $s$ accommodates parity and rounding. At this point, we have
\begin{equation}
p_z(m_*)
=
\frac{
    \dbinom{N/2}{(N-z)/4-s}\,
    \dbinom{N/2}{(N-z)/4+s}
}{
    \dbinom{N}{(N-z)/2}
}.
\end{equation}

Using $|z_0|\le(1-2\zeta)N$ and $w\le\zeta N$,
every retained label $z$ satisfies
\begin{equation}
|z|\le |z_0|+|z-z_0|
<(1-2\zeta)N+w
\le(1-\zeta)N.
\end{equation}
Consequently, $(2-\zeta)N/4\geq(N\pm z)/4\ge\zeta N/4$.
Together with $|s|\le1$, this ensures that all factorial
arguments grow proportionally to $N$. We can therefore
apply Stirling's formula as $N\to\infty$, with relative
corrections bounded by a constant times $1/N$, where the constant may depend on $\zeta$ but is independent
of $z$:
\begin{equation}\label{eq:CG-num-stirling}
\dbinom{N/2}{(N-z)/4-s}\,
\dbinom{N/2}{(N-z)/4+s}
=
\frac{
    4\cdot 2^{N h_2((N-z)/(2N))}
}{
    \pi N\left[1-(z/N)^2\right]
}
\left(1+O(N^{-1})\right),
\end{equation}
\begin{equation}\label{eq:CG-den-stirling}
\dbinom{N}{(N-z)/2}
=
\frac{
    2^{N h_2((N-z)/(2N))}
}{
    \sqrt{\frac{\pi N}{2}\left[1-(z/N)^2\right]}
}
\left(1+O(N^{-1})\right),
\end{equation}
where $h_2(x)=-x\log_2 x-(1-x)\log_2(1-x)$ is the
binary entropy. Rewriting the exact expressions in terms of central
binomial coefficients shows that the multiplicative
correction factor relative to the leading term in
Eq.~(\ref{eq:CG-num-stirling}) is smaller than the
corresponding factor in Eq.~(\ref{eq:CG-den-stirling}).
Upon division, the ratio of these correction factors
is therefore less than one and can be dropped to obtain
the upper bound
\begin{equation}
\max_m p_z(m)
<
\sqrt{\frac{8}{\pi N[1-(z/N)^2]}}
\le
\sqrt{\frac{8}{\pi N\zeta(2-\zeta)}}.\label{eq:maxm}
\end{equation}

We now relate the individual Dicke-state coefficient bounds
to the operator norm of $\rho_L^{\mathrm{trun}}$.
For each $z$, define the coefficient matrix $\Psi^{(z)}$
of $|D_z\rangle$ in the left and right Dicke bases by
\begin{equation}
\Psi^{(z)}_{m,m'}
:=C_{m,m'}\delta_{m+m',z/2},
\end{equation}
where $\delta$ is the Kronecker delta.
The coefficient matrix of $|\psi^{\mathrm{trun}}\rangle$
is then
\begin{equation}
\Psi:=\sum_{|z-z_0|<w}\beta_z\Psi^{(z)},
\end{equation}
so that
\begin{equation}
|\psi^{\mathrm{trun}}\rangle
=
\sum_{m,m'}\Psi_{m,m'}|m\rangle_L\otimes|m'\rangle_R.
\end{equation}
Tracing out the right subsystem gives, in the left Dicke basis,
\begin{equation}
\rho_L^{\mathrm{trun}}=\Psi\Psi^\dagger,
\qquad
\|\rho_L^{\mathrm{trun}}\|_{\mathrm{op}}
=\|\Psi\|_{\mathrm{op}}^2.
\end{equation}

For each fixed $z$, the Kronecker delta fixes
$m'=z/2-m$, so $\Psi^{(z)}$ has at most one nonzero
entry in each row and column. Consequently,
$\|\Psi^{(z)}\|_{\mathrm{op}}=\max_m|C_{m,z/2-m}|$.
We therefore obtain
\begin{align}
\|\Psi\|_{\mathrm{op}}
&\le
\sum_{|z-z_0|<w}|\beta_z|
\|\Psi^{(z)}\|_{\mathrm{op}}
\nonumber\\
&=
\sum_{|z-z_0|<w}|\beta_z|
\max_m|C_{m,z/2-m}|
\nonumber\\
&\le
\left(\sum_{|z-z_0|<w}|\beta_z|^2\right)^{1/2}
\left(\sum_{|z-z_0|<w}
\max_m|C_{m,z/2-m}|^2\right)^{1/2}
\nonumber\\
&\le
\left(\sum_{|z-z_0|<w}
\max_m|C_{m,z/2-m}|^2\right)^{1/2}
\nonumber\\
&\le
\sqrt{w+1}\,
\max_{\substack{|z-z_0|<w\\m}}|C_{m,z/2-m}|
\nonumber\\
&=
\sqrt{(w+1)\max_{\substack{|z-z_0|<w\\m}}p_z(m)}.
\label{eq:Psi-pz-bound}
\end{align}
Here the first inequality follows from the triangle inequality.
Applying the Cauchy--Schwarz inequality and using
$\sum_{|z-z_0|<w}|\beta_z|^2\le1$ then gives the fourth line.
We further bound the sum by its largest term times the
number of retained labels to obtain
the second-to-last line; there are at most $w+1$
such labels because consecutive allowed values of $z$
differ by two.
The last equality uses the definition of $p_z(m)$.

Substituting $w=\xi\sqrt{N}/\eta$ and
Eq.~(\ref{eq:maxm}), we obtain
\begin{equation}
\|\rho_L^{\mathrm{trun}}\|_{\mathrm{op}}
=\|\Psi\|_{\mathrm{op}}^2
\le
(w+1)\max_{\substack{|z-z_0|<w\\m}}p_z(m)
\le
(w+1)\sqrt{\frac{8}{\pi N\zeta(2-\zeta)}}
\le
\sqrt{\frac{9}{\pi\zeta(2-\zeta)}}
\frac{\xi}{\eta}
\label{eq:rho-trun-op-bound}
\end{equation}
where we have used $N$ is sufficiently large. Recall that $\eta$ is the parameter we used earlier to control the discarded weight in the Dicke-state truncation.

\subsection{Converting the eigenvalue bound into smooth rank}
We show that the support of every admissible approximation must capture some of the truncated weight. Since each support dimension contributes at most the largest
eigenvalue bounded in Sec.~\ref{sec:opnorm}, this forces a lower bound on the support dimension. We adapt this proof from Proposition 6.3 in \cite{tomamichel_book}.

Take an arbitrary operator $\widetilde\rho_L\ge0$ satisfying
\begin{equation}
\lVert\widetilde\rho_L-\rho_L\rVert_1\le\epsilon.
\label{eq:71}
\end{equation}
Let $Q$ be the orthogonal projector onto
$\operatorname{supp}(\widetilde\rho_L)$, acting on the
left subsystem. Then
\begin{equation}
\operatorname{rank}(\widetilde\rho_L)=\operatorname{Tr}Q.
\end{equation}
To lower-bound this rank, we first show that the
approximation condition forces the subspace onto which
$Q$ projects to retain at least $1-\epsilon$ of the
original state's weight:
\begin{equation}
\label{eq:Q-psi-weight}
\begin{aligned}
\|Q|\psi\rangle\|^2
&=\operatorname{Tr}(Q\rho_L)\\
&=1-\operatorname{Tr}\bigl((I-Q)
(\rho_L-\widetilde\rho_L)\bigr)\\
&\ge
1-\left|\operatorname{Tr}\bigl((I-Q)
(\rho_L-\widetilde\rho_L)\bigr)\right|\\
&\ge
1-\|I-Q\|_{\mathrm{op}}\,
\|\rho_L-\widetilde\rho_L\|_1\\
&\ge1-\epsilon.
\end{aligned}    
\end{equation}
Here $(I-Q)\widetilde\rho_L=0$ and $\|I-Q\|_{\mathrm{op}}\le1$ are used.

Using the bound on the discarded component, we next
obtain a lower bound on the norm of the truncated state
projected onto the same subspace:
\begin{equation}
\begin{aligned}
\lVert Q|\psi^{\mathrm{trun}}\rangle\rVert
&=\lVert Q(|\psi\rangle-|\psi_\perp\rangle)\rVert\\
&\ge\lVert Q|\psi\rangle\rVert-\lVert Q|\psi_\perp\rangle\rVert\\
&\ge\sqrt{1-\epsilon}-\lVert Q\rVert_{\mathrm{op}}\,\lVert\psi_\perp\rVert\\
&\ge\sqrt{1-\epsilon}-\lVert\psi_\perp\rVert\\
&\ge\sqrt{1-\epsilon}-\eta.
\end{aligned}    
\end{equation}
For \begin{equation}
    \operatorname{Tr}(Q\rho_L^{\mathrm{trun}}) = \lVert Q|\psi^{\mathrm{trun}}\rangle\rVert^2,
\end{equation}
each orthogonal direction in
$\operatorname{supp}(\widetilde\rho_L)$ contributes at most
$\|\rho_L^{\mathrm{trun}}\|_{\mathrm{op}}$ to the weight
$\operatorname{Tr}(Q\rho_L^{\mathrm{trun}})$.
The support must therefore have sufficiently many
dimensions to accommodate this weight:
\begin{equation}
\operatorname{rank}(\widetilde\rho_L)
\ge
\frac{\operatorname{Tr}(Q\rho_L^{\mathrm{trun}})}
{\|\rho_L^{\mathrm{trun}}\|_{\mathrm{op}}}
\ge
\frac{(\sqrt{1-\epsilon}-\eta)^2}
{\displaystyle
\sqrt{\frac{9}{\pi\zeta(2-\zeta)}}
\frac{\xi}{\eta}}
\ge \frac{1}{C_{\epsilon,\kappa}\xi}
\end{equation}
for a constant $C_{\epsilon,\kappa}$ determined by $\epsilon,\kappa$. Here we have chosen for example $\eta=\sqrt{(1-\epsilon)/6}$.

\comment{
The last inequality uses $N^{-1/2}\le\xi/\kappa$, where
\begin{equation}
C_{\epsilon,\kappa}
:=
\sqrt{\frac{32}
{\pi\left(2+\kappa^2-2\sqrt{1-\kappa^2}\right)}}
\frac{\eta^{-1}+\kappa^{-1}}
{(\sqrt{1-\epsilon}-\eta)^2}.
\end{equation}
The parameter $\eta$ can be
chosen from $(0,\sqrt{1-\epsilon})$ to minimize $C_{\epsilon,\kappa}$. Choosing $\eta$ too large increases the truncation error,
while choosing it too small makes the bound on
$\|\rho_L^{\mathrm{trun}}\|_{\mathrm{op}}$ too weak to be useful.

As an example, for convenience, we can choose
\begin{equation}
\eta=\frac{\sqrt{1-\epsilon}}{6},
\end{equation}
which gives
\begin{equation}
C_{\epsilon,\kappa}
:=
\frac{36}{25(1-\epsilon)}
\sqrt{\frac{32}
{\pi\left(2+\kappa^2-2\sqrt{1-\kappa^2}\right)}}
\left(\frac{6}{\sqrt{1-\epsilon}}+\frac{1}{\kappa}\right).
\end{equation}
}

Therefore, from the definition of the smooth max-entropy, we have
\begin{equation}
H^{\epsilon}_{\rm max}(\rho_L)
=
\min_{\substack{\widetilde{\rho}_L\ge0\\
\norm{\widetilde{\rho}_L-\rho_L}_1\le\epsilon}}
H_{\rm max}(\widetilde{\rho}_L)
\ge
\log\frac{1}{C_{\epsilon,\kappa}\xi}.
\end{equation}

\subsection{Entanglement bounds circuit depth}

Finally, we rigorously prove the circuit-depth lower bound
in Theorem~\ref{thm:bound}.
Suppose a 1d adaptive circuit with measurements and nonlocal
classical feedforward prepares an ensemble of states
$\{\ket{\varphi_m}\}$ labeled by the measurement outcomes $m$.
The circuit starts from a product state and uses quantum
operations of bounded size and range.
If some nonzero-probability outcome $m$ gives a state close
to a $\xi$-SSS $\ket{\psi}$ satisfying
$|\langle X\rangle_\psi|\ge\kappa N$ and
\begin{equation}\label{eq:prepare_close}
\norm{\ket{\varphi_m}\bra{\varphi_m}
-\ket{\psi}\bra{\psi}}_1\le\epsilon,
\end{equation}
then the circuit requires depth
$\Omega\lr{\log\frac{1}{\xi}}$ for fixed $\epsilon<1$
and $\kappa>0$.
The bound also holds for protocols with postselection.

\begin{proof}
Fix such a measurement outcome $m$ and write
$\ket{\varphi}=\ket{\varphi_m}$.
Since partial trace cannot increase the trace-norm distance,
\begin{equation}\label{eq:rho-rho<psi-psi}
\norm{\widetilde{\rho}_L-\rho_L}_1
\le
\norm{\ket{\varphi}\bra{\varphi}
-\ket{\psi}\bra{\psi}}_1
\le\epsilon,
\end{equation}
where
$\widetilde{\rho}_L=\trace_R\lr{\ket{\varphi}\bra{\varphi}}$.
Thus $\widetilde{\rho}_L$ satisfies the constraints in the minimization
defining the smooth max-entropy.
Applying Eq.~(\ref{eq:maxentropy>}) gives
\begin{equation}\label{eq:Hmaxphi>}
H_{\rm max}(\widetilde{\rho}_L)
\ge H^\epsilon_{\rm max}(\rho_L)
\ge\log\frac{1}{C_{\epsilon,\kappa}\xi}.
\end{equation}

To relate this entropy bound to circuit depth, note that
$\operatorname{rank}(\widetilde{\rho}_L)$ equals the
Schmidt rank of $\ket{\varphi}$ across the middle cut.
Fixing the measurement record also fixes every adaptive
choice, so the branch is a definite sequence of local
Kraus operators.
Operations supported entirely on either side of the cut
cannot increase Schmidt rank.
This includes conditioning on a particular outcome of a
measurement within either half.
The Schmidt rank can increase only when a quantum operation
acts on qubits on both sides of the cut.
Since each operation is local,
it can multiply the Schmidt rank by at most a constant factor.
Moreover, in one dimension, the locality of the operations
allows only a constant number of them to cross the cut in
each layer. Thus each circuit layer can increase the Schmidt
rank by at most a constant factor.
Since the input product state has Schmidt rank one, reaching
the Schmidt rank of $\ket{\varphi}$ requires a circuit
depth $D$ satisfying
\begin{equation}
\operatorname{rank}(\widetilde{\rho}_L)
\le\ee^{\order(D)},
\qquad
H_{\rm max}(\widetilde{\rho}_L)=\order(D).
\end{equation}
This rank bound holds on every measurement branch and
therefore remains valid under postselection.
Combining it with Eq.~(\ref{eq:Hmaxphi>}) gives
$D=\Omega(\log(1/\xi))$ for fixed $\epsilon<1$
and $\kappa>0$.
\end{proof}

The proof above relies on the 1d geometry so that the two subsystems $L,R$ only interact via $\order(1)$ gates per layer. We leave as an open question whether $\xi$-SSSs can be prepared in much shorter depth in higher dimensions.

\section{A lower bound for unitary circuit preparation}
If a state $|\psi\rangle$ on $N$ qubits is prepared by a depth-$L$ local unitary circuit from a product state, then we have
\begin{equation}
\label{eq:locc}
    \langle \psi |O_AO_B|\psi\rangle=\langle\psi | O_A|\psi\rangle\langle\psi | O_B|\psi\rangle,~~\forall A,B\subset\Lambda \text{ with }d(A,B)>2L,
\end{equation}
where $A,B$ are two regions in the lattice $\Lambda$ whose distance is defined as $d(A,B)=\min_{i\in A,j\in B}d(i,j)$, with $d(i,j)$ the minimum number of edges between the vertices $i$ and $j$ in the graph of the lattice $\Lambda$~\cite{adaptive_phase21}. Define
\begin{equation}
    b(2L)=\max_i|\{j:d(i,j)\leq 2L\}|,
\end{equation}
where $|\cdot |$ denotes the cardinality of the set. Let $Y=\sum_iY_i$. Let $O_A=Y_i$ and $O_B=Y_j$, we have
\begin{equation}
    \text{Cov}(Y_i,Y_j)=0,~~\forall d(i,j)\geq 2L.
\end{equation}
Using the Cauchy-Schwarz inequality, we have
\begin{equation}
    |\text{Cov}(Y_i,Y_j)|\leq\sqrt{\text{Var}(Y_i)\text{Var}{Y_j}}\leq 1,
\end{equation}
because $Y_i$ is a Pauli operator. Therefore
\begin{equation}
    \text{Var}(Y)=\sum_{i,j}\text{Cov}(Y_i,Y_j)\leq Nb(2L).
\end{equation}
Let $|\psi\rangle$ be a SSS. The uncertainty principle yields
\begin{equation}
    \text{Var}(Y)\text{Var}(Z)\geq \frac{1}{4}|\langle [Y,Z]\rangle|^2=\frac{1}{4}|2\mathrm{i}\langle X\rangle|^2=|\langle X\rangle|^2=cN^2,
\end{equation}
with $c>0$ a constant, where we have used Eq.~\eqref{eq:X=N} from the definition of a SSS. Combining with Eq.~\eqref{eq:DZ2=xi}, we have
\begin{equation}
    \xi^2\geq \frac{\text{Var}(Z)}{N}\geq \frac{cN^2}{N\text{Var}(Y)}\geq\frac{c}{b(2L)}.
\end{equation}
Thus we proved that a depth-$L$ local unitary circuit from a CSS can prepare a SSS with at most $1/\xi^2=b(2L)/c$. Or equivalently, it is a lower bound for the local unitary circuit depth to prepare a $\xi$-SSS.

Now we restrict the SSS to be within the Dicke manifold. Denote the mean collective spin of the state as $\langle \textbf{J}\rangle$, we can always rotate the coordinates so that $\langle \textbf{J}\rangle$ points along the $x$-direction. Then by definition we have $\langle Y\rangle=\langle Z\rangle=0$. Permutation symmetry further gives
\begin{equation}
    \langle Z_i\rangle=0,~~\langle Z_iZ_j\rangle=C,~~\forall i\neq j.
\end{equation}
Therefore combining with Eq.~\eqref{eq:DZ2=xi}, we have
\begin{equation}
    N\xi^2\geq\text{Var}(Z)=N+N(N-1)C,
\end{equation}
which implies
\begin{equation}
    C\leq\frac{\xi^2-1}{N-1}<0,~~\forall \xi^2<1.
\end{equation}
As a result, connected correlations at any distance is nonzero and Eq.~\eqref{eq:locc} is violated for any constant finite $L$. Thus we proved that any $\xi$-SSS in the Dicke manifold with $\xi<1$ cannot be prepared by finite depth local unitary circuit. Notice that this is not true if we don't impose the condition that the SSS is within the Dicke manifold. We can consider the following simple example. Arrange $N$ qubits on a 1d chain and pair nearest neighbors as
\begin{equation}
    |\psi\rangle=\bigotimes_{j=1}^{N/2}|\phi\rangle_{2j-1,2j},\text{ with }|\phi\rangle=\cos{\left(\frac{\pi}{12}\right)}|++\rangle-\sin{\left(\frac{\pi}{12}\right)|--\rangle}.
\end{equation}
For a single pair, direct calculation gives
\begin{equation}
\begin{aligned}
    &\langle\phi|X_1|\phi\rangle=\langle\phi|X_2|\phi\rangle=\cos{\left(\frac{\pi}{6}\right)}=\frac{\sqrt{3}}{2},\\
    &\langle\phi|Z_1|\phi\rangle=\langle\phi|Z_2|\phi\rangle=\langle\phi|Y_1|\phi\rangle=\langle\phi|Y_2|\phi\rangle=0,\\
    &\langle\phi|Z_1Z_2|\phi\rangle=-\langle\phi|Y_1Y_2|\phi\rangle=-\sin{\left(\frac{\pi}{6}\right)}=-\frac{1}{2}.
\end{aligned}
\end{equation}
Therefore
\begin{equation}
\begin{aligned}
    \text{Var}_\phi(Z_1+Z_2)=\langle\phi|Z_1^2+Z_2^2+2Z_1Z_2|\phi\rangle=1\\
    \text{Var}_\phi(Y_1+Y_2)=\langle\phi|Y_1^2+Y_2^2+2Y_1Y_2|\phi\rangle=3.
\end{aligned}
\end{equation}
Because $|\psi\rangle$ is a product state of $|\phi\rangle$, the means and variances add. We have
\begin{equation}
    \langle\psi| X|\psi\rangle=\frac{\sqrt{3}}{2}N,~~\langle\psi| Y|\psi\rangle=\langle\psi| Z|\psi\rangle=0,~~\text{Var}_{\psi}(Z)=\frac{N}{2},~~\text{Var}_{\psi}(Y)=\frac{3N}{2}.
\end{equation}
This is a $\xi$-SSS with $\xi=1/\sqrt{2}$ but can be prepared by a depth-1 nearest-neighbor unitary circuit from a CSS.

\section{More about the simple example of SSS}
Consider the normalized state
\begin{equation}
    |\psi\rangle=\frac{|\alpha\alpha\cdots\alpha\rangle+|\beta\beta\cdots\beta\rangle}{\sqrt{2+2(\text{Re}\langle\alpha |\beta\rangle)^N}},
\end{equation}
where $|\alpha\rangle$ and $|\beta\rangle$ are two CSS on the Bloch sphere differ by angle $2\theta\sim1/\sqrt{N}$. We have
\begin{equation}
    \text{Re}\langle\alpha |\beta\rangle=\cos\theta\approx 1-\frac{\theta^2}{2}.
\end{equation}
Therefore
\begin{equation}
    (\text{Re}\langle\alpha |\beta\rangle)^N\approx\exp{\left(-\frac{N\theta^2}{2}\right)}.
\end{equation}
For $\theta\sim 1/\sqrt{N}$, we have
\begin{equation}
    (\text{Re}\langle\alpha |\beta\rangle)^N\sim e^{-1/2}.
\end{equation}
So the two branches are not orthogonal for arbitrarily large $N$. Thus it is not a macroscopic GHZ cat state.

We can show that $|\psi\rangle$ is a SSS. Without loss of generality, we can put $|\alpha\rangle$ and $|\beta\rangle$ at the equator of the Bloch sphere and let them located symmetrically about the $x$-axis. Then we have
\begin{equation}
\begin{aligned}
     \langle\psi|X|\psi\rangle=\frac{N(\cos\theta+\cos^{N-1}\theta)}{1+\cos^N\theta}\sim N,&~~\langle\psi|Y|\psi\rangle=\langle\psi|Z|\psi\rangle=0,\\
     \text{Var}_\psi (Z)=N-N(N-1)\frac{\sin^2\theta\cos^{N-2}\theta}{1+\cos^N\theta},&~~\text{Var}_\psi(Y)=N+N(N-1)\frac{\sin^2\theta}{1+\cos^N\theta}.
\end{aligned}
\end{equation}
Therefore
\begin{equation}
    \xi^2=1-(N-1)\frac{\sin^2\theta\cos^{N-2}\theta}{1+\cos^N\theta}<1
\end{equation}
as long as
\begin{equation}
    N\sin^2\theta-1<\frac{1}{\cos^{N-2}\theta},
\end{equation}
which is true for small enough $\theta\sim 1/\sqrt{N}$.

It is always possible to find an operator $O$ that distinguishes $|\alpha\rangle$ and $|\beta\rangle$, i.e. 
\begin{equation}
    \langle\alpha |O|\alpha\rangle=-\langle\beta |O|\beta\rangle=\sin{\theta},\text{ and }\langle\alpha |O|\beta\rangle=0.
\end{equation}
And in the example above, $O_i=Y_i$. We have
\begin{equation}
\begin{aligned}
    \langle\psi | O_i|\psi\rangle&=0,\\
    \langle\psi | O_iO_j|\psi\rangle&=\frac{\sin^2\theta}{1+\cos^N\theta}.
\end{aligned}
\end{equation}
The connected correlation then becomes
\begin{equation}
    \langle\psi |O_iO_j|\psi\rangle-\langle\psi |O_i|\psi\rangle\langle\psi| O_j|\psi\rangle=\langle\psi |O_iO_j|\psi\rangle\sim\frac{1}{N}\neq 0.
\end{equation}
Therefore $|\psi\rangle$ has non-zero correlations at arbitrarily large spatial separation and is long-range correlated, though the correlation vanishes as $N\rightarrow \infty$ and $|\psi\rangle$ becomes locally indistinguishable from a product state in the thermodynamic limit. Those features are similar to the $W$ state. Consequently, $|\psi\rangle$ cannot be prepared by finite depth local unitary circuit.

\section{Numerical simulation of the protocol under incoherent noises}
\subsection{Filter parameter settings}
Ideally, our protocol first prepares the ancillae $A$ to the state
\begin{equation}
\label{eq:sum}
\sum_{x=-2^{\ell-1}}^{2^{\ell-1}-1}a_x\ket{x},
\end{equation}
with the unnormalized Gaussian amplitudes
\begin{equation}
    a_x=\frac{1}{\sqrt{\pi n}}e^{-\frac{x^2}{n}}\approx 2^{-2n}\left(
    \begin{matrix}
        2n\\
        n+x
    \end{matrix}
    \right),~~~~\forall x=-2^{\ell-1}+1,\cdots,2^{\ell-1}-1,
\end{equation}
and $a_x=0$ for $x=-2^{\ell-1}$. We set
\begin{equation}
 n=\left\lfloor\frac{1}{2}\left\lfloor
       \left(\frac{2^{\ell-1}}{x_{\rm tune}}\right)^2
      \right\rfloor\right\rfloor,
\end{equation}
where $x_{\rm tune}$ is chosen to upper bound the error of truncating the sum to a sum up to a finite cutoff as in Eq.~\eqref{eq:sum} to be $O(e^{-x_{\rm tune}^2/2})$ for a Gaussian distribution with variance $n/2$.

The standard deviation of a CSS is $\sigma=\sqrt{N}/2$, where we work with the spin operators like $\hat{J}_z$ in this section which differs by a factor of $2$ than the previous $Z$. We choose the rotation angle to be
\begin{equation}
    4\pi\phi=\frac{2}{q_{\rm factor}\sigma},
\end{equation}
and for $a=1,\ldots,\ell$, we have
\begin{equation}
    \phi_a=2^{a-1}\pi\phi.
\end{equation}
With this parameter setting, our protocol effectively implements the filter
\begin{equation}
\begin{aligned}
    \hat{P}_{\sigma'}&=\sum_{m=-j}^{j}\exp{\left[-\frac{(m-m_0)^2}{2\sigma'^2}\right]}|j,m\rangle\langle j,m|\\
    &=\exp{\left[-\frac{(\hat{J}_z-m_0)^2}{2\sigma'^2}\right]}\\
    &\approx\cos{^{\left\lfloor\frac{\alpha^2}{\sigma'^2}\right\rfloor_{\mathrm{even}}}\left(\frac{\hat{J}_z-m_0}{\alpha}\right)},
\end{aligned}
\end{equation}
where $\alpha=q_{\rm f}\sigma$, $2n=\left\lfloor\frac{\alpha^2}{\sigma'^2}\right\rfloor_{\mathrm{even}}$, and $2^{\ell}\phi m_0$ is given by which QFT window picked after measuring the $\ell$ ancilla qubits.

The default setting is $x_{\rm tune}=1.0$ and $q_{\rm f}=1.5$.

\subsection{Noise parameter settings}

We focus on the effect of stochastic incoherent noise in the device. Coherent gate errors are not considered.

The noise level at noise scale $\lambda=1$ is set to be their typical level in the current superconducting device, as summarized in
Table~\ref{tab:noise}. By setting the noise scale to be $\lambda$, we multiply every error rate entry in
the column, including initialization, by $\lambda$.

\begin{table}[ht]
\centering
\small
\begin{tabular}{@{}lll@{}}
\hline\hline
Parameter & Value & Meaning in the simulator\\
\hline
$p_1$ & $2\times10^{-4}$ & total one-qubit depolarizing error rate\\
$p_2$ & $8\times10^{-3}$ & total 15-Pauli error rate after every \textsf{CNOT}\\
$p_{\rm meas}$ & $10^{-2}$ & classical bit flip rate of one qubit readout\\
$p_{\rm idle}$ & $10^{-5}$ & total one-qubit depolarizing error rate per idle tick\\
$p_{\rm init}$ & $10^{-3}$ & computational-basis reset error rate\\
\hline\hline
\end{tabular}
\caption{Noise rates at $\lambda=1$.}
\label{tab:noise}
\end{table}

In our simulation, we assume the \textsf{PREP}, $G$, $G^\dagger$, projection to $m=0$ QFT window, and final data offset rotation to be ideal, and we focus on the middle control unitaries $\mathcal{U}$ which dominate the errors. 

\subsection{Efficient exact branch representation}

In our simulation, we never store the full $2^{\ell+N}$ statevector or its compressed tensor network representation.  For one sampled trajectory $\tau$, we store each term in
\begin{equation}
 \ket{\Psi_\tau}=\sum_{x=-2^{\ell-1}}^{2^{\ell-1}-1}a_x\ket{x}
 \bigotimes_{i=1}^{N}\ket{\varphi^{(\tau)}_{x,i}},
 \label{eq:product-branch}
\end{equation}
which is just a product state of $\ell+N$ qubit. Conditioned on a computational-basis reference branch $|x\rangle$, each ideal controlled unitary and each sampled incoherent Pauli faults act as a product of one-qubit operations on the data qubits. So to simulate the protocol under the noises, we only need to sample the trajctories and keep track of each branch: $Z$ faults at reference qubit $a$ multiply $a_x$ by $-1$; $X$ faults at reference qubit $a$ permute $m\leftrightarrow m\oplus 2^{a-1}$.  Data Pauli frames are applied directly to the local spinors. Most importantly, an incoming data $X$ error is retained in the spinor before the
next non-Clifford $R_z$ rotation, and its effect
\begin{equation}
 R_z(\phi_a)X=XR_z(-\phi_a)
\end{equation}
is automatically handled. Similarly, an incoming data $Z$ error that is already present when the next controlled block is applied has the effect
\begin{equation}
 U'_a Z_i=Z_aZ_iU'_a,
\end{equation}
where $U'_a$ is defined as Eq.~\eqref{eq:Uaprime}.

\subsection{Binary Pauli frames and channel conventions}
A Pauli frame is stored as two bits
\begin{equation}
 P(x,z)=X^xZ^z,\qquad (x,z)\in\mathbb{F}_2^2.
\end{equation}
Thus $I=(0,0)$, $X=(1,0)$, $Y=(1,1)$ up to phase, and $Z=(0,1)$. Notice that global phase is irrelevant. Conjugating an existing Pauli frame through an ideal \textsf{CNOT} gives
\begin{equation}
 x_t\leftarrow x_t\oplus x_c,\qquad
 z_c\leftarrow z_c\oplus z_t.
 \label{eq:cnot-frame}
\end{equation}
The \textsf{CNOT} itself can induce errors, i.e.
\begin{equation}
\text{noisy}~\textsf{CNOT}=(E_c\otimes E_t)\textsf{CNOT}_{c\to t},    
\end{equation}
where $E_c\otimes E_t$ is a randomly sampled two-qubit Pauli error applied after the ideal CNOT. So after the propagation, we need to update the Pauli frame by XOR the two-qubit Pauli errors induced by the \textsf{CNOT}.

\subsubsection{One-qubit depolarizing channel}
The parameter $p_1$ is the total non-identity probability, not the probability of each Pauli, so
\begin{equation}
 \mathcal E_1(\rho)=(1-p_1)\rho+\frac{p_1}{3}\sum_{P\in\{X,Y,Z\}}P\rho P.
 \label{eq:p1}
\end{equation}
For the baseline $p_1=2\times10^{-4}$,
\begin{equation}
 P(X)=P(Y)=P(Z)=6.6667\times10^{-5},\qquad
 P(x=1)=P(z=1)=1.3333\times10^{-4}.
\end{equation}

\subsubsection{Two-qubit depolarizing channel}
The parameter $p_2$ is also the total non-identity probability:
\begin{equation}
 \mathcal E_2(\rho)=(1-p_2)\rho+
 \frac{p_2}{15}
 \sum_{(P,Q)\ne(I,I)}(P\otimes Q)\rho(P\otimes Q).
 \label{eq:p2}
\end{equation}
The sampler first propagates the pre-existing frame through the ideal \textsf{CNOT} and then, with probability $p_2$, chooses one of the 15 non-identity pairs uniformly. Therefore, for either one of the two qubits,
\begin{equation}
\begin{aligned}
 &P(X)=P(Y)=P(Z)=\frac{4p_2}{15},\\
 &P(P\ne I)=\frac{12p_2}{15}=\frac{4p_2}{5},\\
 &P(x=1)=P(z=1)=\frac{8p_2}{15}. 
\end{aligned}
 \label{eq:p2-marginals}   
\end{equation}
The two qubits are \emph{not} sampled independently, and the spatial correlations indicated in Eq.~\eqref{eq:p2} are retained. At $p_2=8\times10^{-3}$, each nonidentity pair has probability $5.3333\times10^{-4}$, each 1-qubit Pauli marginal is $2.1333\times10^{-3}$, and each 1-qubit nonidentity probability is $6.4\times10^{-3}$.

\subsection{Initialization settings}
A data qubit is initialized to $\ket 0$. With probability $p_{\rm init}$ it is instead $\ket 1=X\ket 0$. An ideal Hadamard maps it to $\ket +$ and maps the reset error $X$ to a $Z$ frame fault. A full one-qubit depolarizing channel of total strength $p_1$ follows the Hadamard. Since $X\ket +=\ket +$, only the $Z$-frame component in the depolarizing channel matters at this point. The sampled $Z$ frame is immediately applied to the exact product branches. 

Every controlled unitary block uses fresh edge qubits. An edge qubit is reset to $\ket 0$, and with probability $p_{\rm init}$ it has an $X$ frame. Each star $S_v$ measurement round uses a fresh ancilla. Its wrong-$\ket 0$ reset becomes a $Z$ fault after the ideal Hadamard, and a total-$p_1$ one-qubit depolarizing channel follows that Hadamard. Again, only the resulting $Z$ component affects a freshly prepared $\ket +$ ancilla before its first \textsf{CNOT}.

\subsection{One noisy controlled unitary block}
We apply the ideal net operation on the branch representation of the reference and data qubits analytically, without explicitly storing the edge qubits and star $S_v$ measurement ancilla qubits in a statevector, and separately sample the residual errors relative to the ideal block.

For block $a$, the ideal net operation is
\begin{equation}\label{eq:Uaprime}
U'_a=
\left(
|0\rangle_a\langle0|\otimes I
+
|1\rangle_a\langle1|\otimes X^{\otimes N}
\right)
\left(
I\otimes R_z(\phi_a)^{\otimes N}
\right).    
\end{equation}
It contains the exact data rotation
$$
   R_z(\phi_a)^{\otimes N},
$$
and the exact quantum fanout gate controlled by reference qubit $a$
$$
   |0\rangle_a\langle0|\otimes I+|1\rangle_a\langle1|\otimes X^{\otimes N}.
$$
The second operation is the net noiseless effect of the reference-to-edge \textsf{CNOT}'s, star $S_v$ measurements, edge measurements, and ideal feed-forward. 

We then separately run the physical circuit at the Pauli-frame level to determine the deviation from this ideal operation. We sample faults from the data $R_z$ locations, reference-to-edge \textsf{CNOT}'s, two rounds star $S_v$ measurement circuits, ancilla and edge readouts, idles, and the reference and data qubit feed-forward operations. After decoding, this circuit-level sampler returns an incremental residual Pauli frame
$$
P_\tau
=
P_a^{(\tau)}
\otimes
\bigotimes_{i=1}^N P_i^{(\tau)}.
$$
The simulated block is therefore represented as $P_\tau U'_a$,
with the correct timing of the faults accounted for.

Each trajectory returns the Pauli frame on the active reference and every data vertex. We then apply these Paulis to Eq.~\eqref{eq:product-branch}. The next block receives the actual faulty product branches, while its new sampler starts from a zero frame.

\subsection{MWPM decoder and edge $X$ corrections}

The MWPM decoder aims to detect and correct the edge qubit errors such that one can perform feedforward operations correctly on the data qubits according to clean edge records to realize the quantum fanout gate (excluding the phase part). 

Let a vertex-domain bit $a_v\in\{0,1\}$ denote if $I$ or $X$ is applied to a data qubit in the ideal case. The ideal edge record is
$$
d_{uv}=a_u\oplus a_v.
$$
Let $r_{uv}$ denote the residual edge error, so the measured edge record is
$$
d'_{uv}=d_{uv}\oplus r_{uv}.
$$
Introduce Ising variables
$$
\sigma_v=(-1)^{a_v},
\qquad
s_{uv}=(-1)^{d'_{uv}}.
$$
For a proposed domain configuration $\{\sigma_v\}$,
$$
(-1)^{r_{uv}}=(-1)^{d_{uv}\oplus d'_{uv}}=s_{uv}\sigma_u\sigma_v.
$$
Therefore $s_{uv}\sigma_u\sigma_v=+1$ means the proposed domain configuration agrees with the measured edge record and $s_{uv}\sigma_u\sigma_v=-1$ means an edge error happened. 

If we assume that the residual edge errors are \emph{independent}, with error rate $P(r_{uv}=1)=q_{e}$ with $e=(u,v)$. Notice that the edges at the boundary of the open square lattice experience different numbers of gates and thus should have a different error rate from the edges in the bulk. Then
$$
\begin{aligned}
P(\bm{\sigma}\mid \bm{s}) &\propto\prod_{\langle u,v\rangle} (1-q_e)^{(1+s_{uv}\sigma_u\sigma_v)/2}q_e^{(1-s_{uv}\sigma_u\sigma_v)/2}\\
&\propto\exp\left(
\sum_{\langle u,v\rangle} K_es_{uv}\sigma_u\sigma_v
\right),
\end{aligned}
$$
where
$$
K_e=\frac12\log\frac{1-q_e}{q_e}.
$$
This is a random-bond Ising model (RBIM)
$$
H=-\sum_{\langle u,v\rangle} J_es_{uv}\sigma_u\sigma_v
$$
at the Nishimori line which satisfies
$$
e^{-2\beta J_e}=\frac{q_e}{1-q_e}
$$
and $\beta\rightarrow\infty$ when $q_e\ll 1$. The measured edge record $d'_{uv}$ provide the random signs $s_{uv}$, while the domain variables $a_v$ become the Ising spins.

However, our circuit does not generate perfectly independent residual edge errors, because a star-ancilla fault can propagate to several edges, two-qubit \textsf{CNOT} faults correlate their endpoints, and edge faults can be correlated with data and reference faults. Consequently, the exact posterior is not a simple nearest-neighbor independent-bond RBIM. It is a correlated-disorder model containing multibond interactions. An exact Bayesian decoder using the complete circuit-level likelihood would lie on a generalized Nishimori surface. Here we use a simple MWPM decoder which discards those correlations and approximates the problem by an independent-bond, \emph{zero}-temperature RBIM.

If each edge record is correct, we will have $\prod_{e\in \partial p}s_e=+1$ for every plaquette $p$. As long as there are edge $X$ errors, $\prod_{e\in \partial p}s_e=-1$ reveals the endpoints of the error chains. MWPM estimates the error pattern by minimizing the weight of the antiferromagnetic bonds among all the configurations which is gauge equivalent to the measured edge record $\bm{d}'$, i.e.
$$
\hat{\bm{r}}
=
\arg\min_{\bm{\tilde{d}}\sim\bm{d}'}
\sum_e w_e\tilde{d}_e,
$$
with
$$
w_e=\log\frac{1-q_e}{q_e}=2K_e.
$$
Then we can repair the edge record
$$
\hat{\bm{d}}=\bm{d}'\oplus\hat{\bm{r}}.
$$
There can be different possible $\hat{\bm{r}}$ that satisfies the same endpoint constraint. Due to this degeneracy, the error chain chosen by the MWPM is not necessarily the true one. This is not a problem the decoding in toric code, but can leave residual domains in our task, as can be seen from the following.

On an open square lattice, we then find $\hat{a}_v$ satisfying
$$
\hat{d}_{uv}=\hat{a}_u\oplus\hat{a}_v.
$$
We fix $\hat{a}_{\rm ref}=0$ (or equivalently, we can accumulate the corrected edge bits along a path from the reference qubit), and then apply $X^{\hat{a}_v}$ for each data vertex qubit $v$. The remaining domain pattern is
$$
\delta_v=a_v\oplus\hat{a}_v,
$$
and its corresponding domain wall variable satisfies
$$
\delta_u\oplus\delta_v=r_{uv}\oplus\hat{r}_{uv}.
$$
Consequently, if $\hat{\bm{r}}=\bm{r}$, the domains are removed and we obtain the ideal quantum fanout gate (up to relative phase corrections); if $\hat{\bm{r}}\neq \bm{r}$, the discrepancy can leave residual domains, even though the repaired classical record passes every plaquette check.

Because we use open boundary condition for the 2d square lattice, the number of gates applied to the qubits at the boundary is different from the bulk. So we benchmark $w_e$ in MWPM to improve the performance of the decoder.

\subsection{Supplementary numerical results}
We optimize the rotation angle $\phi$, the Gaussian width $\sigma$, and the number of rounds $k$ for protocol in Fig.~\ref{fig:recursivebinary} or the number of ancillae $\ell$ for protocol in Fig.~\ref{fig:sss_circuit}. The results are shown below.
\begin{figure}[h!]
    \centering
    \includegraphics[width=0.497\linewidth]{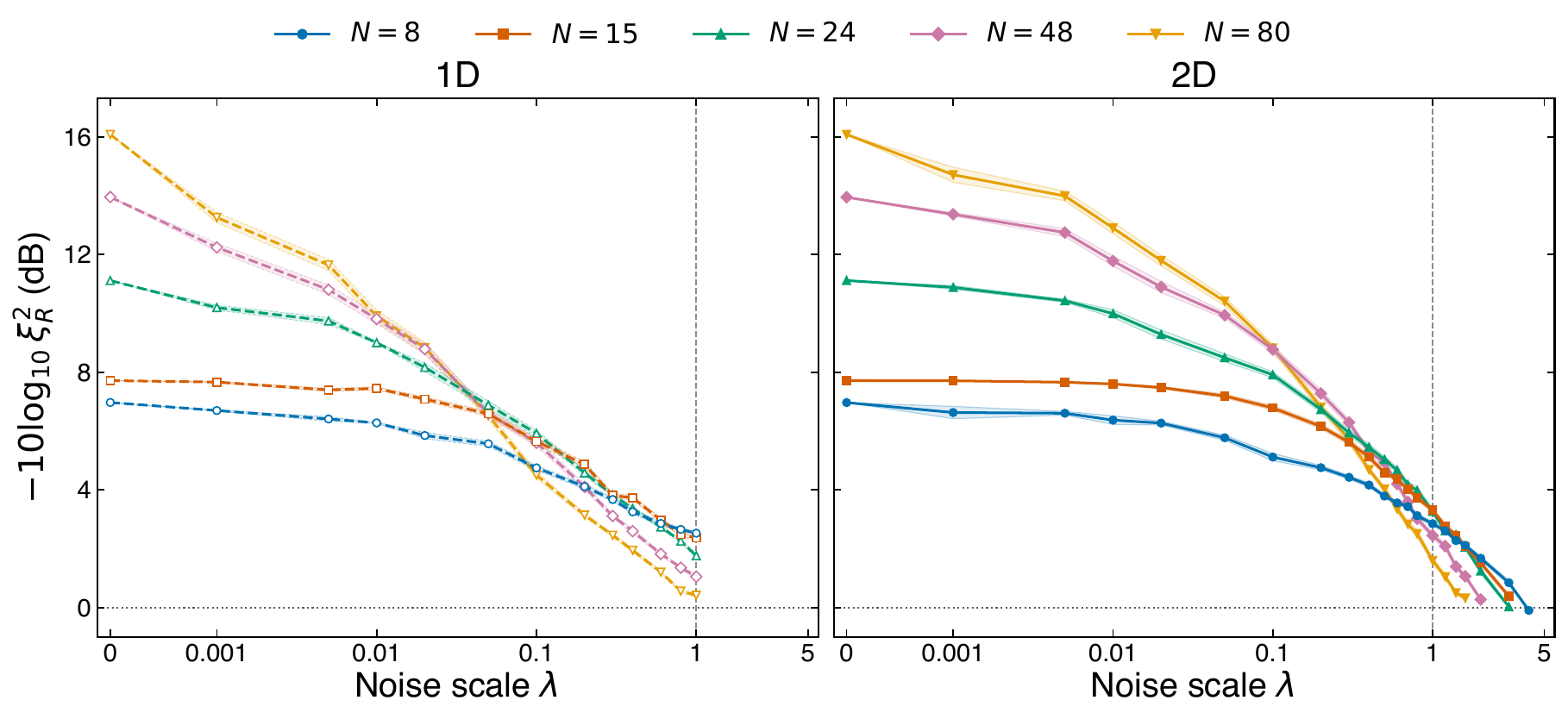}
    \includegraphics[width=0.497\linewidth]{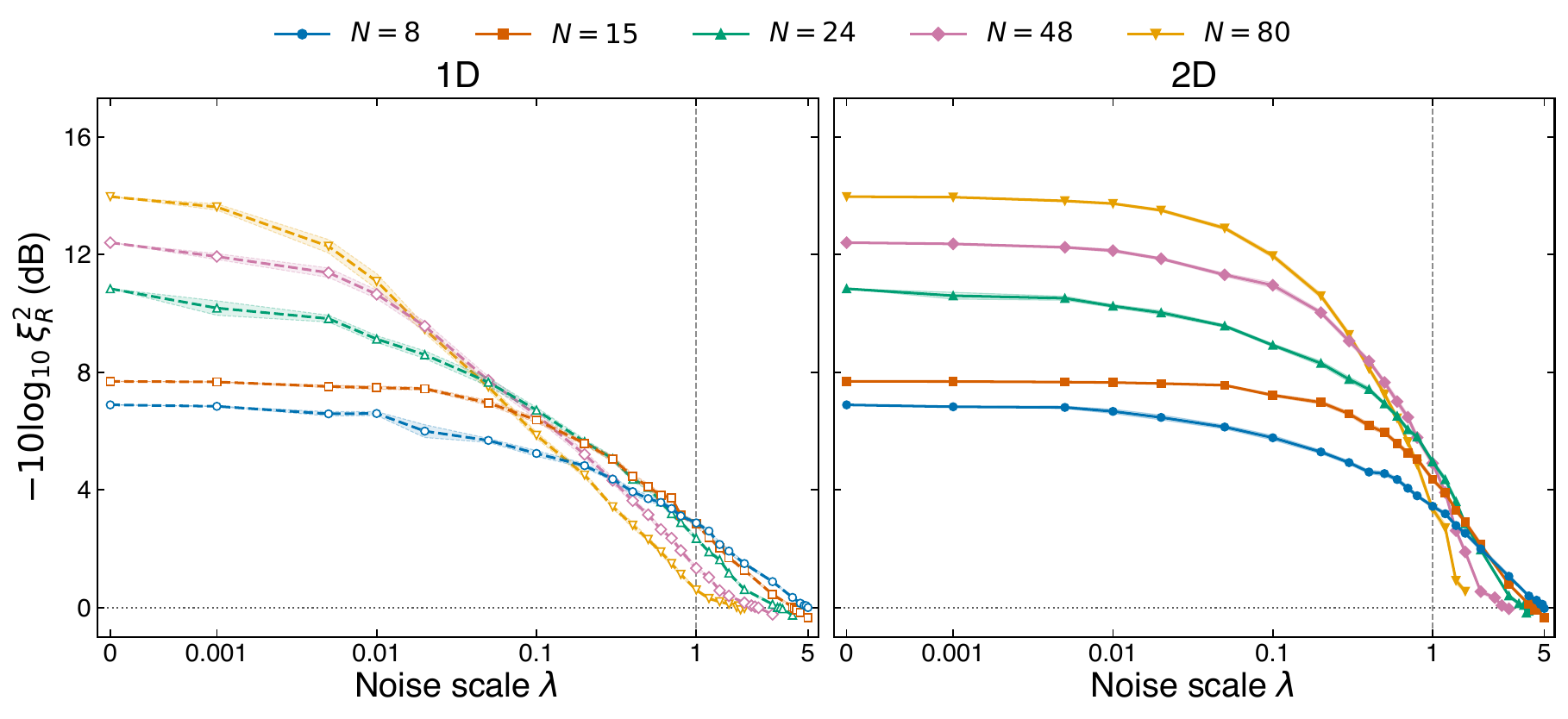}
    \caption{Metrological gain for the protocol in Fig.~\ref{fig:sss_circuit} (left) and in Fig.~\ref{fig:recursivebinary} (right).}
\end{figure}

\begin{figure}[h!]
    \centering
    \includegraphics[width=0.497\linewidth]{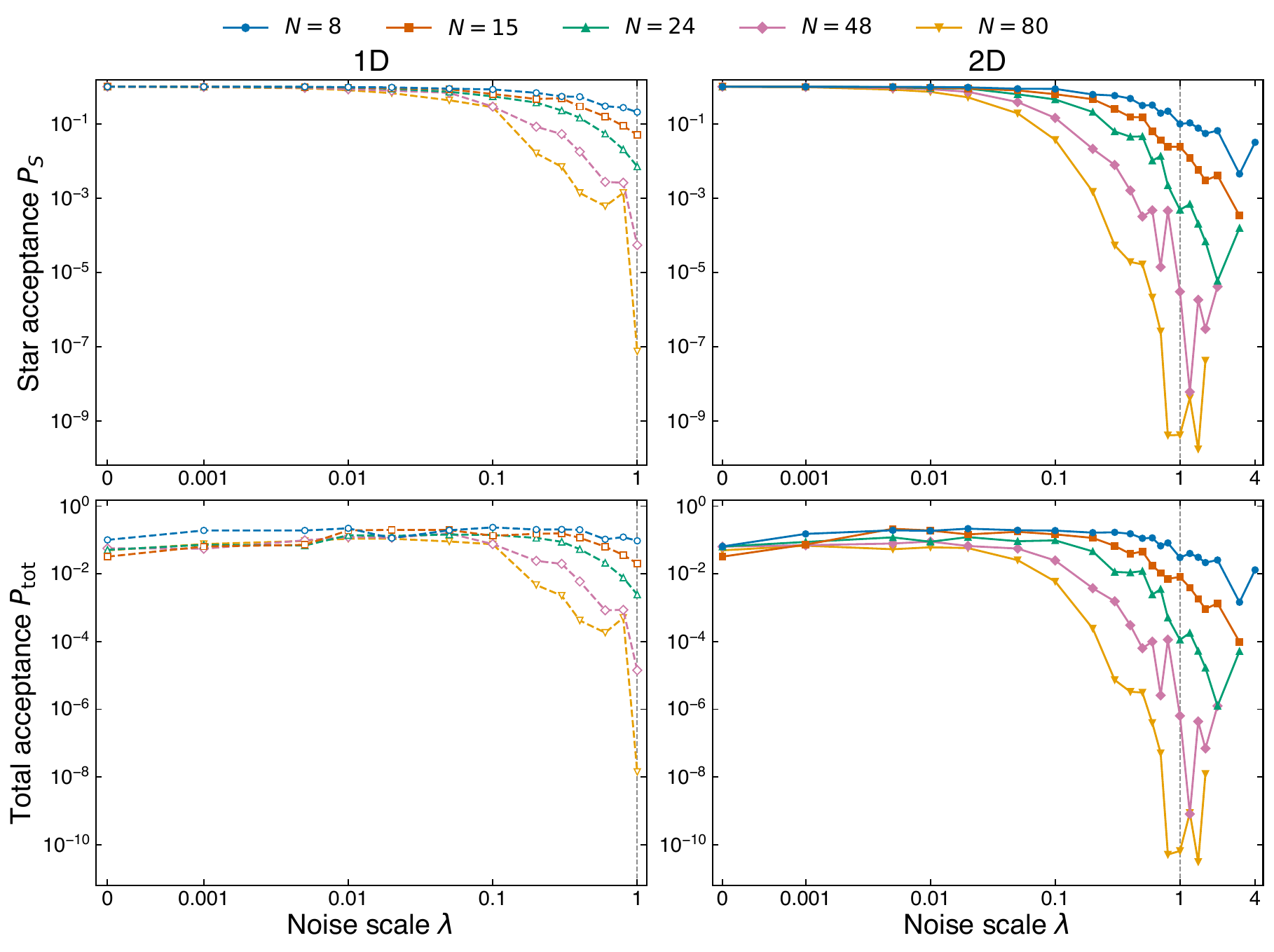}
    \includegraphics[width=0.497\linewidth]{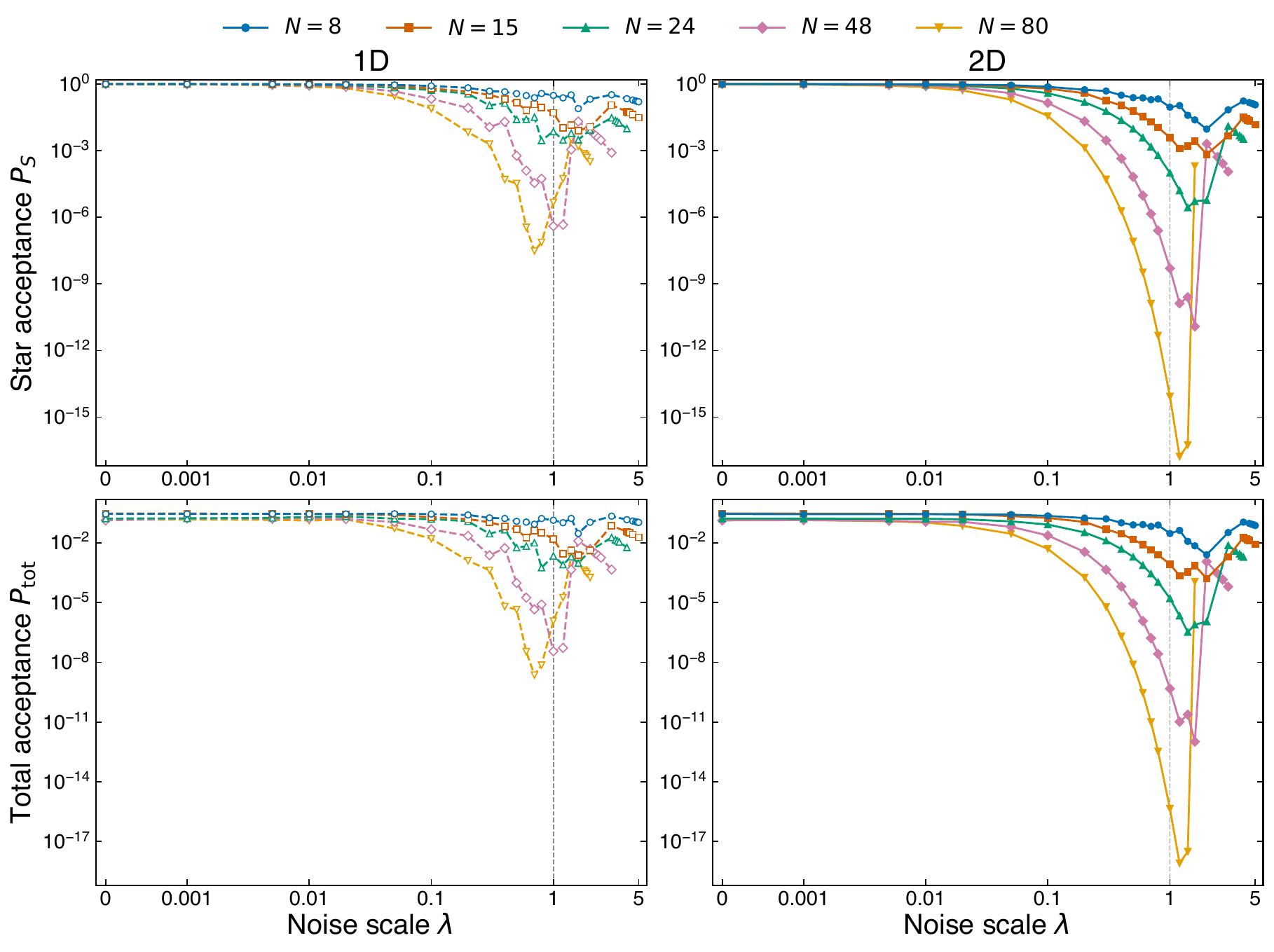}
    \caption{Acceptance rate for the protocol in Fig.~\ref{fig:sss_circuit} (left) and in Fig.~\ref{fig:recursivebinary} (right). The acceptance rates sometimes increases with larger $\lambda$ because $\ell$ or $k$ is chosen to be smaller.}
\end{figure}

\begin{figure}[h!]
    \centering
    \includegraphics[width=0.497\linewidth]{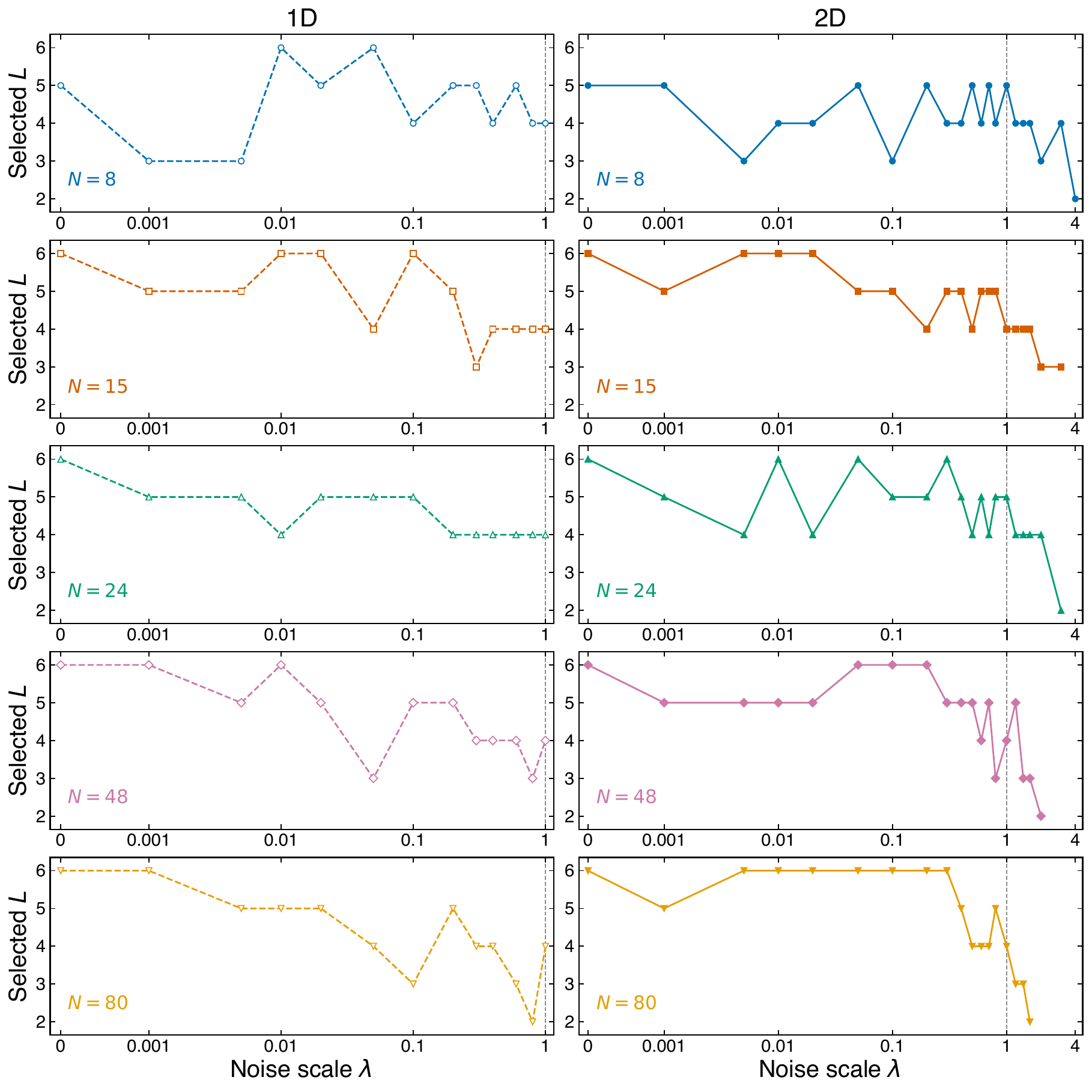}
    \includegraphics[width=0.497\linewidth]{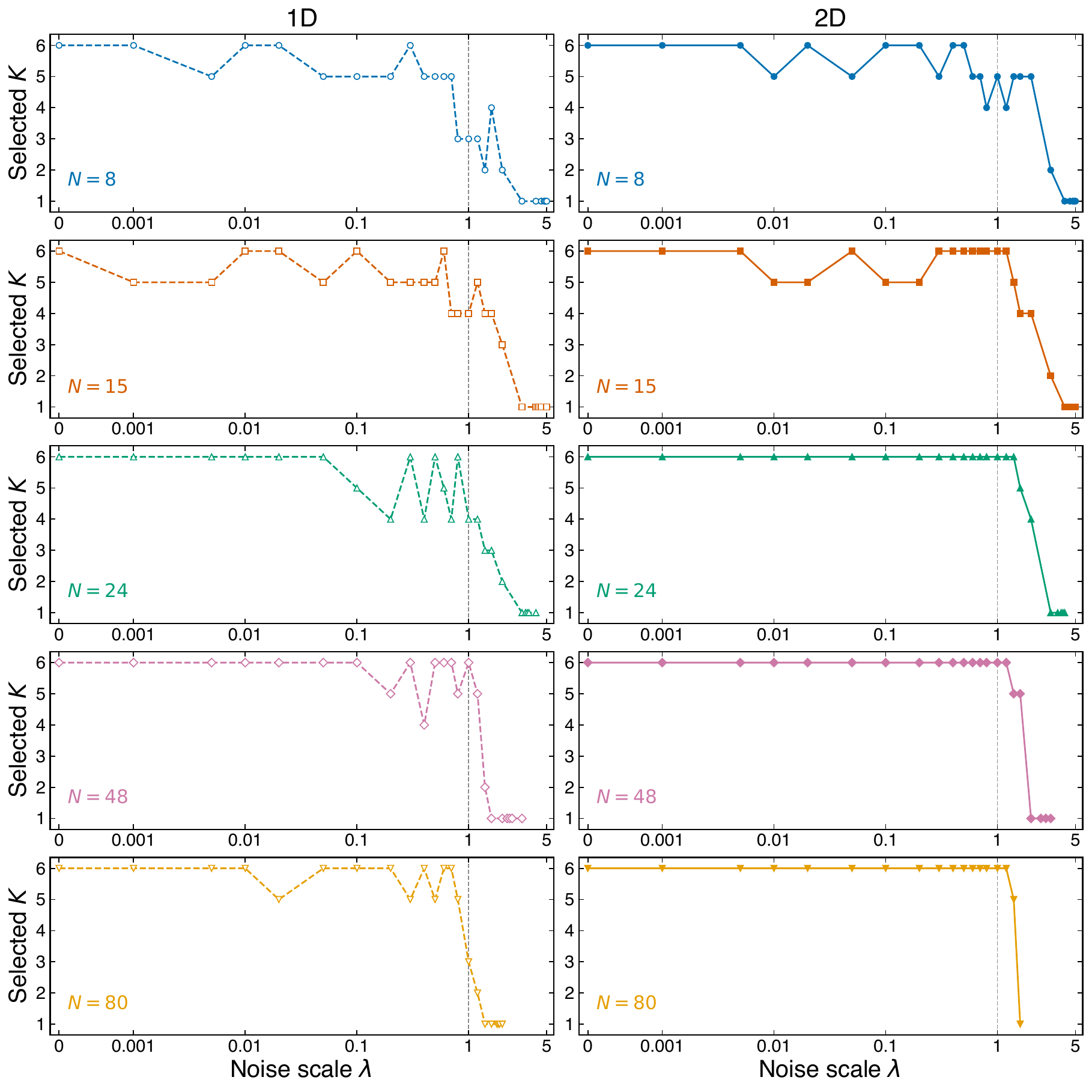}
    \caption{$\ell$ used for the protocol in Fig.~\ref{fig:sss_circuit} (left) and $k$ used for the protocol in Fig.~\ref{fig:recursivebinary} (right).}
\end{figure}

\begin{figure}[h!]
    \centering
    \includegraphics[width=0.58\linewidth]{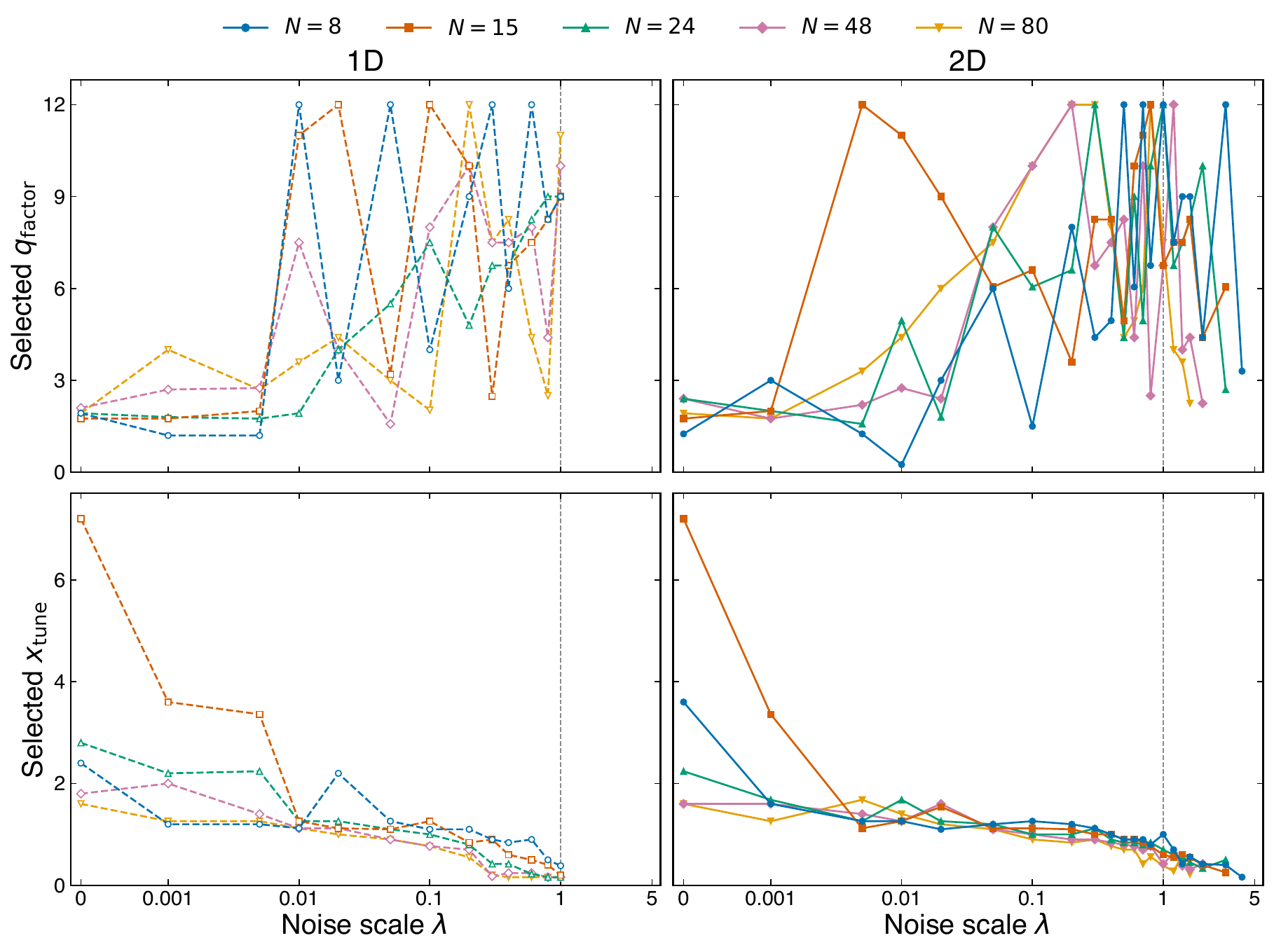}
    \includegraphics[width=0.58\linewidth]{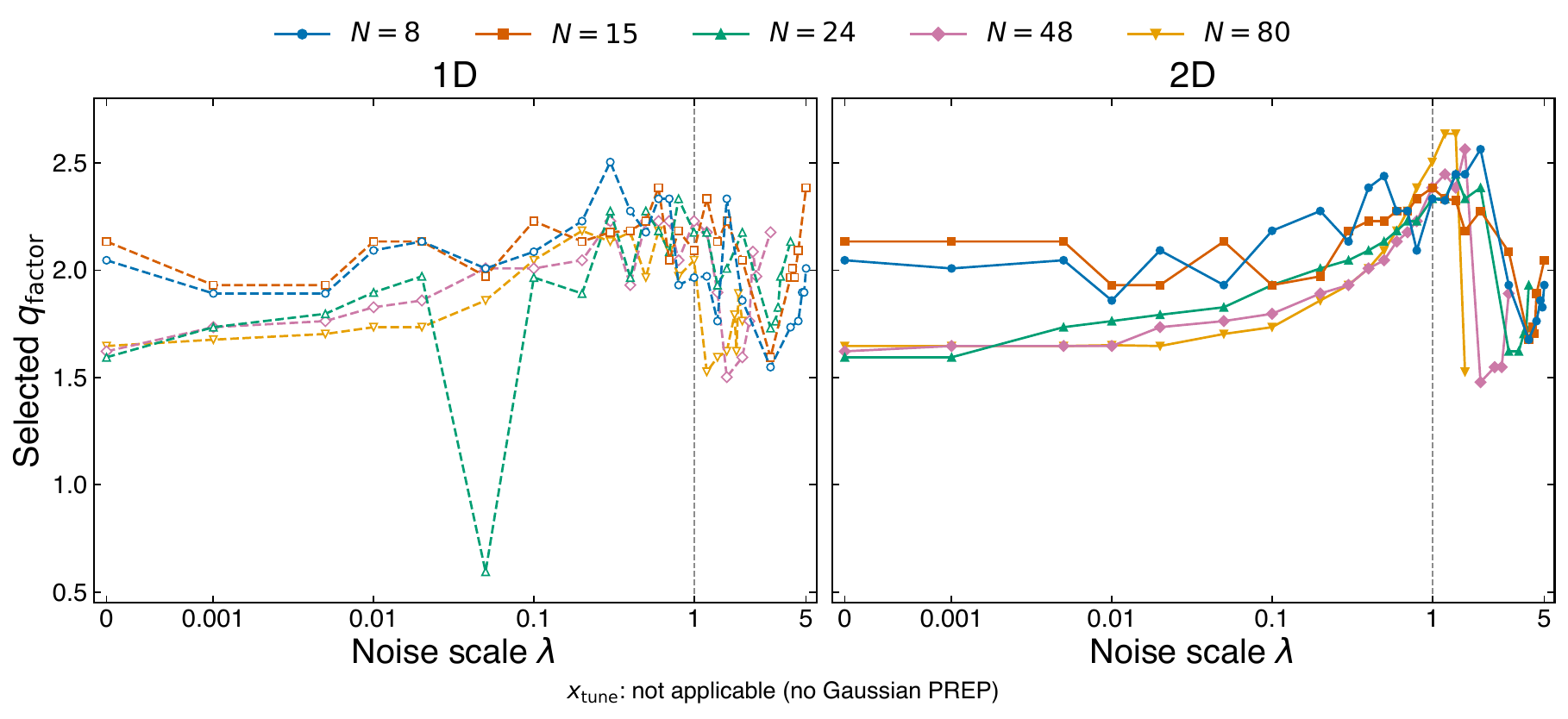}
    \caption{$q_{\rm factor}$ and $x_{\rm tune}$ used for the protocol in Fig.~\ref{fig:sss_circuit} (top) and $q_{\rm f}$ used for the protocol in Fig.~\ref{fig:recursivebinary} (bottom).}
\end{figure}

\begin{figure}
    \centering
    \includegraphics[width=0.6\linewidth]{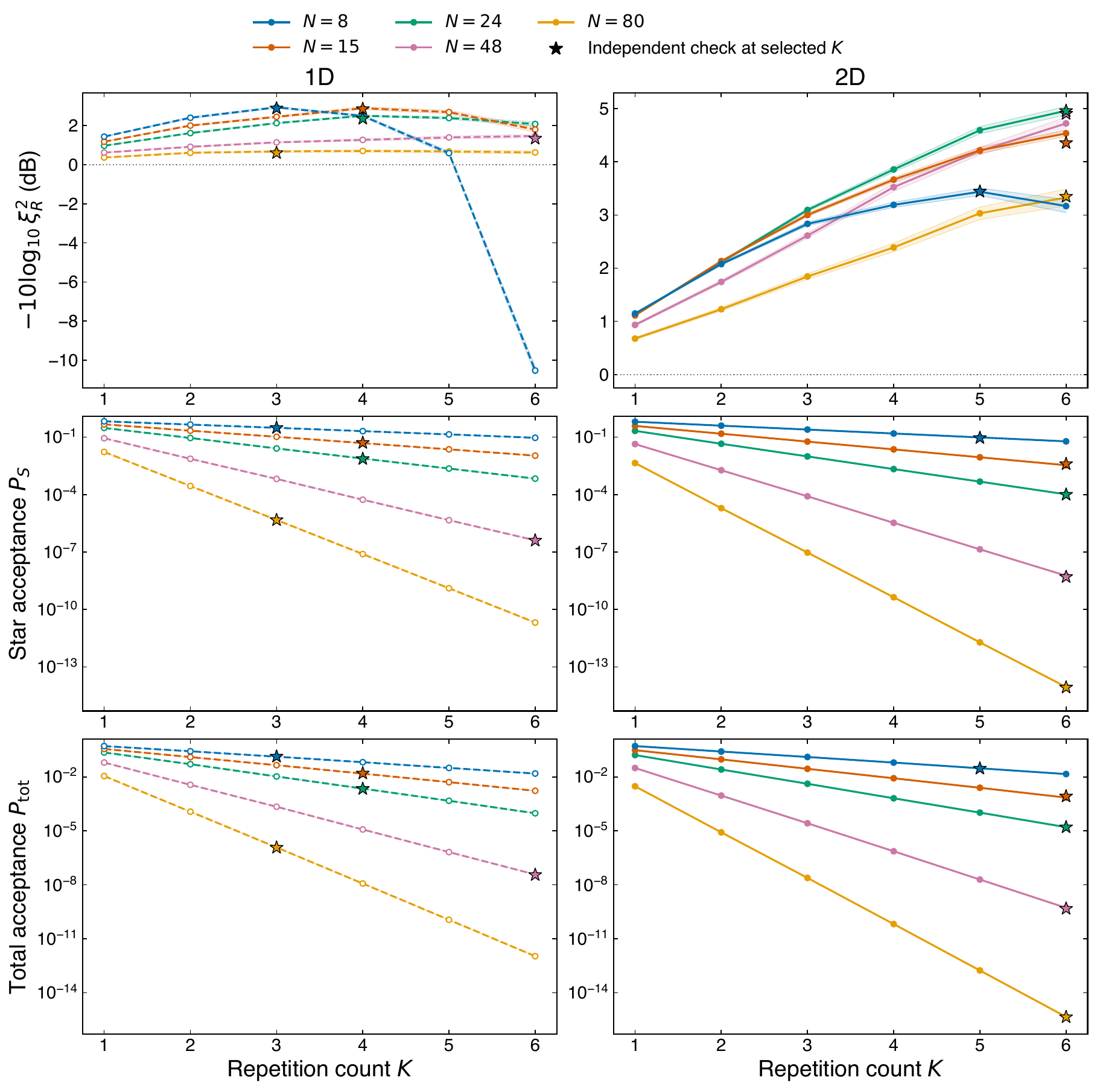}
    \caption{How the metrological gain the the acceptance rate change with the number of rounds $K$ for the protocol in Fig.~\ref{fig:recursivebinary} at $\lambda=1$.}
\end{figure}

\begin{figure}
    \centering
    \includegraphics[width=0.8\linewidth]{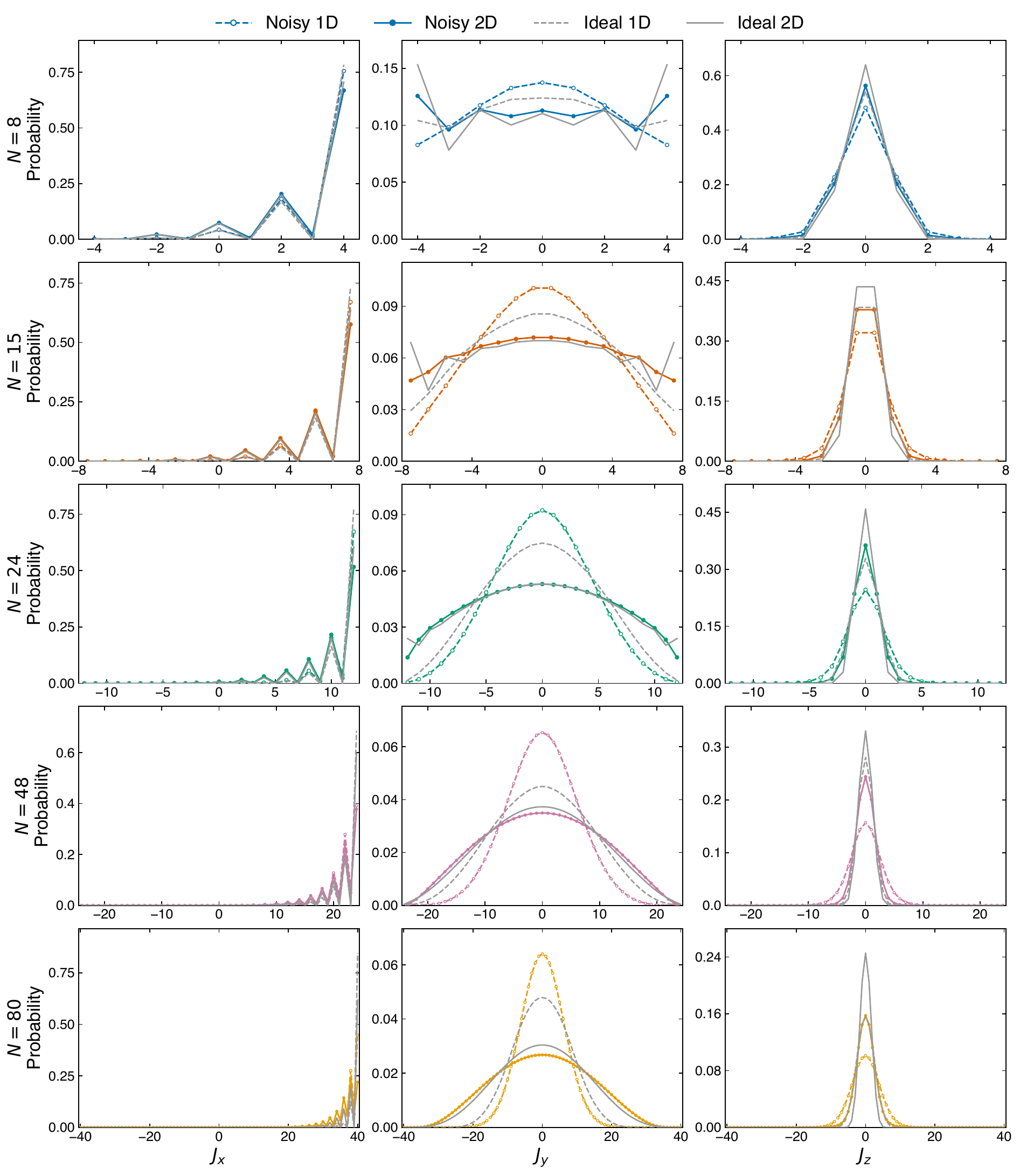}
    \caption{$J_x$, $J_y$, $J_z$ distribution of the output state for the protocol in Fig.~\ref{fig:recursivebinary} at $\lambda=1$ as compared to the ideal optimal.}
\end{figure}

\end{document}